\newtheorem{definition}{Definition}[section]
\newtheorem{proposition}[definition]{Proposition}
\newtheorem{lemma}[definition]{Lemma}
\newtheorem{theorem}[definition]{Theorem}
\newtheorem{corollary}[definition]{Corollary}
\newtheorem{example}[definition]{Example}
\newenvironment{proof}{\noindent \textbf{{Proof.~} }}{\hfill $\blacksquare$}
\definecolor{colorone}{rgb}{1,0.36,0.03}
\definecolor{colortwo}{rgb}{0.54,0.71,0.03}
\definecolor{colorthree}{rgb}{0.01,0.51,0.93}
\definecolor{colorfour}{rgb}{0.47,0.26,0.58}
\newcommand{\tr}{\operatorname{Tr}}
\newcommand{\proj}[1]{| #1\rangle\!\langle #1 |}
\newcommand{\ie}{\textit{i}.\textit{e}.}
\title{All real projective measurements can be self-tested}
\author{Ranyiliu Chen \and Laura Man\v{c}inska \and Jurij Vol\v{c}i\v{c}}
\date{}
\begin{document}
\maketitle

\begin{abstract} 
Self-testing is the strongest form of quantum functionality verification which allows a classical user to deduce the quantum state and measurements used to produce measurement statistics.
While self-testing of quantum states is well-understood, self-testing of measurements, especially in high dimensions, has remained more elusive. We demonstrate the first general result in this direction by showing that every real projective measurement can be self-tested. The standard definition of self-testing only allows for the certification of real measurements. Therefore, our work effectively broadens the scope of self-testable projective measurements to their full potential. To reach this result, we employ the idea that existing self-tests can be extended to verify additional untrusted measurements. This is known as `post-hoc self-testing'. 
We formalize the method of post-hoc self-testing and establish a sufficient condition for its application. Using this condition we construct self-tests for all real projective measurements. Inspired by our construction, we develop a new technique of \emph{iterative self-testing}, which involves using post-hoc self-testing in a sequential manner. Starting from any established self-test, we fully characterize the set of measurements that can be verified via iterative self-testing. This provides a clear methodology for constructing new self-tests from pre-existing ones.
\end{abstract}

\section{Introduction}
\label{sec:intro}

\subsection{Background}

Consider a scenario where a classical user, Victor, engages with a quantum device by posing questions $x$ and receiving answers $a$. Lacking any prior knowledge of the device's internal workings, Victor models its behavior as a state preparation $\ket{\psi}$, accompanied by quantum measurements $\{M_{a|x},\sum_aM_{a|x}=I\}$. In response to question $x$, the device executes measurement $\{M_{a|x}\}_a$ on the state $\ket{\psi}$ and outputs the resulting measurement output $a$. While it is straightforward to predict the device's output statistics from $\ket{\psi}$ and $\{M_{a|x}\}$ using Born's rule \cite{born_zur_1926} $p(a|x)=\braket{\psi|M_{a|x}|\psi}$, it is impossible to deduce $\ket{\psi}$ and $\{M_{a|x}\}$ solely from the statistics $p(a|x)$. Indeed, different states $\ket{\psi}$ and $\{M_{a|x}\}$ can yield the same $p(a|x)$. In this setting, even a classical computer is always able to simulate the quantum process, if its running time is not limited.

Intriguingly, deducing the quantum functionality from the resulting classical statistics becomes possible in the so-called `bipartite Bell scenario' {\cite{Brunner_2014,Bell1964paradox}} (see Fig. \ref{fig:1}). Here, Victor interacts with \emph{two} spatially separated quantum devices, named Alice and Bob. He poses questions $x\in\mathcal{I}_A$ and $y\in\mathcal{I}_B$ to Alice and Bob respectively, who in turn provide answers, $a\in \mathcal{O}_A$ and $b\in \mathcal{O}_B$. While Alice and Bob cannot communicate during this interaction, they may share an entangled quantum state $\ket{\psi}_{AB}$, which they can measure locally using measurements $\{M_{a|x}:a\in \mathcal{O}_A,x\in \mathcal{I}_A\}$ and $\{N_{b|y}:b\in \mathcal{O}_B,y\in \mathcal{I}_B\}$ to obtain outputs $a$ and $b$. The statistics observed by Victor then follow the distribution $p(a,b|x,y)=\braket{\psi|M_{a|x}\otimes N_{b|y}|\psi}$. It turns out that some statistics $p(a,b|x,y)$ can \emph{exclusively} be produced by a specific set of measurements $\{M_{a|x}\}$ and $\{N_{b|y}\}$ on a specific entangled state $\ket{\psi}_{AB}$ (up to a change of a local frame of reference). This phenomenon is known as `self-testing' \cite{Mayers04self} and it relies on key features of the quantum theory such as entanglement \cite{GISIN199215} and incompatibility of measurements \cite{PhysRevLett.103.230402}. Self-testing represents the strongest form of verification as it requires minimal assumptions, namely, no-communication between Alice's and Bob's measuring devices and the validity of the quantum theory. In particular, in self-testing we do not require access to any trusted or fully characterized quantum devices, a condition also known as `device-independence' \cite{PhysRevLett.98.230501}.

\begin{figure}[ht]%
\centering
\includegraphics[width=0.95\textwidth]{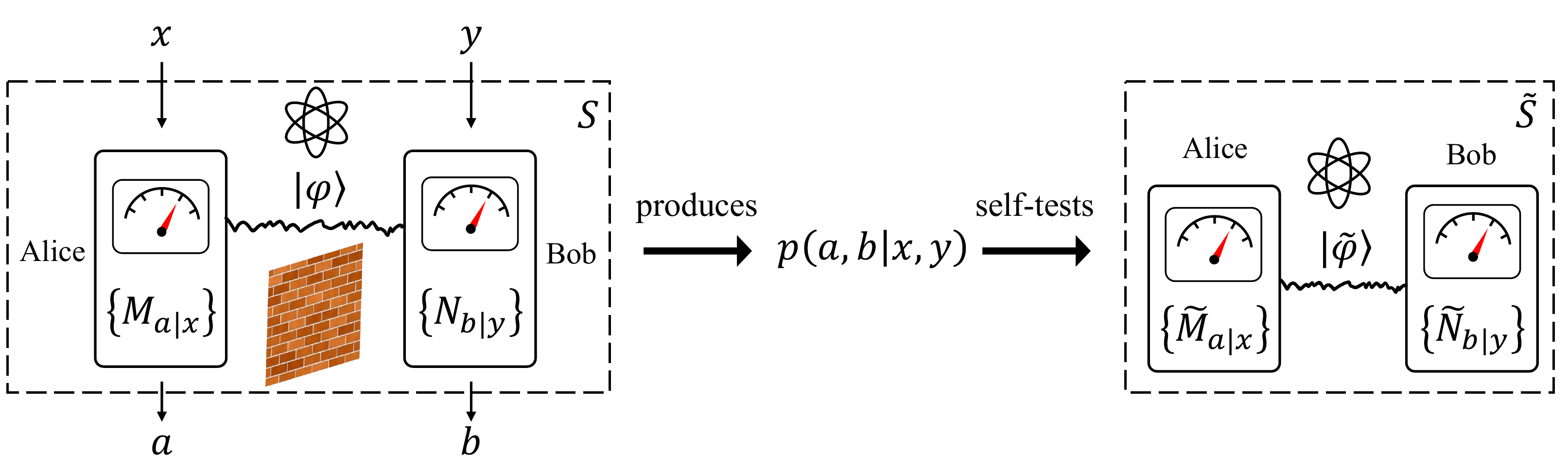}
\caption{Self-testing in a Bell scenario. Spatially separated Alice and Bob perform local measurements on a shared state (described by $\mathcal{S}$), giving rise to correlation $p(a,b|x,y)$. In the case of self-testing, Victor can classically verify Alice and Bob: the only way for Alice and Bob to produce the correct correlation is by adhering to the prescribed specification $\tilde{\mathcal{S}}$.}\label{fig:1}
\end{figure}


The quantum mechanical description of the devices in a bipartite Bell scenario is given by what we call a \emph{strategy}. Formally, such a strategy  ${\mathcal{S}}$ is a tuple
\begin{align*}
    \mathcal{S}=(\ket{\psi}_{AB},\{M_{a|x}\colon a\in \mathcal{O}_A,x\in \mathcal{I}_A\},\{N_{b|y}\colon b\in \mathcal{O}_B,y\in \mathcal{I}_B\}),
\end{align*}
where $\ket{\psi}_{AB}\in\mathcal{H}_A\otimes\mathcal{H}_B$ is the shared state, $\mathcal{M}_x=\{M_{a|x}\}\subseteq\mathcal{L}(\mathcal{H}_A)$ and $\mathcal{N}_b=\{N_{b|y}\}\subseteq\mathcal{L}(\mathcal{H}_B)$ are the POVMs (positive operator-valued measures) of Alice and Bob respectively. The resulting measurement statistics
\begin{align*}
    p(a,b|x,y)=\braket{\psi|M_{a|x}\otimes N_{b|y}|\psi}
\end{align*}
is commonly referred to as \emph{correlation}.
In case of self-testing, we can recover the description of the state and measurements comprising $\mathcal{S}$ from merely observing the measurement statistics $p$ that it produces. So whenever self-testing holds, we can verify the involved state-preparation and measurement functionalities without any prior knowledge about the inner workings of the employed quantum devices.
This leads us to the following fundamental question of self-testing:\vspace{.05in}

\noindent{\bf Question:} \emph{Which quantum states and which measurements can be self-tested?}\vspace{.02in}

In other words, the above question asks which state-preparation and measurement functionalities can be verified by a classical user with no access to trusted quantum devices. To verify (self-test) a given state-preparation or measurement functionality, we need to construct a strategy $\mathcal{S}$ which incorporates this functionality and is moreover determined (self-tested) by the correlation it produces.

In the bipartite scenario the question regarding self-testable states has been answered by a  milestone result~\cite{Coladangelo2017all}, which allows to self-test any pure bipartite entangled state. Recent work shows that in the network setting {\cite{Tavakoli_2022}} it is possible to self-test any entangled multiparty state~\cite{supic2023}. In contrast to this rather complete picture for self-testing of quantum states, the self-testing of general measurements has remained elusive. First, in the bipartite Bell scenario, any complex measurement, \emph{i.e.}, having complex matrix entries in the Schmidt basis of the shared state, can be simulated by a real measurement \cite{P_l_2008,McKague_2009,renou_quantum_2021}. So any self-tested measurement must be real, because a measurement always produces the same statistics as its complex conjugate:
$$\braket{\psi|M_{a|x}\otimes N_{b|y}|\psi}=\overline{\braket{\psi|M_{a|x}\otimes N_{b|y}|\psi}}=\braket{\psi|\overline{M}_{a|x}\otimes \overline{N}_{b|y}|\psi}.$$ 
Existing protocols primarily focus on low-dimensional quantum systems or specific higher-dimensional measurements. In case of a two-level system, it is known how to self-test Pauli measurements~\cite{Mayers04self}, and subsequent work has shown that any two-dimensional projective measurement is self-testable \cite{Yang_2013}. In \cite{McKague2017selftestingin,Andrea2017parallel} tensor-products of Pauli matrices are self-tested, and \cite{sarkar2021self} presented a self-test for a particular pair of $d$-output measurements. In \cite{Laura2021Constant,Fu_2022}, constant-sized self-test of measurements satisfying some special property is provided. Self-testing of arbitrary higher-dimensional measurements, however, has remained out of reach up to now.

\subsection{Results}

In this work we study the self-testing of measurements in a comprehensive (as opposed to example-based) manner. We provide the first results for self-testing of general measurements.
Our specific contributions include the following:

\begin{enumerate}[(i)]
    \item We put forth a fully explicit self-testing protocol for any real projective measurement. Our construction has a question set of cubic size in $d$, the dimension of the measurement to be self-tested, and a constant-sized answer set. Additionally, our self-test is robust to noise.
    \item We formalize the method of \emph{post-hoc self-testing} and identify a sufficient condition for its application. Post-hoc self-testing occurs when we can extend a previously self-tested strategy to include an additional measurement. While
    there are sporadic examples of this method in the literature, a comprehensive understanding of this phenomenon and when it occurs was lacking. To remedy this, we have identified a sufficient condition under which an initial self-test of a given strategy $\mathcal{S}$ can be extended to include an additional measurement $\mathcal{M}$.
    Applying this criterion to an initial strategy from the recent work \cite{Laura2021Constant} allows us to obtain our self-testing construction from (i).
    \item We develop a new technique of \emph{iterative self-testing} which involves sequential application of post-hoc self-testing. Starting from any established self-test, we use Jordan algebra to characterize the set of measurements that can be verified via iterative self-testing. Iterative self-testing is inspired by our self-testing construction in (ii), and offers a handy way for developing new self-tests based on pre-existing ones.

\end{enumerate}

\subsection{Organisation of the paper}

We give the preliminary for this paper and the notations we frequently used herein in Section \ref{sec:pre}. In section \ref{sec:post} we formalize the definition of post-hoc self-testing, and provide a criterion to determine whether a projective measurement is post-hoc self-tested. Then in Section \ref{sec:d+1} we consider iterative self-testing, by which we show how to self-test any real projective measurements. In Section \ref{sec:iterative} the general theory of iterative self-testing is presented. We conclude our result and provide some open problems in Section \ref{sec:conclude}. Appendix \ref{sec:appA} contains examples, counter-examples and general obstructions of post-hoc self-testing. Appendix \ref{sec:appB} is a construction of the observables in our robust self-tested strategy with Mathematica code.

\section{Preliminaries and notation}
\label{sec:pre}

\subsection{Oobservable picture of measurements}

In many cases, especially when the measurement is projective (\ie, all operators in the POVMs are projections), it can be more convenient to work with \emph{generalized observables} instead of operators of POVMs. Given a POVM $\{M_{a|x}\}$, its generalized observables are given by
$$
A_x^{(j)}:=\sum_{a=0}^{|\mathcal{O}_A|-1}\omega^{aj}M_{a|x},
$$
where $\omega=e^{i2\pi/|\mathcal{O}_A|}$. Note that $A_x^{(0)}=I$ by definition. Due to the invertibility of the transform, $\{M_{a|x}\}$ can be recovered from $\{A_x^{(j)}\}$ by $M_a=\frac{1}{|\mathcal{O}_A|}\sum_{j=0}^{|\mathcal{O}_A|-1}\omega^{-aj}A_x^{(j)}$. So $\{A_x^{(j)}\}$ provides an alternative, yet full, description of the measurement. The following properties about the generalized observables hold: (see \cite{Kaniewski2019maximalnonlocality} for a proof)
\begin{itemize}
    \item For any POVM $\{M_{a|x}\}$, $A_x^{(j)}A_x^{(j)\dagger}\le I, A_x^{(j)\dagger} A_x^{(j)}\le I$, \ie, $A_x^{(j)}$ are contractions.
    \item A POVM $\{M_{a|x}\}$ is projective if and only if the corresponding $A_x:=A_x^{(1)}$ is a unitary matrix of order $|\mathcal{O}_A|$. In this case, we call $A_x$ the \emph{observable} of $\{M_{a|x}\}$, further having that $A_x^{(j)}=A_x^{j}$. Therefore,
    \item \emph{Projective} measurements are fully characterised by its observable: $$M_a=\frac{1}{|\mathcal{O}_A|}\sum_{j=0}^{|\mathcal{O}_A|-1}\omega^{-aj}A_x^{j},$$
    while in general, it might not be possible to recover the POVM elements of a measurement from $A_x$.
\end{itemize}

In this work, we specify quantum strategies by the tuple 
\begin{align*}
    \mathcal{S}=(\ket{\psi}_{AB},\{A_{x}^{(j)}:x\in \mathcal{I}_A,j\in\mathcal{O}_A\},\{B_{y}^{(k)}:y\in \mathcal{I}_B,k\in\mathcal{O}_B\}),
\end{align*}
where $A_x^{(j)}=\sum_{a=0}^{|\mathcal{O}_A|-1}\omega_A^{aj}M_{a|x}$, $\omega_A=e^{i2\pi/|\mathcal{O}_A|}$, $B_y^{(k)}=\sum_{b=0}^{|\mathcal{O}_B|-1}\omega_B^{bk}N_{b|y}$, $\omega_B=e^{i2\pi/|\mathcal{O}_B|}$. The correlation is also conveniently specified via
$$\{\braket{\psi|A^{(j)}_x\otimes B^{(k)}_y|\psi}\}_{j,k,x,y}=\{\sum_{a,b}\omega_A^{aj}\omega_B^{bk}p(ab|xy)\}_{j,k,x,y}.$$

Furthermore, we call $\mathcal{S}$ projective if all the measurements in $\mathcal{S}$ are all projective, and denote it by $\mathcal{S}=(\ket{\psi}_{AB},\{A_{x}:x\in \mathcal{I}_A\},\{B_{y}:y\in \mathcal{I}_B\})$ for simplicity. In this work we shall present our results in terms of observables.

\subsection{Local dilation and self-testing}

In a self-testing protocol the verifier Victor wishes to infer the underlying quantum strategy from his observation of correlations, so it is desired that the strategy generating a given correlation is to some extent unique. However, there are at least two types of manipulation of the strategy that do not affect the correlation. Firstly, if we only choose a different basis, then strategies $\mathcal{S}=(\ket{\psi}_{AB},\{A_x^{(j)}\},\{B_y^{(k)}\})$ and $\mathcal{S}'=(U_A\otimes U_B\ket{\psi}_{AB},\{U_AA_x^{(j)}U_A^\dagger\},\{U_BB_y^{(k)}U_B^\dagger\})$ produce the same correlation for any local unitaries $U_A,U_B$. Secondly, if we attach a bipartite auxiliary state $\ket{\operatorname{aux}}_{A'B'}$ on which the measurements act trivially, then strategies $\mathcal{S}=(\ket{\psi}_{AB},\{A_x^{(j)}\},\{B_y^{(k)}\})$ and $\mathcal{S}'=(\ket{\operatorname{aux}}_{A'B'}\otimes\ket{\psi}_{AB},\{I\otimes A_x^{(j)}\},\{I\otimes B_y^{(k)}\})$ produce the same correlation. Motivated by the above two manipulations, we say that `$\tilde{\mathcal{S}}$ is a local dilation of ${\mathcal{S}}$' if up to a change of local bases $\mathcal{S}$ is $\tilde{\mathcal{S}}$ plus some trivial auxiliary state. Specifically, we say that 

\begin{definition}[local dilation]
A local operator $\tilde{M}_{\tilde{A}}\otimes \tilde{N}_{\tilde{B}}$ and a bipartite state $\ket{\tilde{\psi}}_{\tilde{A}\tilde{B}}$ is a \emph{local dilation} of operator $M_A\otimes N_B$ and state $\ket{\psi}_{AB}$ with an auxiliary state $\ket{\operatorname{aux}}_{A'B'}\in\mathcal{H}_{A'}\otimes\mathcal{H}_{B'}$ and a local isometry $\Phi=\Phi_A\otimes\Phi_B$ where $\Phi_A:\mathcal{H}_A\rightarrow\mathcal{H}_{{A'}}\otimes\mathcal{H}_{\tilde{A}}$, $\Phi_B:\mathcal{H}_B\rightarrow\mathcal{H}_{{B'}}\otimes\mathcal{H}_{\tilde{B}}$, if 
\begin{align*}
    &\Phi[M_{A}\otimes N_B\ket{\psi}_{AB}]=\ket{\operatorname{aux}}_{A'B'}\otimes(\tilde{M}_{\tilde{A}}\otimes \tilde{N}_{\tilde{B}}\ket{\tilde{\psi}}_{\tilde{A}\tilde{B}}).
\end{align*}
We denote local dilation by
\begin{align*}
    (\ket{\psi}_{AB},M_A\otimes N_B)\xhookrightarrow{\Phi,\ket{\operatorname{aux}}}(\ket{\tilde{\psi}}_{\tilde{A}\tilde{B}},\tilde{M}_{\tilde{A}}\otimes \tilde{N}_{\tilde{B}}),
\end{align*}
omitting ${\Phi},\ket{\operatorname{aux}}$ above the arrow if they are clear from the context.

A strategy $\tilde{\mathcal{S}}=(\ket{\tilde{\psi}}_{\tilde{A}\tilde{B}},\{\tilde{A}_{x}^{(j)}),\{\tilde{B}_{y}^{(k)}\})$ is a \emph{local dilation} of strategy ${\mathcal{S}}=(\ket{\psi}_{AB},\{A_x^{(j)}),\{B_y^{(k)}\})$ with an auxiliary state $\ket{\operatorname{aux}}_{A'B'}$ and a local isometry $\Phi=\Phi_A\otimes\Phi_B$ where $\Phi_A:\mathcal{H}_A\rightarrow\mathcal{H}_{{A'}}\otimes\mathcal{H}_{\tilde{A}}$, $\Phi_B:\mathcal{H}_B\rightarrow\mathcal{H}_{{B'}}\otimes\mathcal{H}_{\tilde{B}}$, if
\begin{align*}
    &(\ket{\psi}_{AB},A^{(j)}_{x}\otimes I_B)\xhookrightarrow{\Phi,\ket{\operatorname{aux}}}(\ket{\tilde{\psi}}_{\tilde{A}\tilde{B}},\tilde{A}^{(j)}_{x}\otimes I_{\tilde{B}}),
    \\
    &(\ket{\psi}_{AB},I_A\otimes B^{(k)}_{y})\xhookrightarrow{\Phi,\ket{\operatorname{aux}}}(\ket{\tilde{\psi}}_{\tilde{A}\tilde{B}},I_{\tilde{A}}\otimes\tilde{B}^{(k)}_{y})
\end{align*}
hold for all $j\in\mathcal{O}_A,k\in\mathcal{O}_B,x\in\mathcal{I}_A,y\in\mathcal{I}_B$. We denote local dilation of strategies by $\mathcal{S}\xhookrightarrow{\Phi,\ket{\operatorname{aux}}} \tilde{\mathcal{S}}$.
\label{def:local_dilationobs}
\end{definition}

We also use the hook arrow notation for approximate local dilation:

\begin{definition}[local $\varepsilon$-dilation]
A local operator $\tilde{M}_{\tilde{A}}\otimes \tilde{N}_{\tilde{B}}$ and a bipartite state $\ket{\tilde{\psi}}_{\tilde{A}\tilde{B}}$ is a \emph{local $\varepsilon$-dilation} of operator $M_A\otimes N_B$ and state $\ket{\psi}_{AB}$ with an auxiliary state $\ket{\operatorname{aux}}_{A'B'}\in\mathcal{H}_{A'}\otimes\mathcal{H}_{B'}$ and a local isometry $\Phi=\Phi_A\otimes\Phi_B$ where $\Phi_A:\mathcal{H}_A\rightarrow\mathcal{H}_{{A'}}\otimes\mathcal{H}_{\tilde{A}}$, $\Phi_B:\mathcal{H}_B\rightarrow\mathcal{H}_{{B'}}\otimes\mathcal{H}_{\tilde{B}}$, if 
\begin{align*}
    &\|\Phi[M_{A}\otimes N_B\ket{\psi}_{AB}]-\ket{\operatorname{aux}}_{A'B'}\otimes(\tilde{M}_{\tilde{A}}\otimes \tilde{N}_{\tilde{B}}\ket{\tilde{\psi}}_{\tilde{A}\tilde{B}})\|\le\varepsilon.
\end{align*}
We denote local $\varepsilon$-dilation by
\begin{align*}
    (\ket{\psi}_{AB},M_A\otimes N_B)\xhookrightarrow[\varepsilon]{\Phi,\ket{\operatorname{aux}}}(\ket{\tilde{\psi}}_{\tilde{A}\tilde{B}},\tilde{M}_{\tilde{A}}\otimes \tilde{N}_{\tilde{B}}).
\end{align*}
A strategy $\tilde{\mathcal{S}}=(\ket{\tilde{\psi}}_{\tilde{A}\tilde{B}},\{\tilde{A}^{(j)}_{x}),\{\tilde{B}^{(k)}_{y}\})$ is a \emph{local $\varepsilon$-dilation} of a strategy ${\mathcal{S}}=(\ket{\psi}_{AB},\{A^{(j)}_x),\{B^{(k)}_y\})$ with an auxiliary state $\ket{\operatorname{aux}}_{A'B'}$ and a local isometry $\Phi=\Phi_A\otimes\Phi_B$ where $\Phi_A:\mathcal{H}_A\rightarrow\mathcal{H}_{{A'}}\otimes\mathcal{H}_{\tilde{A}}$, $\Phi_B:\mathcal{H}_B\rightarrow\mathcal{H}_{{B'}}\otimes\mathcal{H}_{\tilde{B}}$, if
\begin{align*}
    &(\ket{\psi}_{AB},A^{(j)}_{x}\otimes I_B)\xhookrightarrow[\varepsilon]{\Phi,\ket{\operatorname{aux}}}(\ket{\tilde{\psi}}_{\tilde{A}\tilde{B}},\tilde{A}^{(j)}_{x}\otimes I_{\tilde{B}}),
    \\
    &(\ket{\psi}_{AB},I_A\otimes B^{(k)}_{y})\xhookrightarrow[\varepsilon]{\Phi,\ket{\operatorname{aux}}}(\ket{\tilde{\psi}}_{\tilde{A}\tilde{B}},I_{\tilde{A}}\otimes\tilde{B}^{(k)}_{y})
\end{align*}
hold for all $j\in\mathcal{O}_A,k\in\mathcal{O}_B,x\in\mathcal{I}_A,y\in\mathcal{I}_B$. We denote local $\varepsilon$-dilation of strategies by $\mathcal{S}\xhookrightarrow[\varepsilon]{\Phi,\ket{\operatorname{aux}}} \tilde{\mathcal{S}}$.
\label{def:local_dilationobsapp}
\end{definition}

Now we can define exact and robust self-testing of strategies $\tilde{\mathcal{S}}$:

\begin{definition}[self-testing of strategy]
A strategy $\tilde{\mathcal{S}}=(\ket{\tilde{\psi}}_{\tilde{A}\tilde{B}},\{\tilde{A}^{(j)}_{x}),\{\tilde{B}^{(k)}_{y}\})$ is \emph{(exactly) self-tested} by the correlation $\{\braket{\tilde{\psi}|\tilde{A}^{(j)}_x\otimes \tilde{B}^{(k)}_y|\tilde{\psi}}\}$ it generates, if any strategy $\mathcal{S}$ generating the same correlation $\{\braket{\tilde{\psi}|\tilde{A}^{(j)}_x\otimes \tilde{B}^{(k)}_y|\tilde{\psi}}\}$ satisfies $\mathcal{S}\xhookrightarrow{\Phi,\ket{\operatorname{aux}}}\tilde{\mathcal{S}}$ for some local isometry $\Phi$ and auxiliary state $\ket{\operatorname{aux}}$.

A strategy $\tilde{\mathcal{S}}=(\ket{\tilde{\psi}}_{\tilde{A}\tilde{B}},\{\tilde{A}^{(j)}_{x}),\{\tilde{B}^{(k)}_{y}\})$ is \emph{robustly self-tested} by the correlation it generates, if it is self-tested and the following condition holds: for any $\varepsilon>0$, there exists $\delta>0$ such that, any strategy $\mathcal{S}$ $\delta$-approximately generating the correlation (\ie, for all $x,y,j,k$ it holds that $|\braket{\psi|A_x^{(j)}\otimes B^{(k)}_y|\psi}- \braket{\tilde{\psi}|\tilde{A}^{(j)}_x\otimes \tilde{B}^{(k)}_y|\tilde{\psi}}|<\delta$) satisfies
$\mathcal{S}\xhookrightarrow[\varepsilon]{\Phi,\ket{\operatorname{aux}}}\tilde{\mathcal{S}}$ for some local isometry $\Phi$ and auxiliary state $\ket{\operatorname{aux}}$. 
\label{def:realstrob}
\end{definition}

In the context of self-testing, $\tilde{\mathcal{S}}$ is often called the reference strategy, and $\mathcal{S}$ is called the physical one. 

We make the following assumptions in this work. 
Firstly, we consider `real' reference strategies, due to the complex-conjugation issue of self-testing measurement (see Section 3.7.1 of \cite{Supic2020selftestingof} for a detailed discussion). By `real' we mean that in a Schmidt basis of $\ket{\tilde{\psi}}$ the matrices of the POVMs in $\tilde{\mathcal{S}}$ have real entries. Secondly, we consider projective reference strategies. Recall that they are fully characterised by their observables. In this sense, by `self-testing of a observable' we essentially mean the `self-testing of the projective measurement it corresponds to'. We note that, while some previous self-testing results also make assumption of projective physical strategies, we do not require the physical strategy to be projective or real. Finally, we without loss of generality assume that left and right local dimension of the `auxiliary' state $\ket{\operatorname{aux}}$ are the same, \ie, $\mathcal{H}_{A'}\cong\mathcal{H}_{B'}$, since one can always enlarge the smaller marginal space with local isometry. We also consider full-Schmidt rank reference state $\ket{\tilde{\psi}}$; note that this is a very common assumption throughout the literature on self-testing. Therefore, we can take $\mathcal{H}_{\tilde{A}}\cong\mathcal{H}_{\tilde{B}}$.

\section{Robust post-hoc self-testing of projective measurements}
\label{sec:post}

The concept of post-hoc self-testing has been implicitly employed in prior works, such as self-testing of graph states \cite{McKague2016interactive}, randomness certification \cite{Andersson2018Device,Woodhead2020maximal}, and one-sided self-testing \cite{Sarkar_2023}. The review paper \cite{Supic2020selftestingof} was the first to summarize this technique and refer to it as `post-hoc self-testing'. In this section, we formalise the idea of post-hoc self-testing and establish a sufficient condition for its application.

\subsection{Definitions}

\begin{figure*}[t]
	\centering
	\includegraphics[width=1.0\textwidth]{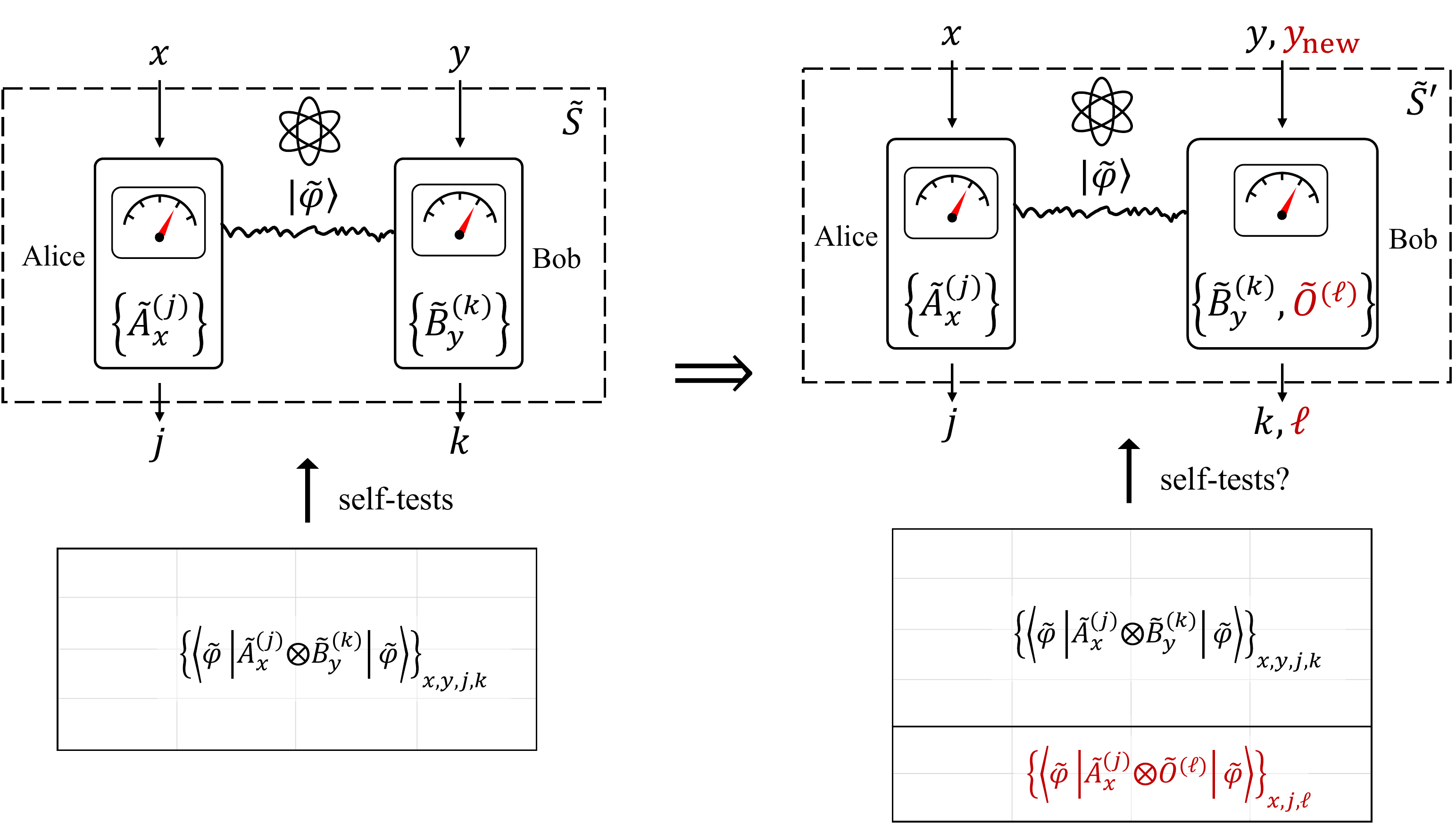}
	\caption{Post-hoc self-testing: starting from a self-tested strategy $\tilde{\mathcal{S}}$ (on the left), if it is feasible to infer the new measurement $\tilde{O}^{(\ell)}$ with input $y_{\text{new}}$ and output $\ell$ from correlations $\{\braket{\psi|A_x^{(j)}\otimes O^{(\ell)}|\psi}\}$, then extended strategy $\tilde{\mathcal{S}}'$ (on the right) remains self-tested.}
	\label{fig:2}
\end{figure*}

In post-hoc self-testing we consider a scenario where we have self-tested strategy $\tilde{\mathcal{S}}=(\ket{\tilde{\psi}},\{\tilde{A}^{(j)}_{x}\}_x$, $\{\tilde{B}^{(k)}_{y}\}_y)$, and we would like to self-test an additional measurement $\{\tilde{O}^{(\ell)}\}$. We are interested to ask when can $\{\tilde{O}^{(\ell)}\}$ be self-tested by extending $\tilde{\mathcal{S}}$. In particular, when is $\tilde{\mathcal{S}}'=(\ket{\tilde{\psi}},\{\tilde{A}^{(j)}_{x}\}_x,\{\tilde{B}^{(k)}_{y},\tilde{O}^{(\ell)}\}_y)$ self-tested by the correlation it produces (Fig. \ref{fig:2})?

Since the reference strategy $\tilde{\mathcal{S}}=(\ket{\tilde{\psi}}_{\tilde{A}\tilde{B}},\{\tilde{A}^{(j)}_{x}),\{\tilde{B}^{(k)}_{y}\})$ is robustly self-tested from correlation, Alice has to honestly perform some measurement which is an approximate local dilation of $\{\tilde{A}^{(j)}_{x}\}$: 
\begin{align}
    &(\ket{\psi}_{AB},A^{(j)}_{x}\otimes I_B)\xhookrightarrow[\varepsilon]{\Phi,\ket{\operatorname{aux}}}(\ket{\tilde{\psi}}_{\tilde{A}\tilde{B}},\tilde{A}^{(j)}_{x}\otimes I_{\tilde{B}})
\end{align}

Then post-hoc self-testing for an additional observable $\tilde{O}^{(l)}$ would ask that the \emph{same} $\Phi$ and $\ket{\operatorname{aux}}$ also connect $O^{(l)}$ and $\tilde{O}^{(l)}$ on the \emph{same} shared state for any $O^{(l)}$ generating correlation close to that of $\tilde{O}^{(l)}$.

\begin{definition}[robust post-hoc self-testing]
Given the state $\ket{\tilde{\psi}}=\ket{\tilde{\psi}}_{\tilde{A}\tilde{B}}$ and generalized observables $\{\tilde{A}^{(j)}_{x}\}$, a $L$-output generalized observable $\{\tilde{O}^{(l)}\}$ is \emph{robustly post-hoc self-tested} (by the correlation $\{\braket{\tilde{\psi}|\tilde{A}^{(j)}_x\otimes\tilde{O}^{(l)}|\tilde{\psi}}\}$) based on $(\ket{\tilde{\psi}}_{\tilde{A}\tilde{B}},\{\tilde{A}^{(j)}_{x}\})$ if the following condition holds: for any $\varepsilon'>0$, there exist $\varepsilon>0$ and $\delta>0$ such that:

if $(\ket{\psi}_{AB},A^{(j)}_{x}\otimes I_B)\xhookrightarrow[\varepsilon]{\Phi,\ket{\operatorname{aux}}}(\ket{\tilde{\psi}}_{\tilde{A}\tilde{B}},\tilde{A}^{(j)}_{x}\otimes I_{\tilde{B}}\}$ for state $\ket{\psi}_{AB}$, generalized observables $\{{A}^{(j)}_{x}\}$, local isometry $\Phi=\Phi_A\otimes\Phi_B$ and state $\ket{\operatorname{aux}}_{A'B'}$, then any generalized observable $\{O^{(l)}\}$ $\delta$-approximately generating the correlation (\ie, $|\braket{\psi|A^{(j)}_x\otimes O^{(l)}|\psi}-\braket{\tilde{\psi}|\tilde{A}^{(j)}_x\otimes \tilde{O}^{(l)}_y|\tilde{\psi}}|<\delta$) satisfies
$$
(\ket{\psi}_{AB},I_A\otimes O^{(l)})\xhookrightarrow[\varepsilon']{\Phi,\ket{\operatorname{aux}}}(\ket{\tilde{\psi}}_{\tilde{A}\tilde{B}},I_{\tilde{A}}\otimes\tilde{O}^{(l)}),
$$
for all $l\in[0,L-1]$.
\label{def:posthocrob}
\end{definition}

Then post-hoc self-testing extends self-testing protocol in the following sense:
\begin{proposition}
If correlation $\{\braket{\tilde{\psi}|\tilde{A}^{(j)}_x\otimes \tilde{B}^{(k)}_y|\tilde{\psi}}\}$ robustly self-tests 
$\tilde{\mathcal{S}}=(\ket{\tilde{\psi}}_{\tilde{A}\tilde{B}},\{\tilde{A}^{(j)}_{x}),\{\tilde{B}^{(k)}_{y}\})$, and correlation $\{\braket{\tilde{\psi}|\tilde{A}^{(j)}_x\otimes \tilde{O}^{(k)}|\tilde{\psi}}\}$ robustly post-hoc self-tests $\{\tilde{O}^{(l)}\}$ based on $(\ket{\tilde{\psi}}_{\tilde{A}\tilde{B}},\{\tilde{A}_{x}^{(j)}\})$, then the extended correlation $\{\braket{\tilde{\psi}|\tilde{A}^{(j)}_x\otimes \tilde{B}^{(k)}_y|\tilde{\psi}}\}\cup\{\braket{\tilde{\psi}|\tilde{A}^{(j)}_x\otimes \tilde{O}^{(l)}|\tilde{\psi}}\}$ robustly self-tests the extended strategy $\tilde{\mathcal{S}}^{Extend}=(\ket{\tilde{\psi}}_{\tilde{A}\tilde{B}},\{\tilde{A}^{(j)}_{x}\},\{\tilde{B}^{(k)}_{y},\tilde{O}^{(l)}\})$.
\label{prop:extend}
\end{proposition}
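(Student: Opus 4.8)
The plan is to unpack the definition of robust self-testing for the extended strategy and then chain the two hypotheses, the crucial point being that Definition \ref{def:posthocrob} is engineered to reuse the very isometry and auxiliary state produced by the self-test of $\tilde{\mathcal{S}}$. Concretely, to show that the extended correlation robustly self-tests $\tilde{\mathcal{S}}^{Extend}$, I must produce, for every target precision $\varepsilon'>0$, a $\delta'>0$ such that any physical strategy $\mathcal{S}'=(\ket{\psi}_{AB},\{A^{(j)}_x\},\{B^{(k)}_y,O^{(l)}\})$ that $\delta'$-approximately reproduces the extended correlation admits a \emph{single} local isometry $\Phi$ and auxiliary state $\ket{\operatorname{aux}}$ that simultaneously $\varepsilon'$-dilates Alice's observables, Bob's original observables, and the new observable $\{O^{(l)}\}$ onto their reference counterparts.

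The next step is to fix the constants in the correct order. Given $\varepsilon'$, I first feed it into the robust post-hoc hypothesis for $\{\tilde{O}^{(l)}\}$, obtaining thresholds $\varepsilon_1>0$ and $\delta_1>0$ with the property that whenever Alice is $\varepsilon_1$-dilated via some $(\Phi,\ket{\operatorname{aux}})$ and $\{O^{(l)}\}$ is $\delta_1$-close in correlation, the \emph{same} $(\Phi,\ket{\operatorname{aux}})$ already $\varepsilon'$-dilates $\{O^{(l)}\}$. I then feed the precision $\min(\varepsilon',\varepsilon_1)$ into the robust self-testing hypothesis for $\tilde{\mathcal{S}}$, obtaining $\delta_2>0$ so that any strategy $\delta_2$-close to the original correlation is $\min(\varepsilon',\varepsilon_1)$-dilated onto $\tilde{\mathcal{S}}$ by some $(\Phi,\ket{\operatorname{aux}})$. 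Finally I set $\delta'=\min(\delta_1,\delta_2)$.

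Now comes the chaining. If $\mathcal{S}'$ is $\delta'$-close on the full extended correlation, then its $\{A^{(j)}_x\},\{B^{(k)}_y\}$ block is $\delta_2$-close on the original correlation, so the self-test of $\tilde{\mathcal{S}}$ supplies one pair $(\Phi,\ket{\operatorname{aux}})$ that $\min(\varepsilon',\varepsilon_1)$-dilates both Alice and Bob. Since $\min(\varepsilon',\varepsilon_1)\le\varepsilon_1$, Alice is in particular $\varepsilon_1$-dilated by this pair; and since $\mathcal{S}'$ is also $\delta_1$-close on the $\{A,O\}$ block, the post-hoc hypothesis activates and yields that the same $(\Phi,\ket{\operatorname{aux}})$ $\varepsilon'$-dilates $\{O^{(l)}\}$. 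Because $\min(\varepsilon',\varepsilon_1)\le\varepsilon'$, this single isometry $(\Phi,\ket{\operatorname{aux}})$ now $\varepsilon'$-dilates Alice, Bob, and $O$ at once, which is exactly $\mathcal{S}'\xhookrightarrow[\varepsilon']{\Phi,\ket{\operatorname{aux}}}\tilde{\mathcal{S}}^{Extend}$. The exact self-testing clause is handled directly rather than by a limit over distinct isometries: if $\mathcal{S}'$ reproduces the extended correlation exactly, the exact self-test of $\tilde{\mathcal{S}}$ furnishes a \emph{fixed} pair $(\Phi,\ket{\operatorname{aux}})$ dilating Alice and Bob with zero error, and feeding this zero-error (hence $\varepsilon_1$-admissible) Alice-dilation into the post-hoc hypothesis for every $\varepsilon'>0$ shows that the same fixed pair $\varepsilon'$-dilates $O$ for all $\varepsilon'>0$, forcing exact dilation of $O$.

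The main obstacle I anticipate is not any single estimate but the \textbf{bookkeeping of the quantifiers} so that one common $(\Phi,\ket{\operatorname{aux}})$ serves all three dilations. This is precisely why Definition \ref{def:posthocrob} takes Alice's dilation data $(\Phi,\ket{\operatorname{aux}})$ as \emph{input} and returns a conclusion about $O$ through that same data, rather than through an independently chosen frame; without this feature the two hypotheses could produce incompatible isometries and the dilations would fail to assemble into a single local dilation of the joint strategy. The one routine check I would still verify is that the approximate-correlation hypotheses genuinely decouple, \ie\ that $\delta'$-closeness of the extended correlation separately entails $\delta_2$-closeness of the $\{A,B\}$-block and $\delta_1$-closeness of the $\{A,O\}$-block; this is immediate since both blocks are sub-tuples of the extended correlation and $\delta'\le\min(\delta_1,\delta_2)$.
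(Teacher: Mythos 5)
Your proof is correct and follows essentially the same chaining argument as the paper: first invoke the robust post-hoc hypothesis at the target precision to obtain thresholds $(\varepsilon_1,\delta_1)$, then invoke the robust self-test of $\tilde{\mathcal{S}}$ at a compatible precision to obtain $\delta_2$, and take $\delta'=\min(\delta_1,\delta_2)$, exploiting that the post-hoc definition forces the \emph{same} $(\Phi,\ket{\operatorname{aux}})$ to serve all three dilations. If anything, your bookkeeping is slightly more careful than the paper's write-up: by feeding $\min(\varepsilon',\varepsilon_1)$ into the self-test you guarantee a final error of $\varepsilon'$ (the paper states $\max\{\varepsilon_1,\varepsilon_2\}$, which needs exactly your shrinking observation to be tightened), and you also explicitly handle the exact self-testing clause, which the paper leaves implicit.
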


\begin{proof}
    By robust post-hoc self-testing, for any $\varepsilon_1$ there exist $\varepsilon_2$ and $\delta_1$ such that 
    $$
    (\ket{\psi}_{AB},I_A\otimes O^{(l)})\xhookrightarrow[\varepsilon_1]{\Phi,\ket{\operatorname{aux}}}(\ket{\tilde{\psi}}_{\tilde{A}\tilde{B}},I_{\tilde{A}}\otimes\tilde{O}^{(l)}),
    $$
    if $|\braket{\psi|A^{(j)}_x\otimes O^{(l)}|\psi}-\braket{\tilde{\psi}|\tilde{A}^{(j)}_x\otimes \tilde{O}^{(l)}_y|\tilde{\psi}}|<\delta_1$. Since $\tilde{\mathcal{S}}=(\ket{\tilde{\psi}}_{\tilde{A}\tilde{B}},\{\tilde{A}^{(j)}_{x}\},\{\tilde{B}^{(k)}_{y}\})$ is robustly self-tested, for $\varepsilon_2$ there exist $\delta_2$ such that 
    \begin{align*}
    &(\ket{\psi}_{AB},A^{(j)}_{x}\otimes I_B)\xhookrightarrow[\varepsilon_2]{\Phi,\ket{\operatorname{aux}}}(\ket{\tilde{\psi}}_{\tilde{A}\tilde{B}},\tilde{A}^{(j)}_{x}\otimes I_{\tilde{B}}),\\
    &(\ket{\psi}_{AB},I_A\otimes B^{(k)}_{y})\xhookrightarrow[\varepsilon_2]{\Phi,\ket{\operatorname{aux}}}(\ket{\tilde{\psi}}_{\tilde{A}\tilde{B}},I_{\tilde{A}}\otimes\tilde{B}^{(k)}_{y}),
\end{align*}
if $|\braket{\psi|A_x^{(j)}\otimes B^{(k)}_y|\psi}- \braket{\tilde{\psi}|\tilde{A}^{(j)}_x\otimes \tilde{B}^{(k)}_y|\tilde{\psi}}|<\delta_2$. Take $\delta=\min\{\delta_1,\delta_2\}$ and $\varepsilon=\max\{\varepsilon_1,\varepsilon_2\}$ one get that the extended strategy $\tilde{\mathcal{S}}^{Extend}=(\ket{\tilde{\psi}}_{\tilde{A}\tilde{B}},\{\tilde{A}^{(j)}_{x}\},\{\tilde{B}^{(k)}_{y},\tilde{O}^{(l)}\})$ is robustly self-tested.
\end{proof}

We visualize the extension of the correlation table to help better understand post-hoc self-testing. For simplicity, consider binary observables $\tilde{A}_x,\tilde{B}_y,\tilde{O}$. The correlation generated by $\tilde{\mathcal{S}}$ can be written as a $(|\mathcal{I}_A|+1)\times(|\mathcal{I}_B|+1)$ table as in Table \ref{table:a}. Then we say that Table \ref{table:a} self-tests strategy $\tilde{\mathcal{S}}$. Take $\tilde{\mathcal{S}}$ as the initial strategy, then add an additional binary observable $\tilde{O}$ on Bob's side; this will extend the correlation table as in Table \ref{table:b}. Intuitively, given self-tested $\{\tilde{A}_x\}$, then for some $\tilde{O}$ it could be the case that $\tilde{O}$ is fully characterized by $\braket{I\otimes \tilde{O}}$ and $\{\braket{\tilde{A}_x\otimes \tilde{O}}\}_x$. If so, we say that $\tilde{O}$ is post-hoc self-tested based on $\{\tilde{A}_x\}$ and $\ket{\tilde{\psi}}$. Then the extended Table \ref{table:b} self-tests $\tilde{\mathcal{S}}^{Extend}$, because essentially the white part of Table \ref{table:b} tests $\tilde{\mathcal{S}}$, and the yellow part tests $\tilde{O}$.

\begin{table}[ht]
\small
\centering
\begin{tabular}{c|c|c|c|c|}
 & $I$ & $\tilde{B}_0$ & ... & $\tilde{B}_{Y-1}$ \\
\hline
$I$ & - & $\braket{I, \tilde{B}_0}_{\tilde{\psi}}$ & ... & $\braket{I, \tilde{B}_{Y-1}}_{\tilde{\psi}}$ \\
\hline
$\tilde{A}_0$ & $\braket{\tilde{A}_0, I}_{\tilde{\psi}}$ & $\braket{\tilde{A}_0, \tilde{B}_0}_{\tilde{\psi}}$ & ... & $\braket{\tilde{A}_0, \tilde{B}_{Y-1}}_{\tilde{\psi}}$ \\
\hline
... & ... & ... & ... & ... \\
\hline
$\tilde{A}_{X-1}$ & $\braket{\tilde{A}_{X-1}, I}_{\tilde{\psi}}$ & $\braket{\tilde{A}_{X-1}, \tilde{B}_0}_{\tilde{\psi}}$ & ... & $\braket{\tilde{A}_{X-1}, \tilde{B}_{Y-1}}_{\tilde{\psi}}$ \\
\hline
 \end{tabular}
\caption{Initiate correlation table. Here $\braket{\tilde{A}, \tilde{B}}_{\tilde{\psi}}$ is in short for $\braket{\tilde{\psi}|\tilde{A}\otimes \tilde{B}|\tilde{\psi}}$, and we take $X=|\mathcal{I}_A|$, $Y=|\mathcal{I}_B|$.
}
\label{table:a}
\end{table} 

\begin{table}[ht]
\small
\centering
\begin{tabular}{c|c|c|c|c|c|}
 & $I$ & $\tilde{B}_0$ & ... & $\tilde{B}_{Y-1}$ & \cellcolor{yellow}{$\tilde{O}$} \\
\hline
$I$ & - & $\braket{I, \tilde{B}_0}_{\tilde{\psi}}$ & ... & $\braket{I, \tilde{B}_{Y-1}}_{\tilde{\psi}}$ & \cellcolor{yellow}{$\braket{I, \tilde{O}}_{\tilde{\psi}}$} \\
\hline
$\tilde{A}_0$ & $\braket{\tilde{A}_0, I}_{\tilde{\psi}}$ & $\braket{\tilde{A}_0, \tilde{B}_0}_{\tilde{\psi}}$ & ... & $\braket{\tilde{A}_0, \tilde{B}_{Y-1}}_{\tilde{\psi}}$ & \cellcolor{yellow}{$\braket{\tilde{A}_0, \tilde{O}}_{\tilde{\psi}}$} \\
\hline
... & ... & ... & ... & ... & \cellcolor{yellow}{...} \\
\hline
$\tilde{A}_{X-1}$ & $\braket{\tilde{A}_{X-1}, I}_{\tilde{\psi}}$ & $\braket{\tilde{A}_{X-1}, \tilde{B}_0}_{\tilde{\psi}}$ & ... & $\braket{\tilde{A}_{X-1}, \tilde{B}_{Y-1}}_{\tilde{\psi}}$ & \cellcolor{yellow}{$\braket{\tilde{A}_{X-1}, \tilde{O}}_{\tilde{\psi}}$} \\
\hline
\end{tabular}
\caption{Extended correlation table.}
\label{table:b}
\end{table} 

From hereon we shall call $\tilde{\mathcal{S}},\ket{\tilde{\psi}}_{\tilde{A}\tilde{B}},\{\tilde{A}_{x}^{(j)}\}$ the initial strategy, initial state, and initial generalized observables, respectively, and call $\tilde{O}^{(l)}$ the additional generalized observables.

\subsection{Robust post-hoc self-testing criterion for projective strategies}

Given the set of initial generalized observables $\{\tilde{A}_{x}\}$ together with the initial state $\ket{\tilde{\psi}}$, what kind of generalized observable $\tilde{O}$ is post-hoc self-tested based on $(\ket{\tilde{\psi}},\tilde{A}_{x})$? Intuitively, if $\{\braket{\tilde{\psi}|(\tilde{A}^{(j)}_x\otimes\tilde{O}^{(\ell)})|\tilde{\psi}}\}_x$ can fully characterize $\{\tilde{O}^{(\ell)}\}$ for all $\ell$ then Bob also has no choice but to honestly perform a local dilation of $\tilde{O}^{(\ell)}$ on $\ket{\tilde{\psi}}$. This then gives a criterion of post-hoc self-testing. In proving this criterion, a version of the folklore fact `any vector is uniquely determined by its inner products with basis vectors' is useful. Explicitly,

\begin{lemma}
Let $\tilde{v}_0,\dots,\tilde{v}_{n-1}$ be linearly independent vectors in a Hilbert space. 
Let ${v}_0,\dots,{v}_{n-1}$ be nearby vectors  (in the norm induced by the inner product $\|a\|=\sqrt{\braket{a,a}}$),
$$
\forall x\in[0,n-1],\|v_x-\tilde{v}_x\|<\varepsilon.
$$
For any vector pair $v$ and $\tilde{v}$ such that $\braket{v,v}\le\braket{\tilde{v},\tilde{v}}$ and $\tilde{v}\in\operatorname{span}_{\mathbb{C}}\{v_0,\dots,v_{n-1}\}$, if
\begin{align*}
    \forall x\in[0,n-1],|\braket{v_x,v}-\braket{\tilde{v}_x,\tilde{v}}|<\delta
\end{align*}
then 
$$
\|v-\tilde{v}\|\le\left(\frac{4n}{\lambda_{\min}(G)}\right)^{\frac{1}{4}}\left(\varepsilon\|\tilde{v}\|+\delta\right)^{\frac{1}{2}}\|\tilde{v}\|^{\frac{1}{2}},
$$
where $G$ is the Gram matrix of $\tilde{v}_0,\dots,\tilde{v}_{n-1}$ with entries $g_{jk}=\braket{\tilde{v}_j,\tilde{v}_k}$, and $\lambda_{\min}(G)$ is the minimal eigenvalue of $G$.
\label{lem:inspacerobust}
\end{lemma}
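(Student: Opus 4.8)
The plan is to reduce the entire estimate to controlling a single scalar, $\mathrm{Re}\braket{\tilde v - v,\tilde v}$, and only at the end to let the Gram matrix enter through a basis expansion. First I would record the elementary norm identity
$$
\|v-\tilde v\|^2=\|v\|^2+\|\tilde v\|^2-2\,\mathrm{Re}\braket{v,\tilde v},
$$
and then use the hypothesis $\braket{v,v}\le\braket{\tilde v,\tilde v}$ to replace $\|v\|^2$ by $\|\tilde v\|^2$, obtaining $\|v-\tilde v\|^2\le 2\,\mathrm{Re}\braket{\tilde v-v,\tilde v}$. This is exactly the exact-case argument made quantitative: when $v_x=\tilde v_x$ and the inner products match, $\braket{\tilde v-v,\tilde v}=0$ forces $v=\tilde v$, and the norm hypothesis is precisely what rules out the spurious solution $v=0$. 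So the whole problem becomes: bound $\braket{\tilde v-v,\tilde v}$ by something of order $(\varepsilon\|\tilde v\|+\delta)\|\tilde v\|$.

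The second step is to show that the \emph{coordinates} of $\tilde v-v$ against each reference vector are small, i.e.\ $|\braket{\tilde v-v,\tilde v_x}|<\varepsilon\|\tilde v\|+\delta$. Here both closeness hypotheses are combined by a triangle inequality: I would swap $v_x$ for $\tilde v_x$ at the cost of $|\braket{\tilde v_x-v_x,\,v}|\le\|\tilde v_x-v_x\|\,\|v\|<\varepsilon\|v\|\le\varepsilon\|\tilde v\|$ (using Cauchy--Schwarz, $\|v_x-\tilde v_x\|<\varepsilon$, and once more $\|v\|\le\|\tilde v\|$), and then invoke the measured-correlation closeness $|\braket{v_x,v}-\braket{\tilde v_x,\tilde v}|<\delta$. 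Summing these two contributions and passing to complex conjugates gives the claimed per-coordinate bound.

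The third step converts the coordinate bounds into a bound on $\braket{\tilde v-v,\tilde v}$ by expanding $\tilde v=\sum_x d_x\tilde v_x$ in the reference family, so that $\braket{\tilde v-v,\tilde v}=\sum_x d_x\braket{\tilde v-v,\tilde v_x}$. Bounding $\sum_x|d_x|\le\sqrt n\,\|d\|_2$ by Cauchy--Schwarz and controlling $\|d\|_2$ through the Gram matrix via $\|\tilde v\|^2=d^\ast G\,d\ge\lambda_{\min}(G)\|d\|_2^2$ — which is legitimate precisely because linear independence of $\tilde v_0,\dots,\tilde v_{n-1}$ guarantees $\lambda_{\min}(G)>0$ — yields $\|d\|_2\le\|\tilde v\|/\sqrt{\lambda_{\min}(G)}$. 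Assembling, $\|v-\tilde v\|^2\le \tfrac{2\sqrt n}{\sqrt{\lambda_{\min}(G)}}\,(\varepsilon\|\tilde v\|+\delta)\|\tilde v\|$, and taking square roots produces the stated estimate with the constant $\bigl(4n/\lambda_{\min}(G)\bigr)^{1/4}$.

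I expect the main obstacle to be purely a matter of careful bookkeeping in this last step: \emph{which} basis one expands $\tilde v$ in, and correspondingly which Gram matrix surfaces. Expanding in the reference vectors $\tilde v_x$ is what delivers the clean constant $4n/\lambda_{\min}(G)$; expanding instead in the physical vectors $v_x$ — as the literal hypothesis $\tilde v\in\operatorname{span}_{\mathbb C}\{v_0,\dots,v_{n-1}\}$ suggests — would make the \emph{physical} Gram matrix appear and replace $\lambda_{\min}(G)$ by an $\varepsilon$-perturbed eigenvalue, spoiling the clean form. The delicate point is therefore to push every perturbation onto the reference side (which the inequality $\|v\|\le\|\tilde v\|$ makes possible throughout) and to expand against $\{\tilde v_x\}$; in the intended application the additional observable satisfies $\tilde v\in\operatorname{span}\{\tilde v_x\}$, so this expansion is available, and the two span conditions effectively coincide once one works inside the $n$-dimensional space determined by the linearly independent reference vectors.
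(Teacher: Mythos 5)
Your proof is correct and follows essentially the same route as the paper's: the same reduction $\|v-\tilde v\|^2\le 2\operatorname{Re}\braket{\tilde v-v,\tilde v}$ via $\|v\|\le\|\tilde v\|$, the same triangle-inequality splitting of each coordinate error into an $\varepsilon\|\tilde v\|$ term and a $\delta$ term, and the same expansion of $\tilde v$ in the reference vectors with the Gram-matrix bound (the paper bounds the coefficient vector through an SVD of the matrix $W$ whose columns are the $\tilde v_x$, rather than your direct $d^{*}Gd\ge\lambda_{\min}(G)\|d\|_2^2$, but this is cosmetic and yields the identical constant). Your closing observation is also exactly right: the paper's own proof reads the span hypothesis as $\tilde v\in\operatorname{span}_{\mathbb C}\{\tilde v_0,\dots,\tilde v_{n-1}\}$ and expands against the reference family, which is the version needed for the clean constant and the one actually invoked in Proposition \ref{proposition:eachkrobust}.
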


\begin{proof}
Since $\tilde{v}\in\operatorname{span}_{\mathbb{C}}\{\tilde{v}_0,\dots,\tilde{v}_{n-1}\}$, let $\tilde{v}=\sum_x\alpha_x\tilde{v}_x=W\alpha$, where
\begin{align*}
    W=\begin{pmatrix}
    | & & |\\
    \tilde{v}_0 & \cdots & \tilde{v}_{n-1}\\
    | & & |
    \end{pmatrix},
\end{align*}
then
\begin{align*}
    \|v-\tilde{v}\|^2=&\|v\|^2+\|\tilde{v}\|^2-2\operatorname{Re}\braket{v,\tilde{v}}\nonumber\\
    =&\|v\|^2+\|\tilde{v}\|^2-2\|\tilde{v}\|^2-2\operatorname{Re}\braket{v-\tilde{v},\tilde{v}}\nonumber\\
    \le&-2\operatorname{Re}\braket{v-\tilde{v},\tilde{v}}\nonumber\\
    \le&2|\braket{v-\tilde{v},\tilde{v}}|\nonumber\\
    =&2|\sum_{x=0}^{n-1}\alpha_x\braket{v-\tilde{v},\tilde{v}_x}|\nonumber\\
    \le&2\sum_{x=0}^{n-1}|\alpha_x|\cdot|\braket{v,\tilde{v}_x}-\braket{\tilde{v},\tilde{v}_x}|\nonumber\\
    =&2\sum_{x=0}^{n-1}|\alpha_x|\cdot|\braket{v,\tilde{v}_x}-\braket{{v},v_x}+\braket{{v},v_x}-\braket{\tilde{v},\tilde{v}_x}|\nonumber\\
    \le&2\sum_{x=0}^{n-1}|\alpha_x|\cdot(\varepsilon\|v\|+\delta)\nonumber\\
    \le&2\|\alpha\|_1(\varepsilon\|\tilde{v}\|+\delta),
\end{align*}
where $\|\cdot\|_1$ is the vector 1-norm. Using the vector norm inequality we have
\begin{align*}
    \|\alpha\|_1\le&\sqrt{n}\|\alpha\|\nonumber\\
    =&\sqrt{n}\|(W^\dagger W)^{-1}W^\dagger W\alpha\|\nonumber\\
    \le&\sqrt{n}\|(W^\dagger W)^{-1}W^\dagger\|_\infty\|W\alpha\|,
\end{align*}
where $\|\cdot\|_\infty$ is the spectral norm of operators (Schatten $\infty$-norm). 

Note that $W$ admits a singular value decomposition $W=V\Sigma U^\dagger$, where $V$ is isometry, $U$ is unitary, and $\Sigma=\operatorname{diag}(\sigma_0,\dots,\sigma_{n-1})$ is positive definite. Then $G=W^\dagger W=U\Sigma^2 U^\dagger$. Therefore
\begin{align*}
    \|(W^\dagger W)^{-1}W^\dagger\|_\infty=\|(U\Sigma^2U^\dagger)^{-1}U\Sigma V^\dagger\|_\infty=\|U\Sigma^{-1}V^\dagger\|_\infty=\sigma_{\max}(U\Sigma^{-1}V^\dagger)=\frac{1}{\sqrt{\lambda_{\min}(G)}}.
\end{align*}
Finally, 
\begin{align*}
    \|v-\tilde{v}\|^2\le&2\|\alpha\|_1(\varepsilon\|\tilde{v}\|+\delta)\nonumber\\
    \le&2\sqrt{n}\|(W^\dagger W)^{-1}W^\dagger\|_\infty\|W\alpha\|(\varepsilon\|\tilde{v}\|+\delta)\nonumber\\
    =&\frac{2\sqrt{n}}{\sqrt{\lambda_{\min}(G)}}(\varepsilon\|\tilde{v}\|+\delta)\|\tilde{v}\|\nonumber\\
    \implies\quad\|v-\tilde{v}\|\le&\left(\frac{4n}{\lambda_{\min}(G)}\right)^{\frac{1}{4}}\left(\varepsilon\|\tilde{v}\|+\delta\right)^{\frac{1}{2}}\|\tilde{v}\|^{\frac{1}{2}}.
\end{align*}
\end{proof}

The analogue of Lemma \ref{lem:inspacerobust} for unitary operators is crucial in the following proposition. From hereon we assume that the reference strategy is given in a Schmidt basis for its state.

\begin{proposition}
Let $\ket{\tilde{\psi}}_{\tilde{A}\tilde{B}}\in \mathcal{H}_{\tilde A}\otimes\mathcal{H}_{\tilde B}$ be a state, and $\{\tilde{A}_{x}\}$, $x\in[0,n-1]$ be unitaries in $\mathcal{L}(H_{\tilde{A}})$. Suppose
$\tilde{O}\in\mathcal{L}(H_{\tilde{B}})$ is a unitary 
such that
$$
\Bar{\tilde{O}}P\in\operatorname{span}_{\mathbb{C}}\{D{\tilde{A}}_xD\}
$$
where $P$ is positive definite and $D=\operatorname{vec}^{-1}(\ket{\tilde{\psi}})=\operatorname{diag}(\lambda_0,...,\lambda_{d-1})$, where $\lambda_j$ are Schmidt coefficients of $\ket{\tilde{\psi}}$.

If states $\ket{\psi}_{AB}\in \mathcal{H}_{A}\otimes\mathcal{H}_{B},
\ket{\operatorname{aux}}\in
\mathcal{H}_{A'}\otimes\mathcal{H}_{B'}$, 
contractions $\{{A}_{x}\}$ in $\mathcal{L}(H_{A})$, 
a contraction $O\in\mathcal{L}(\mathcal{H}_{B})$, and a local isometry 
$\Phi=\Phi_A\otimes\Phi_B:
\mathcal{H}_{A}\otimes\mathcal{H}_{B} \to
(\mathcal{H}_{\tilde A}\otimes\mathcal{H}_{A'})\otimes (\mathcal{H}_{\tilde B}\otimes\mathcal{H}_{B'})$ satisfy
\begin{align*}
&\forall x,~(\ket{\psi}_{AB},A_{x}\otimes I_B)\xhookrightarrow[\varepsilon]{\Phi,\ket{\operatorname{aux}}}(\ket{\tilde{\psi}}_{\tilde{A}\tilde{B}},\tilde{A}_{x}\otimes I_{\tilde{B}}),\\
&\ket{\psi}_{AB}\xhookrightarrow[\varepsilon]{\Phi,\ket{\operatorname{aux}}}\ket{\tilde{\psi}}_{\tilde{A}\tilde{B}},\\
&|\braket{\psi|A_x\otimes O|\psi}-\braket{\tilde{\psi}|\tilde{A}_x\otimes \tilde{O}|\tilde{\psi}}|<\delta,
\end{align*}
then $(\ket{\psi}_{AB},{I}_A\otimes O)\xhookrightarrow[\varepsilon']{\Phi,\ket{\operatorname{aux}}}(\ket{\tilde{\psi}}_{\tilde{A}\tilde{B}},I_{\tilde{A}}\otimes\tilde{O})$, where
\begin{align}
\varepsilon'=\left(\frac{n}{\lambda_{\min}(G)}\right)^{\frac{1}{4}}\left(2\frac{\tr Q}{\lambda_{\min}(Q)}\kappa(D)\right)^{\frac{1}{2}}\left(\left(2\left(\frac{\tr Q}{\lambda_{\min}(Q)}\right)^{\frac{1}{2}}\lambda_{\max}(D)+1\right)\varepsilon+\delta\right)^{\frac{1}{2}}+\varepsilon.
\label{eq:delta}
\end{align}
Here $Q=D^{-1}PD^{-1}$, $G$ is the Gram matrix of $\{\tilde{A}_x\}$ with etries $g_{jk}=\tr[\tilde{A}^\dagger_j\tilde{A}_k]$, and $\kappa(D)$ is the condition number of $D$, \ie, the ratio of the maximal and the minimal Schmidt coefficient of $\ket{\tilde{\psi}}$.
\label{proposition:eachkrobust}
\end{proposition}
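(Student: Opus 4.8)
The plan is to deduce the claim from a single application of Lemma~\ref{lem:inspacerobust}, carried out inside the target space of $\Phi$. Working in $(\mathcal{H}_{\tilde A}\otimes\mathcal{H}_{A'})\otimes(\mathcal{H}_{\tilde B}\otimes\mathcal{H}_{B'})$, I would introduce the reference vectors $\tilde v_x:=\ket{\operatorname{aux}}\otimes\big((\tilde A_x\otimes I_{\tilde B})\ket{\tilde\psi}\big)$ and the nearby vectors $v_x:=\Phi[(A_x\otimes I_B)\ket{\psi}]$, together with the target pair $\tilde v:=\ket{\operatorname{aux}}\otimes\big((I_{\tilde A}\otimes\tilde O)\ket{\tilde\psi}\big)$ and $v:=\Phi[(I_A\otimes O)\ket{\psi}]$. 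The first dilation hypothesis states precisely that $\|v_x-\tilde v_x\|\le\varepsilon$, and the desired conclusion $(\ket{\psi}_{AB},I_A\otimes O)\xhookrightarrow[\varepsilon']{\Phi,\ket{\operatorname{aux}}}(\ket{\tilde\psi}_{\tilde A\tilde B},I_{\tilde A}\otimes\tilde O)$ is exactly the estimate $\|v-\tilde v\|\le\varepsilon'$. Thus everything reduces to feeding the right data into Lemma~\ref{lem:inspacerobust}.

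I would then verify the three hypotheses of the Lemma. The norm condition $\braket{v,v}\le\braket{\tilde v,\tilde v}$ is immediate: $\Phi$ is an isometry and $O$ a contraction, so $\|v\|=\|(I\otimes O)\ket{\psi}\|\le 1$, while $\tilde O$ is unitary, so $\|\tilde v\|=1$. For the inner-product closeness, the isometry identity $\Phi^\dagger\Phi=I$ gives $\braket{v_x,v}=\braket{\psi|A_x^\dagger\otimes O|\psi}$ and $\braket{\tilde v_x,\tilde v}=\braket{\tilde\psi|\tilde A_x^\dagger\otimes\tilde O|\tilde\psi}$, and the correlation hypothesis (read off at the adjoint/power label, which again lies among the generalized observables) bounds their difference by $\delta$. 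The decisive hypothesis is the span condition $\tilde v\in\operatorname{span}_{\mathbb C}\{\tilde v_x\}$. Here I would invoke the vectorization dictionary $(\tilde A_x\otimes I)\ket{\tilde\psi}=\operatorname{vec}(\tilde A_xD)$ and $(I\otimes\tilde O)\ket{\tilde\psi}=\operatorname{vec}(D\tilde O^{T})$, together with the resulting identity $\braket{\tilde\psi|\tilde A_x\otimes\tilde O|\tilde\psi}=\tr[\tilde O^{T}D\tilde A_xD]=\braket{\bar{\tilde O},D\tilde A_xD}_{\mathrm{HS}}$, which shows that the correlations pair $\bar{\tilde O}$ against the matrices $D\tilde A_xD$. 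The hypothesis $\bar{\tilde O}P\in\operatorname{span}\{D\tilde A_xD\}$ is then precisely what guarantees that a suitable rescaling of $\tilde v$ lands in the subspace spanned by the $\tilde v_x$, supplying the expansion coefficients that the Lemma needs.

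I expect the main work, and the source of the precise form of $\varepsilon'$ in~\eqref{eq:delta}, to be the conversion between weighted and unweighted quantities. The Gram matrix of the $\tilde v_x$ is the $D$-weighted Hilbert--Schmidt Gram $\tr[D^2\tilde A_j^\dagger\tilde A_k]$, whereas the statement is phrased through the \emph{unweighted} Gram $G$ with $g_{jk}=\tr[\tilde A_j^\dagger\tilde A_k]$ and through $Q=D^{-1}PD^{-1}$. I would lower-bound the weighted Gram in terms of $\lambda_{\min}(G)$ using $\|XD\|_{\mathrm{HS}}^2\ge\lambda_{\min}(D)^2\|X\|_{\mathrm{HS}}^2$, and I would bound the $1$-norm of the coefficients expressing $\bar{\tilde O}P$ in $\{D\tilde A_xD\}$, as well as the effective norm of the rescaled target, in terms of $\tr Q/\lambda_{\min}(Q)$, $\kappa(D)$ and $\lambda_{\max}(D)$; passing the weight $D$ between the $P$-normalized span coefficients and the physical inner products is exactly what manufactures these spectral prefactors and inflates the effective ``$\varepsilon\|\tilde v\|$'' term of the Lemma into $\big(2(\tr Q/\lambda_{\min}(Q))^{1/2}\lambda_{\max}(D)+1\big)\varepsilon$. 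Finally, the additive $+\varepsilon$ in~\eqref{eq:delta} should come from a concluding triangle inequality: the Lemma controls $v$ against the span element assembled from the \emph{reference} data, and the state-dilation bound $\|\Phi\ket{\psi}-\ket{\operatorname{aux}}\otimes\ket{\tilde\psi}\|\le\varepsilon$ is then used to identify that element with $\tilde v$ itself. Keeping all of these spectral constants tight, rather than merely finite, is the part I expect to require the most care.
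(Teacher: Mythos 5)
Your overall skeleton --- reduce the claim to Lemma \ref{lem:inspacerobust}, with the dilation hypothesis giving the closeness of the $v_x$'s, the correlation hypothesis giving the inner-product closeness, and the proposition's span hypothesis giving the lemma's span condition --- is the same as the paper's. But your instantiation has a genuine gap: you apply the lemma to the \emph{unweighted} vectors $v_x=\Phi[(A_x\otimes I_B)\ket{\psi}]$, $\tilde v=\ket{\operatorname{aux}}\otimes(I_{\tilde A}\otimes\tilde O)\ket{\tilde\psi}$, etc. Under the vectorization dictionary you yourself quote, the lemma's span condition for these vectors reads $D\tilde O^{\intercal}\in\operatorname{span}_{\mathbb C}\{\tilde A_xD\}$, equivalently $\bar{\tilde O}D^2\in\operatorname{span}_{\mathbb C}\{D\tilde A_x^\dagger D\}$; this is (modulo the adjoint issue below) only the special case $P=D^2$ of the hypothesis. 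For a general positive definite $P$ there is no ``suitable rescaling'' of $\tilde v$ that lands it in $\operatorname{span}_{\mathbb C}\{\tilde v_x\}$: $P$ is a matrix weight, not a scalar, and inserting it changes the vectors themselves. This is exactly why the paper's proof does not use the raw state vectors but the $P$-weighted Hilbert--Schmidt vectors $\tilde v_x=D'\otimes\sqrt{P}^{-1}D\tilde A_x^\dagger D$ and $\tilde v=D'\otimes\sqrt{P}\,\tilde O^{\intercal}$ (with physical counterparts built from $\Phi_AA_x^\dagger\Phi_A^\dagger$ and $\Phi_BO\Phi_B^\dagger$): the $\sqrt{P}$ and $\sqrt{P}^{-1}$ cancel in the cross inner products, so these still equal the observed correlations, while the span condition becomes precisely $\bar{\tilde O}P\in\operatorname{span}_{\mathbb C}\{D\tilde A_xD\}$. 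Once the vectors carry these weights, none of your three ``immediate'' identifications survives as stated: the dilation hypothesis yields $\|v_x-\tilde v_x\|\le 2\varepsilon/\lambda_{\min}(Q)^{1/2}$ rather than $\varepsilon$; the norm comparison $\|v\|\le\|\tilde v\|$ is no longer ``isometry plus contraction'' (both norms are $\sqrt{\tr P}$-type quantities) but needs the trace-inequality argument with $\Phi_B\Phi_B^\dagger\le I$; and the lemma's output bounds a weighted distance, which must be converted back to the state-vector norm at the cost of $\|\sqrt{P^{-1}}D\|_\infty$ plus a final $+\varepsilon$. These conversions, which your proposal does not set up, are where the factors $\tr Q$, $\lambda_{\min}(Q)$ and $\kappa(D)$ in \eqref{eq:delta} actually come from.

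A second, smaller obstruction to the raw-vector route: with your vectors, $\braket{v_x,v}=\braket{\psi|A_x^\dagger\otimes O|\psi}$ and $\braket{\tilde v_x,\tilde v}=\braket{\tilde\psi|\tilde A_x^\dagger\otimes\tilde O|\tilde\psi}$, whereas the proposition assumes correlation closeness (and a dilation) only for $A_x$, not for $A_x^\dagger$. Your parenthetical fix --- that the adjoint ``again lies among the generalized observables'' --- imports structure from the later application (Theorem \ref{thm:conditionUnifiedrob}, where all powers of an order-$L$ unitary are present) that is not among this proposition's hypotheses; here $\tilde A_x$ is an arbitrary unitary and $A_x$ an arbitrary contraction. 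The paper's proof resolves this inside the Hilbert--Schmidt framework, where dagger-invariance of the norm converts expressions in $A_x^\dagger$ into expressions in $A_x$; no analogous move exists for the plain vector norm, where $\|(A_x\otimes I)\ket{\psi}-\cdot\|$ and $\|(A_x^\dagger\otimes I)\ket{\psi}-\cdot\|$ can differ by factors of order $\kappa(D)$.
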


\begin{proof}
Define
\begin{align}
    &\tilde{v}_x:=D'\otimes\sqrt{P}^{-1}D\tilde{A}_x^\dagger D,\nonumber\\
    &v_x:=(I\otimes\sqrt{P^{-1}}D)(\Phi_A A_x^\dagger\Phi_A^\dagger)(D'\otimes D),\nonumber\\
    &\tilde{v}:=(D'\otimes \sqrt{P})(I_{B'}\otimes\tilde{O}^\intercal)=D'\otimes\sqrt{P}\tilde{O}^\intercal,\nonumber\\
    &v:=(D'\otimes\sqrt{P})(\Phi_B O\Phi_B^\dagger)^\intercal.\nonumber
\end{align}
where $D'=\operatorname{vec}^{-1}(\ket{\operatorname{aux}})$. We also consider $\ket{\operatorname{aux}}$ given in its Schmidt basis, so $D'$ is diagonal (while not necessarily full-ranked). Note that $\tr[D'^2]=\tr[\rho_A]=1$. The entries of the Gram matrix $G'$ for $\{\tilde{v}_x\}$ are $g'_{jk}=\tr[\tilde{v}_j^\dagger\tilde{v}_k]=\tr[(D^{-1}PD^{-1})^{-1}\tilde{A}^\dagger_kD^2\tilde{A}_j]=\tr[(Q)^{-1}\tilde{A}^\dagger_kD^2\tilde{A}_j]$. Comparing the minimal eigenvalues of $G$ and $G'$, we have that
$$
\lambda_{\min}(G')\ge\lambda_{\min}(G)\lambda_{\min}(D^2)\lambda_{\min}(Q^{-1})=\frac{\lambda_{\min}(G)\lambda^2_{\min}(D)}{\lambda_{\max}(Q)}>\frac{\lambda_{\min}(G)\lambda^2_{\min}(D)}{\tr Q}.
$$

To apply Lemma \ref{lem:inspacerobust}, one check the conditions:
\begin{itemize}
    \item $\tilde{v}\in\operatorname{span}_{\mathbb{C}}\{\tilde{v}_x\}$:
    \begin{align*}
    &\Bar{\tilde{O}}P\in\operatorname{span}_{\mathbb{C}}\{D{\tilde{A}}_xD\}\nonumber\\
    \Rightarrow\quad&P(\tilde{O})^\intercal\in\operatorname{span}_{\mathbb{C}}\{D\tilde{A}_x^\dagger D\}\nonumber\\
    \Rightarrow\quad&D'\otimes\sqrt{P}(\tilde{O})^\intercal\in\operatorname{span}_{\mathbb{C}}\{D'\otimes\sqrt{P}^{-1}D\tilde{A}_x^\dagger D\}\nonumber\\
    \Rightarrow\quad&\tilde{v}\in\operatorname{span}_{\mathbb{C}}\{\tilde{v}_x\}.
\end{align*}

    \item $\|v\|\le\|\tilde{v}\|$:
    \begin{align*}
    \|\tilde{v}\|=&\sqrt{\tr[D'^2\otimes\sqrt{P}(\tilde{O})^\intercal((\tilde{O})^\intercal)^\dagger\sqrt{P}]}=\sqrt{\tr[P]},\\
    \|v\|=&\sqrt{\tr[(\Phi_B O\Phi_B^\dagger)^\intercal(D'\otimes\sqrt{P})(D'\otimes\sqrt{P})((\Phi_B O\Phi_B^\dagger)^\intercal)^\dagger]}\nonumber\\
    =&\sqrt{\tr[(\Phi_B O\Phi_B^\dagger)^\dagger(D'^2\otimes\overline{P})(\Phi_B O\Phi_B^\dagger)]}\nonumber\\
    =&\sqrt{\tr[(D'^2\otimes\overline{P})\Phi_B O\Phi_B^\dagger\Phi_B (O)^\dagger\Phi_B^\dagger]}\nonumber\\
    \le&\sqrt{\tr[(D'^2\otimes\overline{P})\Phi_B\Phi_B^\dagger]}\nonumber\\
    \le&\sqrt{\tr[(D'^2\otimes\overline{P})]}\nonumber\\
    =&\sqrt{\tr[{P}]}=\|\tilde{v}\|,
\end{align*}
    where the first inequality comes from $O$ being a contraction, and the second inequality comes from $\Phi_B\Phi_B^\dagger\le I_{\tilde{B}B'}$ and $D'^{2}\otimes\overline{P}\ge0$.

    \item for all $x$, $v_x$ and $\tilde{v}_x$ are close:
    \begin{align*}
        \|{v}_x-\tilde{v}_x\|\le&\|(\Phi_A A_x^\dagger\Phi_A^\dagger)(D'\otimes D)-D'\otimes\tilde{A}_x^\dagger D\|\|\sqrt{P^{-1}}D\|_\infty\nonumber\\
        =&\|(D'\otimes D)(\Phi_A A_x^\dagger\Phi_A^\dagger)-D'\otimes D\tilde{A}_x^\dagger\|\|\sqrt{P^{-1}}D\|_\infty\nonumber\\
        =&\|(\Phi_A A_x\Phi_A^\dagger)(D'\otimes D)-D'\otimes\tilde{A}_xD\|\|\sqrt{P^{-1}}D\|_\infty\nonumber\\
        =&\|(\Phi_A A_x\Phi_A^\dagger\otimes I)(\ket{\operatorname{aux}}\otimes\ket{\tilde{\psi}})-\ket{\operatorname{aux}}\otimes(\tilde{A}_x\otimes I\ket{\tilde{\psi}})\|\|\sqrt{P^{-1}}D\|_\infty\nonumber\\
        \le&(\|(\Phi_A A_x\Phi_A^\dagger\otimes I)\Phi[\ket{\psi}]-\ket{\operatorname{aux}}\otimes(\tilde{A}_x\otimes I\ket{\tilde{\psi}})\|+\varepsilon)\|\|\sqrt{P^{-1}}D\|_\infty\nonumber\\
        =&(\|\Phi [A_x\otimes I\ket{\psi}]-\ket{\operatorname{aux}}\otimes(\tilde{A}_x\otimes I\ket{\tilde{\psi}})\|+\varepsilon)\|\sqrt{P^{-1}}D\|_\infty\nonumber\\
        =&\frac{2\varepsilon}{\lambda_{\min}(D^{-1}PD^{-1})^{\frac{1}{2}}}=\frac{2\varepsilon}{\lambda_{\min}(Q)^{\frac{1}{2}}}.
    \end{align*}
    
    \item the inner products are close:
\begin{align*}
    &|\braket{v_x,v}-\braket{\tilde{v}_x,\tilde{v}}|\nonumber\\
    =&|\braket{(I\otimes \sqrt{P^{-1}}D)(\Phi_A A^\dagger_x\Phi_A^\dagger)(D'\otimes D),(D'\otimes\sqrt{P})(\Phi_B O\Phi_B^\dagger)}-\braket{\tilde{v}_x,\tilde{v}}|\nonumber\\
    =&|\braket{D'\otimes D,(\Phi_A A_x\Phi_A^\dagger)(I\otimes D\sqrt{P^{-1}})(D'\otimes\sqrt{P})(\Phi_B O\Phi_B^\dagger)}-\braket{\tilde{v}_x,\tilde{v}}|\nonumber\\
    =&|\braket{D'\otimes D,(\Phi_A A_x\Phi_A^\dagger)(D'\otimes D)(\Phi_B O\Phi_B^\dagger)}-\braket{\tilde{v}_x,\tilde{v}}|\nonumber\\
    =&|\braket{\ket{\operatorname{aux}}\otimes\ket{\tilde{\psi}},(\Phi_A A_x\Phi_A^\dagger\otimes\Phi_B O\Phi_B^\dagger)(\ket{\operatorname{aux}}\otimes\ket{\tilde{\psi}})}-\braket{\tilde{v}_x,\tilde{v}}|\nonumber\\
    \le&|\braket{\Phi[\ket{\psi}],(\Phi_A A_x\Phi_A^\dagger\otimes\Phi_B O\Phi_B^\dagger)\Phi[\ket{\psi}]}-\braket{\tilde{v}_x,\tilde{v}}|+2\varepsilon\nonumber\\
    =&|\braket{\ket{\psi},A_x\otimes O\ket{\psi}}-\braket{\tilde{v}_x,\tilde{v}}|+2\varepsilon\nonumber\\
    =&|\braket{\psi|A_x\otimes O|\psi}-\braket{\tilde{\psi}|\tilde{A}_x\otimes \tilde{O}|\tilde{\psi}}|+\varepsilon<\delta+2\varepsilon.
\end{align*}
\end{itemize}

So all the conditions of Lemma \ref{lem:inspacerobust} hold. By Lemma \ref{lem:inspacerobust}, we have
\begin{align*}
    \|v-\tilde{v}\|\le&\left(\frac{4n}{\lambda_{\min}(G')}\right)^{\frac{1}{4}}\left(\frac{2}{\lambda_{\min}(Q)^{\frac{1}{2}}}\varepsilon(\tr P)^{\frac{1}{2}}+\delta+2\varepsilon\right)^{\frac{1}{2}}(\tr P)^{\frac{1}{4}}\nonumber\\
    =&\left(\frac{4n\tr(P)\tr(Q)}{\lambda_{\min}(G)\lambda_{\min}(D)^2}\right)^{\frac{1}{4}}\left(\left(2\left(\frac{\tr P}{\lambda_{\min}(Q)}\right)^{\frac{1}{2}}+2\right)\varepsilon+\delta\right)^{\frac{1}{2}}\nonumber\\
    =&\left(\frac{4n(\tr Q)^2\lambda_{\max}(D)^2}{\lambda_{\min}(G)\lambda_{\min}(D)^2}\right)^{\frac{1}{4}}\left(\left(2\left(\frac{\tr Q}{\lambda_{\min}(Q)}\right)^{\frac{1}{2}}\lambda_{\max}(D)+2\right)\varepsilon+\delta\right)^{\frac{1}{2}}\nonumber\\
    =&\left(\frac{n}{\lambda_{\min}(G)}\right)^{\frac{1}{4}}(2(\tr Q)\kappa(D))^{\frac{1}{2}}\left(\left(2\left(\frac{\tr Q}{\lambda_{\min}(Q)}\right)^{\frac{1}{2}}\lambda_{\max}(D)+2\right)\varepsilon+\delta\right)^{\frac{1}{2}},
\end{align*}
which implies
\begin{align*}
    &\|\Phi[I\otimes O\ket{\psi}]-\ket{\operatorname{aux}}\otimes (I_{\tilde{A}}\otimes\tilde{O}\ket{\tilde{\psi}})\|\nonumber\\
    \le&\|I_{\tilde{A}A'}\otimes\Phi_B O\Phi_B^\dagger(\ket{\operatorname{aux}}\otimes\ket{\tilde{\psi}})-\ket{\operatorname{aux}}\otimes (I_{\tilde{A}}\otimes \tilde{O}\ket{\tilde{\psi}})\|+\varepsilon\nonumber\\
    =&\|(D'\otimes D)(\Phi_B O\Phi_B^\dagger)^\intercal-(D'\otimes D)(I_{B'}\otimes\tilde{O}^\intercal)\| +\varepsilon\nonumber\\
    \le&\|v-\tilde{v}\|\|\sqrt{P^{-1}}D\|_\infty+\varepsilon\nonumber\\
    \le&\left(\frac{n}{\lambda_{\min}(G)}\right)^{\frac{1}{4}}\left(2\frac{\tr Q}{\lambda_{\min}(Q)}\kappa(D)\right)^{\frac{1}{2}}\left(\left(2\left(\frac{\tr Q}{\lambda_{\min}(Q)}\right)^{\frac{1}{2}}\lambda_{\max}(D)+2\right)\varepsilon+\delta\right)^{\frac{1}{2}}+\varepsilon.
\end{align*}
\end{proof}

A few remarks of Proposition \ref{proposition:eachkrobust}:
\begin{enumerate}
    \item If we fix $P=I$, then the criterion of Proposition \ref{proposition:eachkrobust} reduces to $\Bar{\tilde{O}}\in\operatorname{span}\{D\tilde{A}_xD\}$, which is foreseeable from the fact that $\braket{\tilde{\psi}|\tilde{A}_x\otimes \tilde{O}|\tilde{\psi}}=\tr[D\tilde{A}_xD\tilde{O}^\intercal]=\braket{D\tilde{A}_x^\dagger D,\tilde{O}^\intercal}$. Our result however, allows for more general $\tilde{O}$ than just the linear combinations of $\{D\tilde{A}_xD\}$.
    \item For small $\varepsilon,\delta$ we have $\varepsilon'=O(\sqrt{C\varepsilon+\delta})$. If the initial strategy has explicit $\varepsilon-\delta$ dependence, by Proposition \ref{proposition:eachkrobust} the extended strategy will also have explicit robustness.
    \item In the mirror case where we have additional unitary $\tilde{O}$ on Alice's side and Bob's unitaries are $\tilde{B}_y$, the criterion is similar:
\begin{align*}
\Bar{\tilde{O}}P_j\in\operatorname{span}_{\mathbb{C}}\{D{\tilde{B}}_yD\}.
\end{align*}
    \item In Eq. \eqref{eq:delta}, $\kappa(D)$ and $\lambda_{\max}(D)$ imply that more entanglement enables more robustness, which is intuitive: imagine that $\ket{\tilde{\psi}}$ is weakly entangled (which leads to a large $\kappa(D)$), then Alice and Bob are so weakly correlated that we cannot control $O$ from $\{A_x^{(j)}\}$.
\end{enumerate}

Now we are ready to provide a sufficient condition for $\tilde{O}$ being post-hoc self-tested based on $(\ket{\tilde{\psi}},\tilde{A}_{x}\}$. If the condition in Proposition \ref{proposition:eachkrobust} is satisfied for all powers of a generalized observable $\tilde{O}$ as required by Definition \ref{def:local_dilationobs}, we immediately have the following criterion:

\begin{theorem}
An additional $L$-output projective measurement, characterized by observable $\{\tilde{O}\}$, is robustly post-hoc self-tested based on a robustly self-tested initial observables $\{\tilde{A}_{x}\}$ and initial state $\ket{\tilde{\psi}}_{\tilde{A}\tilde{B}}$, if there exist positive definite operators $P_l>0$ such that
\begin{align*}
\Bar{\tilde{O}}^{l}P_l\in\operatorname{span}_{\mathbb{C}}\{D{\tilde{A}}^{j}_xD:x,j\},
\end{align*}
for every $l\in[0,L-1]$. Here $D=\operatorname{vec}^{-1}(\ket{\tilde{\psi}})=\operatorname{diag}(\lambda_1,...,\lambda_{d})$, where $\lambda_j$ are Schmidt coefficients of $\ket{\tilde{\psi}}$. Moreover, the $(\varepsilon',(\varepsilon,\delta))$ dependence of the robustness will be $\varepsilon'=O(\sqrt{C\varepsilon+\delta})$.
\label{thm:conditionUnifiedrob}
\end{theorem}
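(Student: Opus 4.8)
The plan is to reduce the statement to repeated application of Proposition \ref{proposition:eachkrobust}, one instance for each power of the additional observable. Since $\tilde{O}$ is the observable of an $L$-output projective measurement, it is a unitary of order $L$, so by the observable picture of Section \ref{sec:pre} its generalized observables are exactly its powers $\tilde{O}^{(l)}=\tilde{O}^l$ for $l\in[0,L-1]$. Consequently, to verify robust post-hoc self-testing in the sense of Definition \ref{def:posthocrob}, it suffices to produce, for each $l$, an approximate local dilation of the physical generalized observable $O^{(l)}$ onto the reference $\tilde{O}^l$ through the \emph{same} isometry $\Phi$ and auxiliary state that dilate the initial observables.

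First I would fix $l\in[0,L-1]$ and match the data to the hypotheses of Proposition \ref{proposition:eachkrobust}. The role of the reference unitary $\tilde{O}$ in the proposition is played by $\tilde{O}^l$, which is again unitary; the role of the physical contraction $O$ is played by the physical generalized observable $O^{(l)}$, which is a contraction by the general properties of generalized observables recalled in Section \ref{sec:pre}. The role of the collection of initial unitaries is played by the full family of powers $\{\tilde{A}_x^j:x,j\}$, each a unitary because $\tilde{A}_x$ is the observable of a projective measurement; call the number of these $n$. Because the initial observables are robustly self-tested, the required dilations $(\ket{\psi},A_x^{j}\otimes I)\xhookrightarrow[\varepsilon]{\Phi,\ket{\operatorname{aux}}}(\ket{\tilde{\psi}},\tilde{A}_x^{j}\otimes I)$ hold for all $x,j$ simultaneously with one common $\Phi$ and $\ket{\operatorname{aux}}$, and the bare state dilation $\ket{\psi}\xhookrightarrow[\varepsilon]{}\ket{\tilde{\psi}}$ is just the $j=0$ instance. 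Taking the positive definite operator of the proposition to be $P_l$, the span hypothesis $\Bar{\tilde{O}}^l P_l\in\operatorname{span}_{\mathbb{C}}\{D\tilde{A}_x^j D\}$ is precisely the assumption of the theorem, and the correlation-closeness hypothesis is exactly the $\delta$-approximation condition appearing in Definition \ref{def:posthocrob}.

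With all hypotheses met, Proposition \ref{proposition:eachkrobust} yields for each $l$ a local $\varepsilon'_l$-dilation $(\ket{\psi},I\otimes O^{(l)})\xhookrightarrow[\varepsilon'_l]{\Phi,\ket{\operatorname{aux}}}(\ket{\tilde{\psi}},I\otimes\tilde{O}^l)$, where $\varepsilon'_l$ is given by Eq. \eqref{eq:delta} with $Q$ replaced by $Q_l=D^{-1}P_l D^{-1}$. In particular each $\varepsilon'_l=O(\sqrt{C_l\,\varepsilon+\delta})$ for a constant $C_l$ depending only on $P_l$, $D$, the Gram matrix $G$ of the powers, and $n$. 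Taking $\varepsilon'=\max_l\varepsilon'_l$ and $C=\max_l C_l$ produces a single bound $\varepsilon'=O(\sqrt{C\varepsilon+\delta})$ valid for every $l$ at once, which is the claimed robustness; since this can be driven below any prescribed tolerance by shrinking $\varepsilon$ and $\delta$, the quantifier structure of Definition \ref{def:posthocrob} is satisfied and $\{\tilde{O}\}$ is robustly post-hoc self-tested.

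As the text already signals, the argument is essentially immediate once the matching is arranged, so the only point requiring genuine care is conceptual rather than computational: Proposition \ref{proposition:eachkrobust} must be applied with the physical side being the generalized observable $O^{(l)}$ treated as a \emph{generic contraction}, not as a power $\big(O^{(1)}\big)^l$. This is exactly what allows the argument to go through without assuming the physical strategy is projective or real. The remaining work — folding the $L$ per-power estimates into one $\varepsilon'$ with the stated order — is the trivial maximum taken above.
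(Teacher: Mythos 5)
Your proposal is correct and follows essentially the same route as the paper's own (very terse) proof: apply Proposition \ref{proposition:eachkrobust} once per power $l$, with $\tilde{O}^l$ as the reference unitary, the physical generalized observable $O^{(l)}$ as a generic contraction, and the powers $\{\tilde{A}_x^j\}$ as the initial unitaries, then take the maximum of the resulting $\varepsilon'_l$. Your additional observations --- that the state dilation is the $j=0$ instance and that $O^{(l)}$ must not be assumed to equal $(O^{(1)})^l$ --- are exactly the details the paper leaves implicit.
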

\begin{proof}
For every $l\in[0,L-1]$, note that $\tilde{A}_x^{j}$, $\tilde{O}^{l}$ are unitaries, $A_x^{(j)}$, $O^{(l)}$ are contractions, so we can apply Proposition \ref{proposition:eachkrobust} to get $\varepsilon'_l=O(\sqrt{C\varepsilon+\delta})$ by Eq. \eqref{eq:delta}. Taking $\varepsilon'=\max_l\{\varepsilon'_l\}=O(\sqrt{C\varepsilon+\delta})$ then gives the desired conclusion.
\end{proof}

Given concrete $\ket{\tilde{\psi}},\{\tilde{A}_x),\tilde{O}$, the condition $\tilde{O}^{l}P_l\in\operatorname{span}_{\mathbb{C}}\{D{\tilde{A}}^{j}_xD\}$ can be determined via a feasibility semidefinite program (SDP). Moreover, since $P_l$ has the freedom in scaling and $Q_l=D^{-1}P_lD^{-1}$ is positive definite, we can without loss of generality take $\lambda_{\min}(Q_l)=1$, and minimize $\tr Q_l$ by the following SDP to get a better robustness:
\begin{align*}
    \min\quad&\tr Q_l\nonumber\\
    \text{s. t.}\quad&\Bar{\tilde{O}}^{l}DQ_lD=\sum_{j,x}c_{j,x,l}D{\tilde{A}}^{j}_xD,\nonumber\\
    &Q_l\ge I,\nonumber\\
    &c_{j,x,l}\in\mathbb{C}\nonumber
\end{align*}
for every $k$ individually. Also note that Theorem \ref{thm:conditionUnifiedrob} does not assume measurements to be real, so it works for observables of complex reference measurements as well.

\subsection{A closed-form criterion for binary observables}

While the condition in Theorem \ref{thm:conditionUnifiedrob} can be checked through a semidefinite program, the existential nature of it can make it cumbersome to work with in some applications. In order to address this issue, we present a closed-form variant of Theorem \ref{thm:conditionUnifiedrob} for the special case where $\tilde{A}_{x}$ and $\tilde{O}$ are binary measurements. This particular form not only facilitates the proof of our main theorem, but also proves useful in the context of iterative self-testing.

Let all the measurements in $\tilde{\mathcal{S}}$ be binary, \ie, $|\mathcal{O}_A|=|\mathcal{O}_B|=2$. Since $\tilde{A}_x$ and $\tilde{O}$ are now orthogonal matrices (as the projections are real), the condition from Theorem \ref{thm:conditionUnifiedrob} simplifies to
$$
    \tilde{O}P\in\operatorname{span}_{\mathbb{C}}\{D^2,D\tilde{A}_xD\}.
$$
(Note that $\tilde{O}^0P_0=P_0$ is always in the span by taking $P_0=D^{2}$.) Further, we can restrict ourselves in the real span of $\{D^2,D\tilde{A}_xD\}$: if $\tilde{O}P\in\operatorname{span}_{\mathbb{C}}\{D^2,D\tilde{A}_xD\}$, then $\tilde{O}\operatorname{Re}(P)\in\operatorname{span}_{\mathbb{R}}\{D^2,D\tilde{A}_xD\}$ where $\operatorname{Re}(P)$ is positive definite\footnote{$\operatorname{Re}(P)=\frac12(P+\Bar{P})$, where $P$ and $\Bar{P}$ are both positive definite.}. Thus it suffices to consider
\begin{equation}\label{eq:realspan}
    \tilde{O}P\in\operatorname{span}_{\mathbb{R}}\{D^2,D\tilde{A}_xD\},
\end{equation}
where $P$ is real and positive definite. 

Since every operator contained in $\operatorname{span}_{\mathbb{R}}\{D^2,D\tilde{A}_xD\}$ is real Hermitian (or symmetric), consider the following $\operatorname{sgn}$ map that takes real Hermitian matrices to real Hermitian matrices with eigenvalues $0,\pm1$, defined by
\begin{align}
\operatorname{sgn}:{H}(\mathbb{R})_d&\rightarrow{H}(\mathbb{R})_d\nonumber\\
H=\sum_j\lambda_j\proj{v_j}&\mapsto\operatorname{sgn}(H)=\sum_j\operatorname{sgn}(\lambda_j)\proj{v_j}\nonumber
\end{align}
where $(\ket{v_j}\}_j$ is an orthonormal basis of eigenvectors for $H$. That is, $\operatorname{sgn}$ is the extension of the sign function via functional calculus.
Then we show that the criterion Eq. \eqref{eq:realspan} is equivalent to that $\tilde{O}$ is in the image of $\operatorname{span}\{D^2,D\tilde{A}_xD\}$ via $\operatorname{sgn}$:
\begin{lemma}
Given $d$-dimensional orthogonal matrices $\tilde{O}$ and $\{\tilde{A}_x\}$, and $D=\operatorname{diag}(\{\lambda_j\})$ where $\lambda_j>0$ for $j\in[0,d-1]$. Then there exist a real positive definite $P$ such that
\begin{align*}
    \tilde{O}P\in\operatorname{span}_{\mathbb{R}}\{D^2,D\tilde{A}_xD),
\end{align*}
if and only if
\begin{align*}
    \tilde{O}\in\operatorname{sgn}(\operatorname{span}_{\mathbb{R}}\{D^2,D\tilde{A}_xD\}).
\end{align*}
\end{lemma}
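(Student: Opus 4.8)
The plan is to recognize the condition $\tilde{O}P\in\operatorname{span}_{\mathbb{R}}\{D^2,D\tilde{A}_xD\}$ as a statement about \emph{polar decompositions}, which directly ties the positive-definite factor $P$ and the orthogonal factor $\tilde{O}$ to the $\operatorname{sgn}$ map. Write $V:=\operatorname{span}_{\mathbb{R}}\{D^2,D\tilde{A}_xD\}$, which, as already noted in the text, consists entirely of real symmetric matrices (since $D$ is diagonal and each $\tilde{A}_x$ is symmetric), so that $\operatorname{sgn}$ is defined on all of $V$ and $\operatorname{sgn}(V)$ is meaningful. The single fact that drives everything is: for an \emph{invertible} real symmetric matrix $S$, its polar decomposition is $S=\operatorname{sgn}(S)\,|S|$, where $|S|:=(S^2)^{1/2}$ is positive definite and $\operatorname{sgn}(S)$ is symmetric orthogonal; moreover, by uniqueness of the polar decomposition of an invertible matrix, $\operatorname{sgn}(S)$ is the \emph{only} orthogonal $U$ admitting a factorization $S=U\cdot(\text{positive definite})$. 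All of this reads off the spectral decomposition $S=\sum_j\mu_j\proj{w_j}$, which gives $|S|=\sum_j|\mu_j|\proj{w_j}$ and $S|S|^{-1}=\sum_j\operatorname{sgn}(\mu_j)\proj{w_j}=\operatorname{sgn}(S)$.

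For the `if' direction I would assume $\tilde{O}=\operatorname{sgn}(H)$ for some $H\in V$. Since $\tilde{O}$ is orthogonal, hence invertible, $\operatorname{sgn}(H)$ has no zero eigenvalue, so $H$ is invertible. Then I simply set $P:=|H|=(H^2)^{1/2}$, which is real positive definite, and verify $\tilde{O}P=\operatorname{sgn}(H)\,|H|=H\in V$.

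For the `only if' direction I would assume there is a real positive definite $P$ with $S:=\tilde{O}P\in V$. As an element of $V$, the matrix $S$ is symmetric, and it is invertible since both $\tilde{O}$ and $P$ are. The equation $S=\tilde{O}P$ displays $S$ as a product of an orthogonal matrix and a positive-definite matrix, i.e.\ a polar decomposition of $S$; by uniqueness it coincides with $S=\operatorname{sgn}(S)\,|S|$, forcing $\tilde{O}=\operatorname{sgn}(S)$. Since $S\in V$, this yields $\tilde{O}\in\operatorname{sgn}(V)$.

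The computations are light; the main care is bookkeeping around invertibility and uniqueness. I expect the only genuine obstacle to be justifying that the factorization $S=\tilde{O}P$ may be identified with \emph{the} polar decomposition (which needs $S$ invertible, handled above) and that the orthogonal polar factor of a symmetric matrix is exactly its $\operatorname{sgn}$ image — both consequences of the displayed spectral decomposition. It is also worth flagging that the argument presupposes $V$ to be a space of symmetric matrices and $\tilde{O}$ to be symmetric orthogonal; the latter is automatic here, since any real binary observable is an involution, so $\operatorname{sgn}(V)$ and the statement are well posed.
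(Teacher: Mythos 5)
Your proof is correct, and it matches the paper in one direction while packaging the other direction differently. In the \emph{if} direction you and the paper do the same thing: the paper takes $P=\tilde{O}H=\operatorname{sgn}(H)H$, which is exactly your $|H|=(H^2)^{1/2}$, so the two are identical in substance. In the \emph{only if} direction the paper argues by hand: from $H=H^\intercal=(\tilde{O}P)^\intercal=P\tilde{O}$ it deduces that $\tilde{O},P,H$ commute, hence are simultaneously diagonalizable, and then matches eigenvalues ($o_jp_j=h_j$, $o_j=\pm1$, $p_j>0$ forces $o_j=\operatorname{sgn}(h_j)$). You instead invoke uniqueness of the polar decomposition of the invertible symmetric matrix $S=\tilde{O}P$, which yields $\tilde{O}=\operatorname{sgn}(S)$, $P=|S|$ in one stroke. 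The two arguments rest on the same underlying fact — the paper's computation is essentially a proof of polar uniqueness in this commuting, symmetric setting — but your route has a small technical advantage: the paper's transpose step $(\tilde{O}P)^\intercal=P\tilde{O}$ silently uses $\tilde{O}^\intercal=\tilde{O}$, i.e.\ that $\tilde{O}$ is a symmetric (involution-type) orthogonal matrix, whereas your computation $S^\intercal S=P\tilde{O}^\intercal\tilde{O}P=P^2$ needs only orthogonality of $\tilde{O}$, with symmetry of $\tilde{O}$ emerging as a \emph{consequence} ($\tilde{O}=\operatorname{sgn}(S)$). Your closing remark that the statement implicitly requires the $\tilde{A}_x$ (hence $V$) to consist of symmetric matrices, so that $\operatorname{sgn}$ is even defined on $V$, is an accurate reading of the hypotheses that the paper leaves tacit.
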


\begin{proof}
The `if' part: Let $\tilde{O}=\operatorname{sgn}(H)$ where $H\in\operatorname{span}_{\mathbb{R}}\{D^2,D\tilde{A}_xD\}$. Since $\tilde{O}$ is non-singular, $H$ is also non-singular. Then $\tilde{O}H=\operatorname{sgn}(H)H$ is positive definite. Take $P=\tilde{O}H$ then $\tilde{O}P=H\in\operatorname{span}_{\mathbb{R}}\{D^2,D\tilde{A}_xD\}$.

The `only if' part: Let $\tilde{O}P=H\in\operatorname{span}_{\mathbb{R}}\{D^2,D\tilde{A}_xD\}$, then $H=H^\intercal=(\tilde{O}P)^\intercal=P\tilde{O}$. So $\tilde{O}$, $H$, and $P$ commute, therefore are simultaneously diagonalizable. Let $\{b_j\}$, $\{p_j\}$, $\{h_j\}$ be the eigenvalues of $O,P,H$, respectively; then $o_jp_j=h_j\neq0$. Also note that $o_j=\pm1$ and $p_j>0$, so $p_j=|h_j|$ and $o_j=h_j/|h_j|=\operatorname{sgn}(h_j)$. Therefore $\tilde{O}=\operatorname{sgn}(H)$.
\end{proof}

And the equivalent criterion for post-hoc selt-testing binary observables follows immediately:

\begin{proposition}
An additional binary (2-output) $d$-dimensional observable $\tilde{O}$ is robustly post-hoc self-tested based on robustly self-tested initial binary observables $\{\tilde{A}_{x}\}$ and initial state $\ket{\tilde{\psi}}_{\tilde{A}\tilde{B}}$, if 
\begin{align*}
    \tilde{O}\in\operatorname{sgn}(\operatorname{span}_{\mathbb{R}}\{D^2,D\tilde{A}_xD:x\}),
\end{align*}
where $D=\operatorname{vec}^{-1}(\ket{\tilde{\psi}})$, and $\operatorname{sgn}$ maps real Hermitian matrices to real Hermitian matrices, defined by
\begin{align}
\operatorname{sgn}:{H}(\mathbb{R})_d&\rightarrow{H}(\mathbb{R})_d\nonumber\\
H=\sum_j\lambda_j\proj{v_j}&\mapsto\operatorname{sgn}(H)=\sum_j\operatorname{sgn}(\lambda_j)\proj{v_j}.\nonumber
\end{align}
Moreover, the $\varepsilon'-(\varepsilon,\delta)$ dependence of the robustness will be $\varepsilon'=O(\sqrt{C\varepsilon+\delta})$.
\label{prop:conditionbinaryrob}
\end{proposition}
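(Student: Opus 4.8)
The plan is to derive Proposition \ref{prop:conditionbinaryrob} as an immediate corollary of the preceding Lemma together with Theorem \ref{thm:conditionUnifiedrob}, specialized to the binary case. First I would observe that in the binary setting every observable is an order-$2$ unitary with real entries, hence an orthogonal matrix satisfying $\tilde{O}^2=I$; consequently the generalized observable $\{\tilde{O}^{(l)}\}$ has only the two powers $\tilde{O}^0=I$ and $\tilde{O}^1=\tilde{O}$ to consider, and likewise $\{\tilde{A}_x^{(j)}\}$ reduces to $\{I,\tilde{A}_x\}$. The hypothesis of Theorem \ref{thm:conditionUnifiedrob} thus collapses to requiring positive definite $P_0,P_1$ with $\overline{\tilde{O}}^{\,0}P_0=P_0\in\operatorname{span}_{\mathbb{C}}\{D\tilde{A}_x^jD\}$ and $\overline{\tilde{O}}P_1\in\operatorname{span}_{\mathbb{C}}\{D\tilde{A}_x^jD\}$.

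The $l=0$ case is free: since $\tilde{A}_x^0=I$ contributes $D\cdot I\cdot D=D^2$ to the span, I can simply take $P_0=D^2$, which is positive definite, so the $l=0$ condition holds automatically. The only substantive requirement is therefore the $l=1$ condition. Here I would use the reality of the binary measurements (the projections, and hence the observables, have real entries in the Schmidt basis) to note that $\overline{\tilde{O}}=\tilde{O}$ and that the spanning set is $\{D^2, D\tilde{A}_xD:x\}$ with real matrices. This reduces the $l=1$ condition to the existence of a positive definite $P$ with $\tilde{O}P\in\operatorname{span}_{\mathbb{C}}\{D^2,D\tilde{A}_xD\}$, which (via the real-part argument already given in the text preceding Eq. \eqref{eq:realspan}) can be taken over the real span, i.e. to the condition in Eq. \eqref{eq:realspan}.

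Finally I would invoke the Lemma, which establishes that the existence of a real positive definite $P$ with $\tilde{O}P\in\operatorname{span}_{\mathbb{R}}\{D^2,D\tilde{A}_xD\}$ is \emph{equivalent} to the closed-form membership $\tilde{O}\in\operatorname{sgn}(\operatorname{span}_{\mathbb{R}}\{D^2,D\tilde{A}_xD\})$. Chaining these equivalences: the $\operatorname{sgn}$-membership hypothesis of the Proposition yields (by the Lemma) a positive definite $P$ satisfying Eq. \eqref{eq:realspan}, which supplies the $l=1$ instance of Theorem \ref{thm:conditionUnifiedrob} while $P_0=D^2$ supplies the $l=0$ instance; Theorem \ref{thm:conditionUnifiedrob} then certifies that $\tilde{O}$ is robustly post-hoc self-tested with the stated $\varepsilon'=O(\sqrt{C\varepsilon+\delta})$ robustness. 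I do not anticipate a serious obstacle here, since all the analytic content lives in Proposition \ref{proposition:eachkrobust} and the Lemma; the main thing to be careful about is bookkeeping the reduction of the generalized-observable condition (over all powers $l$) to just the two cases $l\in\{0,1\}$ and verifying that passing from the complex span to the real span preserves positive definiteness of the witness, both of which are handled by the reality assumption and the footnoted $\operatorname{Re}(P)$ argument.
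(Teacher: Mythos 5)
Your proposal is correct and follows essentially the same route as the paper: specialize Theorem \ref{thm:conditionUnifiedrob} to the binary case (only $l\in\{0,1\}$, with $P_0=D^2$ handling $l=0$), reduce the complex span to the real span, and invoke the preceding lemma to replace the existential condition $\tilde{O}P\in\operatorname{span}_{\mathbb{R}}\{D^2,D\tilde{A}_xD\}$ by the closed-form $\operatorname{sgn}$-membership, inheriting the $\varepsilon'=O(\sqrt{C\varepsilon+\delta})$ robustness. This matches the paper's argument, which it states follows ``immediately'' from that lemma and the discussion around Eq.~\eqref{eq:realspan}.
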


\section{Iterative self-testing I: self-testing of arbitrary real projective measurements}
\label{sec:d+1}

Now we introduce the technique of iterative self-testing, by which we show how to self-test arbitrary real projective measurements. From now we restrict to reference strategies with binary observables and a maximally entangled initial state $\ket{\tilde{\psi}}=\ket{\Phi_d}=\sum_j\ket{jj}/\sqrt{d}$. In this case, the criterion in Proposition \ref{prop:conditionbinaryrob} reduces to $\tilde{O}\in\operatorname{sgn}(\operatorname{span}\{I,\tilde{A}_x\})$, because $D=1/\sqrt{d}I$ is proportional to the identity matrix. 

Given initial strategy $\tilde{\mathcal{S}}=(\Phi_d,\{\tilde{A}_{x}\},\{\tilde{B}_{y}\})$, if we post-hoc self-test $\tilde{O}\in\operatorname{sgn}(\operatorname{span}\{I,\tilde{A}_x\})$ on Bob's side, then we can use $\{\tilde{B}_y,\tilde{O}\}$ to post-hoc self-test another measurement $\tilde{O}'\in\operatorname{sgn}(\operatorname{span}\{I,\tilde{B}_y,\tilde{O}\})$ for Alice. By doing this in several rounds, starting from a small set of observables $\{\tilde{A}_x\}$ we may eventually self-test many additional observables. We call this process \emph{iterative self-testing}.

We visualize the extension of the correlation table to help better understand iterative self-testing. Let the initial binary observables to be $\{\tilde{A}_x),\{\tilde{B}_y\}$, and the initial state $\ket{\tilde{\psi}}=\ket{\Phi_d}$ is maximally entangled. Then the correlation generated Table \ref{table:c} self-tests the initial strategy $\tilde{\mathcal{S}}$. Recall that the condition from Proposition \ref{prop:conditionbinaryrob} reduces to $\tilde{O}\in\operatorname{sgn}(\operatorname{span}\{I,\tilde{A}_x\})$. Now consider an additional binary observable $\tilde{O}$ such that $\tilde{O}\not\in\operatorname{sgn}(\operatorname{span}\{I,\tilde{A}_x\})$ but $\tilde{O}\in \operatorname{sgn}(\operatorname{span}\{\operatorname{sgn}(\operatorname{span}\{I,\tilde{A}_x\})\})$. Since $\tilde{O}\not\in \operatorname{sgn}(\operatorname{span}\{I,\tilde{A}_x\})$ we do not know whether it is post-hoc self-tested by correlation $\{\braket{\tilde{\psi}|\tilde{A}^{(j)}_x\otimes \tilde{O}^{(k)}|\tilde{\psi}}\}$ based on $\{\tilde{A}_x\}$. Nevertheless, given $\tilde{O}\in \operatorname{sgn}(\operatorname{span}\{\operatorname{sgn}(\operatorname{span}\{I,\tilde{A}_x\})\})$ we can do the following: take the fewest binary observables $\tilde{B}_{|\mathcal{I}_B|},...,\tilde{B}_{Y'-1}\in \operatorname{sgn}(\operatorname{span}\{I,\tilde{A}_x\})$ such that $\operatorname{span}\{I,\tilde{B}_{0},...,\tilde{B}_{Y'-1}\}=\operatorname{span}\{\operatorname{sgn}(\operatorname{span}\{I,\tilde{A}_x\})\}$. Then the correlation Table \ref{table:c} will self-test the corresponding strategy, because the white part tests $\tilde{\mathcal{S}}$, and the green part tests the additional binary observables $\tilde{B}_{|\mathcal{I}_B|},...,\tilde{B}_{Y'-1}\in \operatorname{sgn}(\operatorname{span}\{I,\tilde{A}_x\})$. Now, add $\tilde{O}$ as a new row in the Table \ref{table:d}. Because $\tilde{O}\in \operatorname{sgn}(\operatorname{span}\{\operatorname{sgn}(\operatorname{span}\{I,\tilde{A}_x\})\})$, the yellow part of correlation the Table \ref{table:d} (iteratively) post-hoc self-tests $\tilde{O}$. Thus the correlation Table \ref{table:d} self-tests the extended strategy including $\tilde{O}$. Evidently, via this construction, the size of the correlation table has the trivial upper bound $\frac{d(d+1)}{2}\times \frac{d(d+1)}{2}$ regardless of the number of iterations.

\begin{table}[ht]
\small
\centering
\begin{tabular}{c|c|c|c|c|c|c|c|}
 & $I$ & $\tilde{B}_0$ & ... & $\tilde{B}_{Y-1}$ & \cellcolor{green}{$\tilde{B}_{Y}$} & \cellcolor{green}{...} & \cellcolor{green}{$\tilde{B}_{Y'-1}$} \\
\hline
$I$ & - & $\braket{I, \tilde{B}_0}_{\tilde{\psi}}$ & ... & $\braket{I, \tilde{B}_{Y-1}}_{\tilde{\psi}}$ & \cellcolor{green}{$\braket{I, \tilde{B}_{{Y}}}_{\tilde{\psi}}$} & \cellcolor{green}{...} & \cellcolor{green}{$\braket{I, \tilde{B}_{{Y'-1}}}_{\tilde{\psi}}$}\\
\hline
$\tilde{A}_0$ & $\braket{\tilde{A}_0, I}_{\tilde{\psi}}$ & $\braket{\tilde{A}_0, \tilde{B}_0}_{\tilde{\psi}}$ & ... & $\braket{\tilde{A}_0, \tilde{B}_{Y-1}}_{\tilde{\psi}}$ & \cellcolor{green}{$\braket{\tilde{A}_0, \tilde{B}_{{Y}}}_{\tilde{\psi}}$} & \cellcolor{green}{...} & \cellcolor{green}{$\braket{\tilde{A}_0, \tilde{B}_{{Y'-1}}}_{\tilde{\psi}}$} \\
\hline
... & ... & ... & ... & ... & \cellcolor{green}{...} & \cellcolor{green}{...} & \cellcolor{green}{...} \\
\hline
$\tilde{A}_{X-1}$ & $\braket{\tilde{A}_{X-1}, I}_{\tilde{\psi}}$ & $\braket{\tilde{A}_{X-1}, \tilde{B}_0}_{\tilde{\psi}}$ & ... & $\braket{\tilde{A}_{X-1}, \tilde{B}_{Y-1}}_{\tilde{\psi}}$ & \cellcolor{green}{$\braket{\tilde{A}_{X-1}, \tilde{B}_{{Y}}}_{\tilde{\psi}}$} & \cellcolor{green}{...} & \cellcolor{green}{$\braket{\tilde{A}_{X-1}, \tilde{B}_{{Y'-1}}}_{\tilde{\psi}}$} \\
\hline
 \end{tabular}
\caption{Extended correlation table in the first iteration. We take $X=|\mathcal{I}_A|$, $Y=|\mathcal{I}_B|$, and $Y'=|\mathcal{I}'_B|$ for convenience.}
\label{table:c}
\end{table}

\begin{table}[ht]
\small
\centering
\begin{tabular}{c|c|c|c|c|c|c|c|}
 & $I$ & $\tilde{B}_1$ & ... & $\tilde{B}_{Y-1}$ & \cellcolor{green}{$\tilde{B}_{{Y}}$} & \cellcolor{green}{...} & \cellcolor{green}{$\tilde{B}_{{Y'-1}}$} \\
\hline
$I$ & - & $\braket{I, \tilde{B}_1}_{\tilde{\psi}}$ & ... & $\braket{I, \tilde{B}_{Y-1}}_{\tilde{\psi}}$ & \cellcolor{green}{$\braket{I, \tilde{B}_{{Y}}}_{\tilde{\psi}}$} & \cellcolor{green}{...} & \cellcolor{green}{$\braket{I, \tilde{B}_{{Y'-1}}}_{\tilde{\psi}}$}\\
\hline
$\tilde{A}_1$ & $\braket{\tilde{A}_1, I}_{\tilde{\psi}}$ & $\braket{\tilde{A}_1, \tilde{B}_1}_{\tilde{\psi}}$ & ... & $\braket{\tilde{A}_1, \tilde{B}_{Y-1}}_{\tilde{\psi}}$ & \cellcolor{green}{$\braket{\tilde{A}_1, \tilde{B}_{{Y}}}_{\tilde{\psi}}$} & \cellcolor{green}{...} & \cellcolor{green}{$\braket{\tilde{A}_1, \tilde{B}_{{Y'-1}}}_{\tilde{\psi}}$} \\
\hline
... & ... & ... & ... & ... & \cellcolor{green}{...} & \cellcolor{green}{...} & \cellcolor{green}{...} \\
\hline
$\tilde{A}_{X-1}$ & $\braket{\tilde{A}_{X-1}, I}_{\tilde{\psi}}$ & $\braket{\tilde{A}_{X-1}, \tilde{B}_1}_{\tilde{\psi}}$ & ... & $\braket{\tilde{A}_{X-1}, \tilde{B}_{Y-1}}_{\tilde{\psi}}$ & \cellcolor{green}{$\braket{\tilde{A}_{X-1}, \tilde{B}_{{Y}}}_{\tilde{\psi}}$} & \cellcolor{green}{...} & \cellcolor{green}{$\braket{\tilde{A}_{X-1}, \tilde{B}_{{Y'-1}}}_{\tilde{\psi}}$} \\
\hline
\rowcolor{yellow} $\tilde{O}$ & $\braket{\tilde{O}, I}_{\tilde{\psi}}$ & $\braket{\tilde{O}, \tilde{B}_1}_{\tilde{\psi}}$ & ... & $\braket{\tilde{O}, \tilde{B}_{Y-1}}_{\tilde{\psi}}$ & $\braket{\tilde{O}, \tilde{B}_{{Y}}}_{\tilde{\psi}}$ & ... & $\braket{\tilde{O}, \tilde{B}_{{Y'-1}}}_{\tilde{\psi}}$ \\
\hline
 \end{tabular}
\caption{Extended correlation table in the second iteration.}
\label{table:d}
\end{table} 

\subsection{Self-testing arbitrary real observable}
In \cite{Laura2021Constant}, the authors considered a set of projections summing up to a proportion of $I$, and showed that the strategy consisting of those projections and the maximally entangled state can be self-tested by the correlation it generates. Here we employ one of those strategies with a specific construction. It turns out that, with the initial strategy we chose, in two iterations we will be able to self-test arbitrary binary projective measurement using the iterative scheme.

Consider $d+1$ unit vectors $v_0,\dots,v_{d}\in\mathbb{R}^{d}$ which form the vertices of a regular $(d+1)$-simplex centered at the origin. Note that
\begin{equation}\label{e:simplex}
v_x^\dagger v_{x'}=-\frac1d
\end{equation}
for $x\neq x'$. Define forms $d+1$ binary observables 
$$\tilde{T}_x:=2v_xv_x^\dagger-I.$$ 

The code in Mathematica for generating the observables is provided in {Appendix \ref{sec:appB}}. According to \cite{Laura2021Constant} the following strategy containing $\tilde{T}_x$ and the maximally entangled state is robustly self-tested:
\begin{corollary}
By Theorem 6.10 in \cite{Laura2021Constant}, the strategy $\tilde{\mathcal{S}}^{(0)}=(\ket{\Phi_d},\{\tilde{T}_x\}_{x=0}^{d},\{\tilde{T}_y\}_{y=0}^{d})$ is robustly self-tested by the correlation it generates.
\label{cor:d+1}
\end{corollary}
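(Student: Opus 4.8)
The plan is to read Corollary~\ref{cor:d+1} as a direct instantiation of Theorem~6.10 of \cite{Laura2021Constant}, so the real work is to verify that the simplex construction $\tilde{T}_x = 2v_xv_x^\dagger - I$ together with $\ket{\Phi_d}$ falls within the hypotheses under which that theorem guarantees a robust self-test. First I would pass from the observable picture to the projective-measurement picture: each binary observable $\tilde{T}_x$ corresponds to the rank-one projection $P_x = \tfrac12(I+\tilde{T}_x) = v_xv_x^\dagger$ onto a simplex vertex, together with its complement $I-P_x$. Thus, in both parties' laboratories, the strategy measures the family of rank-one projections $\{v_xv_x^\dagger\}_{x=0}^d$ on the maximally entangled state $\ket{\Phi_d}$, which is precisely the type of strategy---rank-one projections summing to a scalar multiple of the identity---treated in \cite{Laura2021Constant}.

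Second, I would establish the two defining algebraic properties of the simplex frame. The pairwise overlaps $v_x^\dagger v_{x'} = -\tfrac1d$ for $x\neq x'$ are already recorded in Eq.~\eqref{e:simplex}. For the tightness relation, I would use that $\sum_{x=0}^d v_xv_x^\dagger$ is invariant under the symmetry group $S_{d+1}$ of the regular simplex, which acts irreducibly on $\mathbb{R}^d$; by Schur's lemma the sum is therefore a scalar, and taking the trace with $\|v_x\|=1$ fixes it, giving
\[
\sum_{x=0}^d v_xv_x^\dagger = \frac{d+1}{d}\, I .
\]
I would then compute the reference correlation via the identity $\braket{\Phi_d|M\otimes N|\Phi_d} = \tfrac1d\,\tr[MN^\intercal]$, which for the real symmetric projections yields $\braket{\Phi_d|P_x\otimes P_y|\Phi_d} = \tfrac1d\,|v_x^\dagger v_y|^2$, equal to $\tfrac1d$ when $x=y$ and $\tfrac{1}{d^3}$ when $x\neq y$, and confirm that this is exactly the correlation for which \cite{Laura2021Constant} proves rigidity.

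Finally, I would verify the remaining non-degeneracy hypotheses of Theorem~6.10 and then invoke it verbatim. In particular, the $\ast$-algebra generated by the $\{v_xv_x^\dagger\}$ is all of $M_d(\mathbb{R})$: products such as $v_0v_0^\dagger\, v_1v_1^\dagger = -\tfrac1d\, v_0v_1^\dagger$ produce the off-diagonal rank-one operators $v_iv_j^\dagger$, and since the $v_x$ span $\mathbb{R}^d$ these span the whole matrix algebra, so the reference strategy is irreducible and the state is maximally entangled of full Schmidt rank, as required. Robustness with the explicit $\varepsilon$--$\delta$ dependence is then inherited directly from the conclusion of Theorem~6.10. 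I expect the main obstacle to be not a new argument but careful bookkeeping: matching our normalization and indexing conventions to those of \cite{Laura2021Constant} so that the simplex frame is seen to satisfy their hypotheses \emph{exactly}, in particular checking that the single free parameter of their admissible family (the common overlap, equivalently the frame constant $\tfrac{d+1}{d}$) is in the range to which their robustness analysis applies.
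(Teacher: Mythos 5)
Your proposal is correct and takes essentially the same approach as the paper: the paper offers no separate proof, treating the corollary as a direct instantiation of Theorem~6.10 of \cite{Laura2021Constant} applied to the simplex construction, exactly as you do. Your explicit verification of the hypotheses --- the passage to rank-one projections $P_x=\tfrac12(I+\tilde{T}_x)=v_xv_x^\dagger$, the tight-frame identity $\sum_{x=0}^{d}v_xv_x^\dagger=\tfrac{d+1}{d}I$ (the ``projections summing to a proportion of $I$'' condition), the correlation values $\tfrac1d$ and $\tfrac{1}{d^3}$, and the fact that the $P_x$ generate all of $M_d(\mathbb{R})$ --- is sound and simply makes explicit what the paper leaves implicit.
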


Now take the strategy $\tilde{\mathcal{S}}^{(0)}$ in Corollary \ref{cor:d+1} as the initial strategy, and consider additional binary observables in the form of $\tilde{T}_{jk}:=\operatorname{sgn}(\tilde{T}_j+\tilde{T}_k)$ for $j\neq k$. By Proposition \ref{prop:conditionbinaryrob} they are robustly post-hoc self-tested. Specifically, we have the following extended strategy that is robust self-tested:
\begin{lemma}
Strategy $\tilde{\mathcal{S}}^{(1)}=(\ket{\tilde{\psi}},\{\tilde{A}_x\}_{x=0}^{d},\{\tilde{B}_y\}_{y=0}^{\frac{d(d+1)}{2}-1})$ is robustly self-tested, where
\begin{align}
    &\ket{\tilde{\psi}}=\ket{\Phi_d},\nonumber\\
    &\{\tilde{A}_x\}_{x=0}^{d}=\{\tilde{T}_x\}_{x=0}^{d},\nonumber\\
    &\{\tilde{B}_y\}_{y=0}^{d}=\{\tilde{T}_y\}_{y=0}^{d},~\{\tilde{B}_y\}_{y=d+1}^{\frac{d(d+1)}{2}-1}=\{\tilde{T}_{jk}:1\le j<k\le d\}\setminus\{\tilde{T}_{12}\}.\nonumber
\end{align}
\label{lem:oneiteration}
\end{lemma}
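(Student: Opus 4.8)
The plan is to build the strategy $\tilde{\mathcal{S}}^{(1)}$ out of the initial strategy $\tilde{\mathcal{S}}^{(0)}$ of Corollary \ref{cor:d+1} by a finite sequence of post-hoc extensions, one for each new observable $\tilde{T}_{jk}$, using the binary post-hoc criterion (Proposition \ref{prop:conditionbinaryrob}) to certify each extension and Proposition \ref{prop:extend} to splice the self-tests together. Since $\tilde{\mathcal{S}}^{(0)}=(\ket{\Phi_d},\{\tilde{T}_x\}_{x=0}^d,\{\tilde{T}_y\}_{y=0}^d)$ is robustly self-tested and the reference state is maximally entangled, so that $D=\tfrac1{\sqrt d}I$, the post-hoc criterion for an additional binary observable $\tilde{O}$ appended to Bob's side reduces (as recorded at the start of Section \ref{sec:d+1}) to the single membership condition $\tilde{O}\in\operatorname{sgn}(\operatorname{span}_{\mathbb{R}}\{I,\tilde{A}_x\})$ with $\{\tilde{A}_x\}=\{\tilde{T}_x\}_{x=0}^d$.

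The heart of the argument is to verify that each $\tilde{T}_{jk}:=\operatorname{sgn}(\tilde{T}_j+\tilde{T}_k)$, for $1\le j<k\le d$, is a bona fide binary observable lying in $\operatorname{sgn}(\operatorname{span}_{\mathbb{R}}\{I,\tilde{T}_x\})$. The membership is immediate, since $\tilde{T}_j+\tilde{T}_k$ is a real linear combination of two of the initial observables and hence lies in $\operatorname{span}_{\mathbb{R}}\{I,\tilde{T}_x\}$, so that its image under $\operatorname{sgn}$ lies in $\operatorname{sgn}(\operatorname{span}_{\mathbb{R}}\{I,\tilde{T}_x\})$. What must still be checked is that $\operatorname{sgn}$ is applied to a nonsingular matrix, so that $\tilde{T}_{jk}$ has spectrum $\{\pm1\}$ and is a genuine (symmetric, orthogonal) binary observable. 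I would do this by diagonalising $\tilde{T}_j+\tilde{T}_k=2(v_jv_j^\dagger+v_kv_k^\dagger)-2I$: it acts as $-2I$ on $\operatorname{span}\{v_j,v_k\}^\perp$, while on $\operatorname{span}\{v_j,v_k\}$ the nonzero eigenvalues of $v_jv_j^\dagger+v_kv_k^\dagger$ coincide with the eigenvalues $1\pm\tfrac1d$ of the $2\times2$ Gram matrix $\left(\begin{smallmatrix}1&-1/d\\-1/d&1\end{smallmatrix}\right)$, using Eq. \eqref{e:simplex}. Thus the spectrum of $\tilde{T}_j+\tilde{T}_k$ is $\tfrac2d$, $-\tfrac2d$, and $-2$ with multiplicity $d-2$, all nonzero; hence $\tilde{T}_{jk}$ is well defined. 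By Proposition \ref{prop:conditionbinaryrob} each $\tilde{T}_{jk}$ is therefore robustly post-hoc self-tested based on $(\ket{\Phi_d},\{\tilde{T}_x\})$, with dependence $\varepsilon'=O(\sqrt{C\varepsilon+\delta})$.

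It then remains to chain the extensions, which I would organise as an induction on the number of appended observables. The key point making the induction go through is that every $\tilde{T}_{jk}$ is post-hoc self-tested based on Alice's observables $\{\tilde{T}_x\}$, and Alice's observables are never altered as we append new observables to Bob's side; hence the membership condition, and with it the post-hoc certification, remains valid at every stage. Starting from the robustly self-tested $\tilde{\mathcal{S}}^{(0)}$, I would apply Proposition \ref{prop:extend} once for each of the $\tfrac{d(d-1)}2-1$ observables in $\{\tilde{T}_{jk}:1\le j<k\le d\}\setminus\{\tilde{T}_{12}\}$: each application enlarges Bob's observable set by one post-hoc self-tested observable and, by the conclusion of Proposition \ref{prop:extend}, outputs a robustly self-tested enlarged strategy, which is precisely the hypothesis needed for the next application. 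After all applications one obtains exactly $\tilde{\mathcal{S}}^{(1)}$, and since only finitely many steps are involved the explicit $\varepsilon'=O(\sqrt{C\varepsilon+\delta})$ robustness is preserved throughout.

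The only genuinely computational ingredient is the spectral analysis guaranteeing that $\tilde{T}_j+\tilde{T}_k$ is nonsingular; I expect this to be the main (though modest) obstacle, with the remainder being bookkeeping of the post-hoc criterion and repeated invocation of Proposition \ref{prop:extend}. I would also remark that the omission of $\tilde{T}_{12}$ plays no role in the present argument, since each listed observable is certified independently of the others; it is merely a counting convention that gives $\{I\}\cup\{\tilde{T}_x\}_{x=0}^d\cup(\{\tilde{T}_{jk}\}\setminus\{\tilde{T}_{12}\})$ the cardinality $\tfrac{d(d+1)}2$ needed for the span computation in the next iteration.
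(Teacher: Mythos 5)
Your proof is correct and follows essentially the same route as the paper's: the paper's own proof is a two-line argument citing Corollary \ref{cor:d+1}, the membership $\tilde{T}_{jk}\in\operatorname{sgn}(\operatorname{span}\{\tilde{T}_x\})$, and Propositions \ref{prop:conditionbinaryrob} and \ref{prop:extend}. The extra details you supply --- the spectral check that $\tilde{T}_j+\tilde{T}_k$ is nonsingular with eigenvalues $\pm 2/d$ and $-2$ (a computation the paper only records later, inside the proof of Lemma \ref{lem:spanall}), and the explicit induction chaining the post-hoc extensions one observable at a time while Alice's observables stay fixed --- are precisely what the paper's ``immediately'' glosses over.
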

\begin{proof}
Since $\tilde{\mathcal{S}}^{(0)}=(\ket{\tilde{\psi}},\{\tilde{A}_x\}_{x=0}^{d},\{\tilde{B}_y\}_{y=0}^{d})$ is robustly self-tested, and $\tilde{T}_{jk}\in\operatorname{sgn}(\operatorname{span}\{\tilde{T}_x\})$, by Propositions \ref{prop:conditionbinaryrob} and \ref{prop:extend} we immediately have that the strategy $\tilde{\mathcal{S}}^{(1)}$ is robustly self-tested by the correlation it generates.
\end{proof}

The extended strategy $\tilde{\mathcal{S}}^{(1)}=(\ket{\Phi_d},\{\tilde{A}_x\}_{x=0}^{d},\{\tilde{B}_y\}_{y=0}^{\frac{d(d+1)}{2}-1})$ introduces $d(d+1)/2-d-1$ additional binary observables to Bob that is post-hoc self-tested based on the initial strategy, which are in the form of $\operatorname{sgn}(\tilde{T}_j+\tilde{T}_k)$ (but not every $j\neq k$ is included). It turns out that the additional binary observables together with the $d+1$ initial ones span the space of all $d\times d$ symmetric matrices. To show this, we require the following lemma:

\begin{lemma}
For $d>2$, $\operatorname{span}_{\mathbb{R}}\{\tilde{T}_{jk}:j,k\in[0,d],j\neq k\}=H_d(\mathbb{R})$ the space of all $d$-dimensional symmetric matrices.
\label{lem:spanall}
\end{lemma}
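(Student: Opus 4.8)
The plan is to first turn the abstract description $\tilde{T}_{jk}=\operatorname{sgn}(\tilde{T}_j+\tilde{T}_k)$ into an explicit rank-structured matrix, and then to reduce the spanning claim to a statement about rank-one matrices built from the simplex edges. For the first step I would diagonalize $\tilde{T}_j+\tilde{T}_k=2(v_jv_j^\dagger+v_kv_k^\dagger)-2I$ inside the plane $\operatorname{span}\{v_j,v_k\}$. Using $v_j^\dagger v_{k}=-\tfrac1d$ from \eqref{e:simplex}, the vectors $v_j\pm v_k$ are eigenvectors of $v_jv_j^\dagger+v_kv_k^\dagger$ with eigenvalues $1\mp\tfrac1d$, so $\tilde{T}_j+\tilde{T}_k$ has eigenvalue $+\tfrac2d$ on $v_j-v_k$, eigenvalue $-\tfrac2d$ on $v_j+v_k$, and eigenvalue $-2$ on the orthogonal complement of the plane. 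Hence the unique positive eigendirection is $v_j-v_k$, and applying $\operatorname{sgn}$ gives $\tilde{T}_{jk}=2u_{jk}u_{jk}^\dagger-I$, where $u_{jk}:=(v_j-v_k)/\|v_j-v_k\|$. In other words each $\tilde{T}_{jk}$ is again a reflection, now about the simplex-edge direction $v_j-v_k$.

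With this in hand, note that each $\tilde{T}_{jk}\in H_d(\mathbb{R})$, so $\operatorname{span}\{\tilde{T}_{jk}\}\subseteq H_d(\mathbb{R})$ automatically and it remains to prove the reverse inclusion. I would do this in two moves. First I claim the rank-one matrices $\{u_{jk}u_{jk}^\dagger\}$ already span $H_d(\mathbb{R})$; since there are $\binom{d+1}{2}=\dim H_d(\mathbb{R})$ of them, this is equivalent to their linear independence, which I would establish by duality. Suppose $S\in H_d(\mathbb{R})$ is orthogonal to all of them, i.e. $(v_j-v_k)^\dagger S(v_j-v_k)=0$ for all $j\neq k$. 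Writing $s_{jk}:=v_j^\dagger S v_k$, this says $s_{jj}+s_{kk}-2s_{jk}=0$, i.e. $s_{jk}=\tfrac12(s_{jj}+s_{kk})$. Collecting these entries into the Gram-type matrix $V^\dagger S V$, where $V=[\,v_0\,|\,\cdots\,|\,v_d\,]$, and using that the simplex is centered so that $V\mathbf{1}=\sum_x v_x=0$, I would show the diagonal values $s_{jj}$ must all vanish, forcing $V^\dagger S V=0$; since $V$ has rank $d$ and is therefore surjective onto $\mathbb{R}^d$, this yields $S=0$.

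Finally I would recover the identity. A direct computation gives $\sum_{j<k}u_{jk}u_{jk}^\dagger=\tfrac{d+1}{2}I$ (using $\sum_x v_x=0$ and $\sum_x v_xv_x^\dagger=\tfrac{d+1}{d}I$), whence $\sum_{j<k}\tilde{T}_{jk}=\tfrac{(d+1)(2-d)}{2}I$. This is exactly where the hypothesis $d>2$ enters: the scalar is nonzero, so $I\in\operatorname{span}\{\tilde{T}_{jk}\}$, and then $u_{jk}u_{jk}^\dagger=\tfrac12(\tilde{T}_{jk}+I)\in\operatorname{span}\{\tilde{T}_{jk}\}$ for every pair. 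Combined with the first move this gives $\operatorname{span}\{\tilde{T}_{jk}\}\supseteq\operatorname{span}\{u_{jk}u_{jk}^\dagger\}=H_d(\mathbb{R})$, completing the proof. The main obstacle is the middle move — proving that the rank-one parts of the edge reflections span all symmetric matrices — and the crucial ingredient there is that the simplex is centered, $\sum_x v_x=0$. The case $d=2$ fails precisely because the scalar $\tfrac{(d+1)(2-d)}{2}$ vanishes, so $I$ can no longer be extracted and the three reflections become linearly dependent.
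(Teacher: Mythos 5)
Your proposal is correct, and while it opens with the same computation as the paper (diagonalizing $\tilde{T}_j+\tilde{T}_k$ on the plane $\operatorname{span}\{v_j,v_k\}$ to obtain $\tilde{T}_{jk}=2u_{jk}u_{jk}^\dagger-I$), the core linear-algebra step is genuinely different. The paper proves linear independence of the $\binom{d+1}{2}$ reflections head-on: it assumes $\sum_{j<k}c_{jk}\tilde{T}_{jk}=0$, multiplies by each vertex $v_l$, and uses $\sum_j v_j=0$ together with the linear independence of $\{v_j:j\neq l\}$ to force all $c_{jk}$ equal to a common value $c$, which a final scalar identity kills unless $d=2$; spanning then follows from the dimension count. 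You instead prove spanning directly by duality: any $S\in H_d(\mathbb{R})$ trace-orthogonal to every edge projector $u_{jk}u_{jk}^\dagger$ satisfies $s_{jk}=\tfrac12(s_{jj}+s_{kk})$ for the entries of $V^\dagger SV$, and $V\mathbf{1}=0$ makes the row sums vanish, giving $s_{jj}=-t/(d+1)$ with $t=\sum_j s_{jj}$, whence $t=-t=0$, so $V^\dagger SV=0$ and $S=0$ by surjectivity of $V$; the identity is then recovered from $\sum_{j<k}\tilde{T}_{jk}=\tfrac{(d+1)(2-d)}{2}I$, which also lets you write $u_{jk}u_{jk}^\dagger=\tfrac12(\tilde{T}_{jk}+I)$ inside the span. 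Your route buys a cleaner separation of concerns: the edge projectors span $H_d(\mathbb{R})$ for \emph{every} $d\ge2$, and the hypothesis $d>2$ enters only in extracting $I$ (at $d=2$ the three reflections sum to zero), which explains the failure of the lemma there more transparently than the paper's coefficient-chasing; the paper's route, in exchange, directly exhibits the $\tilde{T}_{jk}$ as a basis of $H_d(\mathbb{R})$ (your counting argument recovers this anyway). In a fully written version you should justify the tight-frame identity $\sum_x v_xv_x^\dagger=\tfrac{d+1}{d}I$ used in your final computation; it follows in one line from the Gram matrix $V^\dagger V=\tfrac{d+1}{d}I_{d+1}-\tfrac1d J$, whose nonzero eigenvalue $\tfrac{d+1}{d}$ has multiplicity $d$, so that $VV^\dagger=\tfrac{d+1}{d}I_d$.
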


\begin{proof}
Since $\dim H_d(\mathbb{R})=d(d+1)/2=\#\{\tilde{T}_{jk}:j,k\in[0,d],j\neq k\}$, it suffice to show that $\{\operatorname{sgn}(\tilde{T}_j+\tilde{T}_k):0\le j<k\le d\}$ is linearly independent.

Note that $\tilde{T}_j+\tilde{T}_k=2(v_jv_j^\dagger+v_kv_k^\dagger-I)$. Consider the two-dimensional subspace $\mathcal{H}_1=\operatorname{span}(v_j, v_k)$. Then $(\tilde{T}_j+\tilde{T}_k)|_{\mathcal{H}_1^\perp}=-2I$, and 
$$
(\tilde{T}_j+\tilde{T}_k)(v_j-v_k)=\frac{2}{d}(v_j-v_k),\quad (\tilde{T}_j+\tilde{T}_k)(v_j+v_k)=-\frac{2}{d}(v_j+v_k),
$$
so $(\tilde{T}_j+\tilde{T}_k)|_{\mathcal{H}_1}$ has eigenvalues $\pm 2/d$, and the (normalised) eigenvector corresponding to $2/d$ is $w_{jk}:=\sqrt{\frac{d}{2(d+1)}}(v_j-v_k)$. Hence, $\operatorname{sgn}(\tilde{T}_j+\tilde{T}_k)$ have eigenvalues $1$ with multiplicity $1$, and $-1$ with multiplicity $d-1$. Its eigenvector corresponding to $1$ is $w_{jk}=\sqrt{\frac{d}{2(d+1)}}(v_j-v_k)$. Therefore 
$\tilde{T}_{jk}=2w_{jk}w_{jk}^\dagger-I$.

Suppose $\sum_{j<k}c_{jk}\tilde{T}_{jk}=0$ for some real coefficients $c_{jk}$. Then
\begin{equation}
\label{e:lindep}
2\frac{d}{2(d+1)}\sum_{j< k}c_{jk}(v_j-v_k)(v_j^\dagger-v_k^\dagger)=\sum_{j<k}c_{jk}I.
\end{equation}
By Eq. \eqref{e:simplex} we have
$$(v_j^\dagger-v_k^\dagger)v_l =\left\{
\begin{array}{lr}
0 &  j,k\neq l\\
-\tfrac{1}{d}-1     & j\neq k=l\\
1+\tfrac{1}{d}     & l=j\neq k\\
\end{array}
\right.$$
Therefore multiplying Eq. \eqref{e:lindep} by $v_\ell$ on the left results in 
$$\frac{d}{d+1}\left(
\sum_{j<l}c_{jl}(-\tfrac{1}{d}-1)(v_j-v_l)+\sum_{l<k}c_{lk}(1+\tfrac{1}{d})(v_l-v_k)\right)=\sum_{j<k}c_{jk}v_l$$
and so
\begin{align*}
&\left(
\sum_{j<l}c_{jl}(v_l-v_j)+\sum_{l<k}c_{lk}(v_l-v_k)\right)=\sum_{j<k}c_{jk}v_l \\
\Rightarrow\quad&
\left(\sum_{j<l}c_{jl}+\sum_{l<k}c_{lk}-\sum_{j< k}c_{jk}\right)v_l-\sum_{j<l}c_{jl}v_j-\sum_{l<k}c_{lk}v_k=0.
\end{align*}
Since $\sum_j v_j=0$, we further have
$$\left(\sum_{j<l}c_{jl}+\sum_{l<j}c_{lj}-\sum_{j< k}c_{jk}\right)\sum_{j\neq l}v_j-\sum_{j<l}c_{jl}v_j-\sum_{l<j}c_{lj}v_j=0.$$
Since $\{v_j:j\neq l\}$ is linearly independent,
we see that $c_{jl}$ for $j<l$ are equal, and $c_{lj}$ for $l>j$ are equal; therefore $c_{jk}=:c$ for all $j<k$. Thus,
\begin{align*}
    &c\left(d-\frac{d(d+1)}{2}\right)v_l-c\sum_{j\neq l}v_j=0\nonumber\\
    \Rightarrow\quad&c\left(\frac{d-d^2}{2}v_l-\sum_{j\neq l}v_j\right)=0,
\end{align*}
which holds only when $c=0$ or $d=2$ or $d=-1$. So we conclude that $c_{jk}=0$ is the only solution for $\sum_{j<k}c_{jk}\tilde{T}_{jk}=0$ when $d>2$.
\end{proof}

Let $T=\{\tilde{T}_j:j\in[0,d]\}\cup\{\tilde{T}_{jk}:j,k\in[0,d],j\neq k\}$. By Lemma \ref{lem:spanall} we know that $\operatorname{span}_{\mathbb{R}}(T)=H_d(\mathbb{R})$. Note that $|T|=d+1+\frac{d(d+1)}{2}>\dim H_d(\mathbb{R})$. The following proposition gives a maximal linearly independent subset in $T$:
\begin{proposition}
Define $T=\{\tilde{T}_j:j\in[0,d]\}\cup\{\tilde{T}_{jk}:j,k\in[0,d],j\neq k\}$. Let $T'=T\setminus\{\tilde{O}_{0j}:j\in[1,d]\}$ and $T''=T'\setminus\{\tilde{O}_{12}\}$. Then $T''$ is a maximal linearly independent subset in $T$.
\end{proposition}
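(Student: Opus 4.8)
The plan is to reduce the statement to a dimension count plus a spanning argument. Since $\tilde{T}_{jk}=\tilde{T}_{kj}$, the set $T$ has $(d+1)+\binom{d+1}{2}=\frac{(d+1)(d+2)}{2}$ elements, and removing the $d$ observables $\{\tilde{T}_{0j}:j\in[1,d]\}$ together with $\tilde{T}_{12}$ leaves $|T''|=\frac{(d+1)(d+2)}{2}-(d+1)=\frac{d(d+1)}{2}=\dim H_d(\mathbb{R})$. By Lemma \ref{lem:spanall} we already have $\operatorname{span}_{\mathbb{R}}(T)=H_d(\mathbb{R})$, so any maximal linearly independent subset of $T$ is a basis of $H_d(\mathbb{R})$ and therefore has exactly $\frac{d(d+1)}{2}$ elements. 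Hence it suffices to prove that $T''$ spans $H_d(\mathbb{R})$ (equivalently, that $T''$ is linearly independent); I would do this by showing that each of the $d+1$ removed observables already lies in $\operatorname{span}_{\mathbb{R}}(T'')$.

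The key computational input is a ``star relation'' expressing the sum of all edges at a fixed vertex in terms of the vertex observable. Writing $\tilde{T}_{jk}=\frac{d}{d+1}(v_j-v_k)(v_j-v_k)^\dagger-I$ (from the proof of Lemma \ref{lem:spanall}) and using $\sum_j v_j=0$ together with the tight-frame identity $\sum_j v_jv_j^\dagger=\frac{d+1}{d}I$ (which follows from the simplex symmetry, or is checked on the trace), a short calculation gives
\begin{equation*}
\sum_{\substack{k\in[0,d]\\ k\neq j}}\tilde{T}_{jk}=\frac{d}{2}\,\tilde{T}_j-\frac{d-2}{2}\,I,\qquad j=0,1,\dots,d .
\end{equation*}
Summing the defining relation $\tilde{T}_j=2v_jv_j^\dagger-I$ over $j$ likewise yields $\sum_{j=0}^d\tilde{T}_j=\frac{(d+1)(2-d)}{d}I$, so that for $d>2$ we have $I\in\operatorname{span}_{\mathbb{R}}\{\tilde{T}_j\}\subseteq\operatorname{span}_{\mathbb{R}}(T'')$ (all vertex observables survive in $T''$). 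Consequently the right-hand side of every star relation lies in $\operatorname{span}_{\mathbb{R}}(T'')$.

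I would then read off the removed observables from these relations, treating the matrices as the unknowns of a linear system with scalar coefficients. For $j\geq 3$ the star relation $R_j$ contains exactly one removed edge, namely $\tilde{T}_{0j}$, so $\tilde{T}_{03},\dots,\tilde{T}_{0d}\in\operatorname{span}_{\mathbb{R}}(T'')$. Substituting these into $R_0$ gives $\tilde{T}_{01}+\tilde{T}_{02}\in\operatorname{span}_{\mathbb{R}}(T'')$, while $R_1$ and $R_2$ give $\tilde{T}_{01}+\tilde{T}_{12}$ and $\tilde{T}_{02}+\tilde{T}_{12}$ in $\operatorname{span}_{\mathbb{R}}(T'')$. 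The triple $(\tilde{T}_{01},\tilde{T}_{02},\tilde{T}_{12})$ thus satisfies a system whose coefficient matrix $\left(\begin{smallmatrix}1&1&0\\1&0&1\\0&1&1\end{smallmatrix}\right)$ has determinant $-2\neq 0$; inverting it expresses each of $\tilde{T}_{01},\tilde{T}_{02},\tilde{T}_{12}$ as a real combination of elements of $\operatorname{span}_{\mathbb{R}}(T'')$. Hence every removed observable lies in $\operatorname{span}_{\mathbb{R}}(T'')$, so $\operatorname{span}_{\mathbb{R}}(T'')=H_d(\mathbb{R})$, and the dimension count forces $T''$ to be a basis, i.e.\ a maximal linearly independent subset of $T$.

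The main thing to get right is the bookkeeping of which edges are removed inside each star relation: the argument hinges on the removed edges $\tilde{T}_{0j}$ ($j\in[1,d]$) and $\tilde{T}_{12}$ meeting the relations $R_0,R_1,R_2$ in a pattern whose associated $3\times 3$ scalar system is nonsingular, while for $j\geq 3$ the relation $R_j$ contains the single removed edge $\tilde{T}_{0j}$, which lets one peel off $\tilde{T}_{03},\dots,\tilde{T}_{0d}$ first. I also expect to need the standing hypothesis $d>2$ (as in Lemma \ref{lem:spanall}), both for $\operatorname{span}_{\mathbb{R}}(T)=H_d(\mathbb{R})$ and for the coefficient $\frac{(d+1)(2-d)}{d}$ of $I$ to be nonzero.
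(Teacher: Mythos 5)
Your proof is correct, and its skeleton—the count $|T''|=\frac{d(d+1)}{2}=\dim H_d(\mathbb{R})$, the appeal to Lemma \ref{lem:spanall}, and the reduction to showing that every removed observable lies in the span of what remains—is the same as the paper's. Where you genuinely diverge is in the computational core. The paper handles the removed edges by two separate ad hoc calculations: it writes $\tilde{T}_{0j}$ in $\operatorname{span}_{\mathbb{R}}(T')$ by expanding $v_0=-\sum_{k>0}v_k$ and expressing each cross term $v_xv_y^\dagger+v_yv_x^\dagger$ (for $x,y>0$) through $\tilde{P}_x,\tilde{P}_y,\tilde{T}_{xy},I$; and it recovers $\tilde{T}_{12}\in\operatorname{span}_{\mathbb{R}}(T'')$ by verifying the identity $\sum_{1\le j<k\le d}\tilde{T}_{jk}+dv_0v_0^\dagger+\frac{d(d-3)}{2}I=0$ on the basis $\{v_l\}_{l\in[1,d]}$. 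You instead derive one uniform family of star relations $\sum_{k\neq j}\tilde{T}_{jk}=\frac{d}{2}\tilde{T}_j-\frac{d-2}{2}I$ from the tight-frame identity $\sum_j v_jv_j^\dagger=\frac{d+1}{d}I$, after which the argument is pure incidence bookkeeping: peel off $\tilde{T}_{0j}$ for $j\ge 3$, then solve the nonsingular $3\times 3$ system for $\tilde{T}_{01},\tilde{T}_{02},\tilde{T}_{12}$. Both your star relation and the determinant $-2$ check out, and your route places everything directly in $\operatorname{span}_{\mathbb{R}}(T'')$ rather than passing through $T'$ first. What your version buys is uniformity and transparency: a single identity type plus a graph-incidence argument that makes visible why this particular removed set (the star at vertex $0$ plus the edge $\{1,2\}$) is recoverable; what the paper's version buys is that each step needs only $\sum_j v_j=0$ and Eq.~\eqref{e:simplex}, with no frame identity. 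One point to tighten: justify the tight-frame identity properly rather than ``checked on the trace''---the trace alone does not determine the operator unless you first know it is scalar. A one-line fix is $\bigl(\sum_j v_jv_j^\dagger\bigr)v_l=\sum_j v_j\braket{v_j,v_l}=v_l-\frac{1}{d}\sum_{j\neq l}v_j=\frac{d+1}{d}v_l$, together with the fact that $\{v_l\}$ spans $\mathbb{R}^d$. Your standing hypothesis $d>2$ matches the paper's implicit one (it is needed both for Lemma \ref{lem:spanall} and for $I\in\operatorname{span}_{\mathbb{R}}\{\tilde{T}_j\}$).
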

\begin{proof}
Note that $|T''|=\frac{d(d+1)}{2}=\dim H_d(\mathbb{R})$. So it suffice to show that $\tilde{T}_{0j}\in\operatorname{span}_{\mathbb{R}}(T')$ and $\tilde{T}_{12}\in\operatorname{span}_{\mathbb{R}}(T'')$. Also note that the identity matrix $I=\frac{d}{(d+1)(2-d)}\sum_{j}\tilde{T}_j$ belongs to $\operatorname{span}_{\mathbb{R}}(T'')$, and so does $v_jv_j^*=(\tilde{T}_j+I)/2$.

\begin{itemize}
    \item $\tilde{O}_{0j}\in\operatorname{span}_{\mathbb{R}}(T')$: note that $\sum_jv_j=0$. Then for every $j>0$,
\begin{align*}
    \tilde{O}_{0j}=&\frac{d}{d+1}(v_0-v_j)(v_0^\dagger-v_j)-I\nonumber\\
    =&\frac{d}{d+1}(-\sum_{k>0}v_k-v_j)(-\sum_{k>0}v_k^\dagger-v_j^\dagger)-I\nonumber\\
    =&\frac{d}{d+1}(\sum_{0<k<l}(v_kv_l^\dagger+v_lv_k^\dagger)+\sum_{k>0}(v_kv_j^\dagger+v_jv_k^\dagger)+\tilde{P}_j)-I\in\operatorname{span}_{\mathbb{R}}(T')
\end{align*}
because $v_xv_y^\dagger+v_yv_x^\dagger=\tilde{P}_x+\tilde{P}_y-\frac{d+1}{d}(\tilde{T}_{xy}-I)\in\operatorname{span}_{\mathbb{R}}(T')$ for all $x,y>0$.
    \item $\tilde{T}_{12}\in\operatorname{span}_{\mathbb{R}}(T'')$: we show that $\sum_{1\le j<k\le d}\tilde{T}_{jk}+dv_0v_0^*+\frac{d(d-3)}{2}I=0$, meaning that $\tilde{T}_{12}$ is a linear combination of elements in $T''$.
    Since $\operatorname{span}_{\mathbb{R}}\{v_l:l\in[1,d]\}=\mathbb{R}^d$, it suffices to show that $\sum_{1\le j<k\le d}\tilde{T}_{jk}v_l+dv_0v_0^*v_l+\frac{d(d-3)}{2}v_l=0$ for all $l\in[1,d]$:
\begin{align*}
    &\sum_{1\le j<k\le d}\tilde{T}_{jk}v_l+d\tilde{P}_0v_l+\frac{d(d-3)}{2}v_l\nonumber\\
    =&\sum_{1\le j<k\le d}\frac{d}{d-1}(v_j-v_k)(v_j^\dagger-v_k^\dagger)v_l-\frac{d(d-1)}{2}v_l-v_0+\frac{d(d-3)}{2}v_l\nonumber\\
    =&\sum_{j>0,j\neq l}(v_l-v_j)-\frac{d(d-1)}{2}v_l-v_0+\frac{d(d-3)}{2}v_l\nonumber\\
    =&(d-1)v_l+\sum_{j>0,j\neq l}(-v_j)-\frac{d(d-1)}{2}v_l-v_0+\frac{d(d-3)}{2}v_l\nonumber\\
    =&(d-1)v_l+v_0+v_l-\frac{d(d-1)}{2}v_l-v_0+\frac{d(d-3)}{2}v_l=0.
    \end{align*}
\end{itemize}
\end{proof}

Since $T''=\{\tilde{B}_y\}$ in $\tilde{\mathcal{S}}$ spans the space of all symmetric matrices, every $d$-dimensional binary observable $\tilde{O}_{\operatorname{binary}}$ belongs to $\operatorname{span}\{\tilde{B}_y\}$. Therefore, by adding $\tilde{O}_{\operatorname{binary}}$ into $\{\tilde{A}_x\}$ we construct a strategy that can self-test any binary observable:

\begin{proposition}
For any $d$-dimensional real projective binary measurement, given by observable $\tilde{O}_{\operatorname{binary}}$, the strategy $\tilde{\mathcal{S}}^{(2)}=(\ket{\tilde{\psi}},\{\tilde{A}_x\}_{x=0}^{d+1},\{\tilde{B}_y\}_{x=0}^{\frac{d(d+1)}{2}-1})$ is robustly self-tested, where \begin{align*}
    &\ket{\tilde{\psi}}=\ket{\Phi_d},\nonumber\\
    &\{\tilde{A}_x\}_{x=0}^{d}=\{\tilde{T}_x\}_{x=0}^{d},~\tilde{A}_{d+1}=\tilde{O}_{\operatorname{binary}},\nonumber\\
    &\{\tilde{B}_y\}_{y=0}^{d}=\{\tilde{T}_y\}_{y=0}^{d},~\{\tilde{B}_y\}_{y=d+1}^{\frac{d(d+1)}{2}-1}=\{\tilde{T}_{jk}:1\le j<k\le d\}\setminus\{\tilde{O}_{12}\}.
\end{align*}
\label{prop:arbitrarybinary}
\end{proposition}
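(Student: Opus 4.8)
The plan is to realize $\tilde{\mathcal{S}}^{(2)}$ as the already-self-tested strategy $\tilde{\mathcal{S}}^{(1)}$ with a single extra observable $\tilde{O}_{\operatorname{binary}}$ appended on Alice's side, and then certify that extra observable by post-hoc self-testing. By Lemma \ref{lem:oneiteration}, the strategy $\tilde{\mathcal{S}}^{(1)}=(\ket{\Phi_d},\{\tilde{T}_x\}_{x=0}^{d},\{\tilde{B}_y\})$ is robustly self-tested, and its Bob observables $\{\tilde{B}_y\}$ are exactly the set $T''$. So if I can show $\tilde{O}_{\operatorname{binary}}$ is robustly post-hoc self-tested on Alice's side based on $\{\tilde{B}_y\}$ and $\ket{\Phi_d}$, then Proposition \ref{prop:extend} immediately upgrades this to robust self-testing of the full $\tilde{\mathcal{S}}^{(2)}$.

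To verify the post-hoc condition I would invoke the mirror form of Proposition \ref{prop:conditionbinaryrob}, under which a binary observable added on Alice's side is robustly post-hoc self-tested whenever $\tilde{O}\in\operatorname{sgn}(\operatorname{span}_{\mathbb{R}}\{D^2,D\tilde{B}_yD\})$. Because the reference state is maximally entangled we have $D=\tfrac{1}{\sqrt d}I$, so the condition collapses to $\tilde{O}\in\operatorname{sgn}(\operatorname{span}_{\mathbb{R}}\{I,\tilde{B}_y\})$. Next I would note that $\tilde{O}_{\operatorname{binary}}$, as the observable of a real projective binary measurement, has the form $2M_0-I$ for a real orthogonal projection $M_0$; hence it is a real symmetric orthogonal matrix with eigenvalues $\pm1$, lies in $H_d(\mathbb{R})$, and satisfies $\operatorname{sgn}(\tilde{O}_{\operatorname{binary}})=\tilde{O}_{\operatorname{binary}}$.

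The decisive ingredient is the spanning statement established just above: by the preceding proposition $T''=\{\tilde{B}_y\}$ is a maximal linearly independent subset of $T$, and by Lemma \ref{lem:spanall} it spans the entire space $H_d(\mathbb{R})$ of symmetric matrices when $d>2$. Therefore $\operatorname{span}_{\mathbb{R}}\{I,\tilde{B}_y\}=H_d(\mathbb{R})$ already contains $\tilde{O}_{\operatorname{binary}}$, and since $\tilde{O}_{\operatorname{binary}}$ is its own $\operatorname{sgn}$-image, membership upgrades for free to $\tilde{O}_{\operatorname{binary}}\in\operatorname{sgn}(\operatorname{span}_{\mathbb{R}}\{I,\tilde{B}_y\})$. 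The mirror criterion then certifies robust post-hoc self-testing of $\tilde{O}_{\operatorname{binary}}$, and the robustness inherits the $\varepsilon'=O(\sqrt{C\varepsilon+\delta})$ dependence from Theorem \ref{thm:conditionUnifiedrob}.

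Given all the preceding machinery, I expect no genuine obstacle in this final step: the heavy lifting is done by the spanning result (Lemma \ref{lem:spanall} together with the maximal-independence proposition), after which the post-hoc criterion is essentially automatic, since every binary observable is a fixed point of $\operatorname{sgn}$. The only points I would double-check are the correct orientation of the mirror criterion (Bob's observables certifying Alice's new measurement) and the hypothesis $d>2$ required by Lemma \ref{lem:spanall}.
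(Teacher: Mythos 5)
Your proposal is correct and follows essentially the same route as the paper: the paper's proof also appends $\tilde{O}_{\operatorname{binary}}$ to Alice's side of $\tilde{\mathcal{S}}^{(1)}$, notes $\tilde{O}_{\operatorname{binary}}\in H_d(\mathbb{R})=\operatorname{span}_{\mathbb{R}}\{I,\tilde{B}_y\}$ via the spanning results, and concludes with Proposition \ref{prop:conditionbinaryrob} (in mirror form) together with Proposition \ref{prop:extend}. You spell out details the paper leaves implicit (the $\operatorname{sgn}$-fixed-point observation and the mirror orientation of the criterion), and your flagging of the $d>2$ hypothesis from Lemma \ref{lem:spanall} is a fair point of care that the paper glosses over.
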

\begin{proof}
Since $\tilde{O}_{\operatorname{binary}}\in H_d(\mathbb{R})=\operatorname{span}\{I,\tilde{B}_y\}$ for any $\tilde{O}_{\operatorname{binary}}$, by Lemma \ref{lem:oneiteration} and Proposition \ref{prop:extend} we immediately have that the strategy $\tilde{\mathcal{S}}^{(2)}$ is robustly self-tested by the correlation it generates.
\end{proof}

Finally, we generalize the self-testing of binary measurements to the self-testing of arbitrary $L$-output measurements. Intuitively, we can think of an $L$-output projective measurement as a collection of $L$ binary ones: given an $L$-output projective measurement $\tilde{O}=\sum_{a=0}^{L-1}e^{i2\pi a/L}\tilde{M}_a$, consider binary observables $\{2\tilde{M}_a-I\}_{a=0}^{L-1}$. If we can self-test every binary observable $2\tilde{M}_a-I$, then we should be able to also self-test $\tilde{O}$.

\begin{proposition}
For any $L$-output observable $\tilde{O}=\sum_{a=0}^{L-1}e^{i2\pi a/L}\tilde{M}_a$, strategy $$\tilde{\mathcal{S}}_1=(\ket{\tilde{\psi}},\{\tilde{A}_x,\tilde{O}:x\},\{\tilde{B}_y:y\})$$ 
is robustly self-tested if and only if strategy 
$$\tilde{\mathcal{S}}_2=(\ket{\tilde{\psi}},\{\tilde{A}_x,2\tilde{M}_a-I:x,a\},\{\tilde{B}_y:y\})$$ 
is robustly self-tested.
\label{prop:2toL}
\end{proposition}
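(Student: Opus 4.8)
The plan is to exploit the invertible linear change of variables between the $L$-output observable $\tilde{O}$ and the family of binary observables $2\tilde{M}_a-I$, together with the fact that the hook-arrow condition of Definition \ref{def:local_dilationobsapp} is \emph{linear} in the operator slot. Writing $\omega=e^{i2\pi/L}$, the relation $\tilde{M}_a=\frac{1}{L}\sum_{\ell}\omega^{-a\ell}\tilde{O}^{\ell}$ together with $\sum_a\omega^{a\ell}=L\,\delta_{\ell,0}$ gives $\tilde{O}^{\ell}=\frac12\sum_a\omega^{a\ell}(2\tilde{M}_a-I)$ for $\ell\ge 1$ (and $\tilde{O}^0=I$), while conversely $2\tilde{M}_a-I=\frac{2}{L}\sum_{\ell}\omega^{-a\ell}\tilde{O}^{\ell}-I$. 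Hence $\operatorname{span}\{I,\tilde{O},\dots,\tilde{O}^{L-1}\}=\operatorname{span}(\{I\}\cup\{2\tilde{M}_a-I\})$, and the two correlation blocks $\{\braket{\tilde{A}^{(j)}_x\otimes\tilde{O}^{\ell}}\}$ and $\{\braket{\tilde{A}^{(j)}_x\otimes(2\tilde{M}_a-I)}\}$ determine one another. The first step is a linearity lemma for local $\varepsilon$-dilation: if $(\ket{\psi},I\otimes O_1)\xhookrightarrow[\varepsilon_1]{\Phi,\ket{\operatorname{aux}}}(\ket{\tilde{\psi}},I\otimes \tilde{O}_1)$ and $(\ket{\psi},I\otimes O_2)\xhookrightarrow[\varepsilon_2]{\Phi,\ket{\operatorname{aux}}}(\ket{\tilde{\psi}},I\otimes \tilde{O}_2)$ share the \emph{same} $\Phi,\ket{\operatorname{aux}},\ket{\psi},\ket{\tilde{\psi}}$, then $(\ket{\psi},I\otimes(\alpha O_1+\beta O_2))\xhookrightarrow[|\alpha|\varepsilon_1+|\beta|\varepsilon_2]{\Phi,\ket{\operatorname{aux}}}(\ket{\tilde{\psi}},I\otimes(\alpha \tilde{O}_1+\beta \tilde{O}_2))$, which is immediate from the triangle inequality since the defining norm is affine-linear in $(O,\tilde{O})$.

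For the direction ``$\tilde{\mathcal{S}}_2$ self-tested $\Rightarrow\tilde{\mathcal{S}}_1$ self-tested'' (the one actually used to lift our binary self-tests to $L$ outputs), I would take an arbitrary physical strategy realising $\tilde{\mathcal{S}}_1$, whose Bob side carries a genuine $L$-output POVM $\{M_a'\}$ with generalized observables $O^{(\ell)}=\sum_a\omega^{a\ell}M_a'$. Coarse-graining this measurement into the $L$ binary POVMs $\{M_a',I-M_a'\}$ yields observables $R_a=2M_a'-I$ and a bona fide physical realisation of $\tilde{\mathcal{S}}_2$; using $\braket{A^{(j)}_x\otimes M_a'}=\frac1L\sum_\ell\omega^{-a\ell}\braket{A^{(j)}_x\otimes O^{(\ell)}}$ one checks that this coarse-grained strategy reproduces the $\tilde{\mathcal{S}}_2$ correlation whenever the original reproduces that of $\tilde{\mathcal{S}}_1$ ($\delta$-approximately so in the robust case, with the $\tilde{B}_y$ and $\tilde{A}_x$ blocks untouched). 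Invoking the robust self-testing of $\tilde{\mathcal{S}}_2$ then yields a single $\Phi,\ket{\operatorname{aux}}$ dilating the state, the $A_x$, the $B_y$, and every $R_a$ to their tilded counterparts. Finally, since $O^{(\ell)}=\frac12\sum_a\omega^{a\ell}R_a$ for $\ell\ge1$, the linearity lemma gives $(\ket{\psi},I\otimes O^{(\ell)})\xhookrightarrow{\Phi,\ket{\operatorname{aux}}}(\ket{\tilde{\psi}},I\otimes\tilde{O}^{\ell})$ with the same $\Phi,\ket{\operatorname{aux}}$, establishing $\mathcal{S}\xhookrightarrow{}\tilde{\mathcal{S}}_1$.

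For the converse I would run the same bookkeeping in reverse: given a physical realisation of $\tilde{\mathcal{S}}_2$ with binary observables $R_a=2M^+_a-I$, set $O^{(\ell)}:=\frac12\sum_a\omega^{a\ell}R_a$ and $M_a:=(R_a+I)/2=M^+_a$, match correlations through the same linear relation, apply the self-test of $\tilde{\mathcal{S}}_1$, and recombine each $2\tilde{M}_a-I$ as a linear combination of the $\tilde{O}^{\ell}$ via the linearity lemma. \textbf{The main obstacle} sits exactly here: the $L$ binary measurements of a physical $\tilde{\mathcal{S}}_2$ strategy are a priori unrelated, so the reconstructed $M^+_a$ need not form a valid POVM — one only has $\sum_a M^+_a=S$ with possibly $S\neq I$, so $O^{(0)}=S$ rather than $I$ and the $O^{(\ell)}$ need not even be contractions, meaning they do not directly define a legitimate physical $\tilde{\mathcal{S}}_1$ strategy to feed into the self-test. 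I would resolve this by extracting from the matched correlations the constraints $\braket{A^{(j)}_x\otimes(S-I)}\approx0$ and $\braket{I\otimes(S-I)}\approx0$ (obtained by summing the $\tilde{\mathcal{S}}_2$ correlation over $a$ and using $\sum_a\tilde{M}_a=I$), and then arguing that the probing power of $\{\tilde{A}_x\}$ forced by $\tilde{\mathcal{S}}_1$ being self-tested makes $S$ dilate to $I$; this legitimises the reconstruction on the image of $\Phi$ and lets the linearity lemma conclude.

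Throughout, robustness is free bookkeeping: each recombination multiplies the dilation error by at most $\sum_a|\tfrac12\omega^{a\ell}|=\tfrac{L}{2}$ in the forward direction and by $\tfrac{2}{L}\cdot L=2$ in the converse, so the $\varepsilon'=O(\sqrt{C\varepsilon+\delta})$ rate carried by Theorem \ref{thm:conditionUnifiedrob} is preserved up to $L$-dependent constants. I expect the forward coarse-graining direction to be entirely routine, and the POVM-validity step in the converse (establishing $S\xhookrightarrow{}I$) to be the only genuinely delicate point.
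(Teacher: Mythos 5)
Your treatment of the ``if'' direction ($\tilde{\mathcal{S}}_2$ self-tested $\Rightarrow$ $\tilde{\mathcal{S}}_1$ self-tested) is exactly the paper's proof: recover the POVM elements $M_a=\frac1L\sum_\ell\omega^{-a\ell}O^{(\ell)}$ of the physical $L$-output measurement (these are automatically positive and sum to $I$, so the coarse-grained binary strategy is bona fide), check that the correlations transfer with an $L$-dependent constant, invoke the robust self-test of $\tilde{\mathcal{S}}_2$, and recombine $O^{(\ell)}=\frac12\sum_a\omega^{a\ell}(2M_a-I)$ for $\ell\ge1$ by the triangle inequality so that the \emph{same} $\Phi,\ket{\operatorname{aux}}$ dilates everything; your $L/2$ constant is the paper's $\varepsilon/L$ and $\delta/3$ bookkeeping. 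The only discrepancy is cosmetic: you place $\tilde{O}$ on Bob's side whereas the statement places it with the $\tilde{A}_x$ on Alice's side. Note this is the only direction the paper proves in full (it dismisses the other as ``similar''), and the only one used downstream in Theorem \ref{thm:arbitraryL}.

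On the ``only if'' direction you go beyond the paper, and your diagnosis of the obstacle is correct and valuable: the $L$ physical binary observables $R_a=2M_a^+-I$ are unrelated measurements, so $S=\sum_aM_a^+$ need not equal $I$ and the linear reconstruction $O^{(\ell)}$ need not arise from any POVM --- the paper's ``similar reasoning'' glosses over precisely this. However, your proposed repair does not close the gap. First, the only data you can extract are expectations of $S-I$ against the finitely many operators occurring in the correlation (under the paper's labelling, $I$ and the $B_y^{(k)}$; the $R_a$ sit on the same side as the $A_x^{(j)}$, so there are no cross-correlations between them). Since one only has $-I\le S-I\le (L-1)I$, the positive and negative parts of $S-I$ can cancel inside every such expectation, so these constraints cannot force $(S\otimes I)\ket{\psi}\approx\ket{\psi}$, let alone anything operator-level. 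Second, and more fundamentally, the hypothesis that $\tilde{\mathcal{S}}_1$ is robustly self-tested quantifies exclusively over genuine strategies, i.e.\ valid POVMs; a family of operators that merely ``behaves like'' a POVM on part of the state can never be fed into this black box, and the isometry $\Phi$ on whose image you propose to legitimise the reconstruction does not exist before the black box has been invoked, so the argument is circular. (Any fix that instead invokes the self-test separately for different pieces faces the further problem of producing a \emph{single} $\Phi,\ket{\operatorname{aux}}$ dilating all $R_a$ simultaneously.) So the converse remains unproven in your write-up, as it does in the paper; an honest proof seems to require either constructing from the $R_a$ a genuine $L$-output POVM whose correlations with the $B_y^{(k)}$ provably match those of $\tilde{O}$, or opening up the self-testing proof of $\tilde{\mathcal{S}}_1$ rather than using it as a black box.
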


\begin{proof}
We prove the `if' part, and the reasoning for the `only if' part is similar.

Suppose $\tilde{\mathcal{S}}_2=(\ket{\tilde{\psi}},\{\tilde{A}_x,2\tilde{M}_a-I:x,a\},\{\tilde{B}_y\})$ is robustly self-tested. Then for any $\varepsilon>0$ there exists $\delta>0$ such that any strategy $\delta/3$-approximately generating correlation $\{\braket{\tilde{\psi}|\tilde{A}_x^{(j)},\tilde{B}_y^{(k)}|\tilde{\psi}}$,\allowbreak$\braket{\tilde{\psi}|(2\tilde{M}_a-I)\otimes\tilde{B}_y^{(k)}|\tilde{\psi}}\}$ can be locally $\varepsilon/L$-dilated by $\tilde{\mathcal{S}}_2$. Let ${\mathcal{S}}_1=(\ket{{\psi}},\{{A}^{(j)}_x,{O}^{(l)}),\{{B}^{(k)}_y\})$ be a strategy that $\delta/3$-approximately generates the correlation $\{\braket{\tilde{\psi}|\tilde{A}_x^{(j)}\otimes\tilde{B}_y^{(k)}|\tilde{\psi}}$, $\braket{\tilde{\psi}|\tilde{O}^l\otimes\tilde{B}_y^{(k)}|\tilde{\psi}}\}$. Construct a physical strategy ${\mathcal{S}}_2:=(\ket{{\psi}},\{{A}_x,2M_a-I),\{{B}_y\})$ where $M_a:=1/L\sum_{l=0}^{L-1}e^{-i2\pi al/L}O^{(l)}$. Then
\begin{align*}
    &|\braket{{\psi}|{O}^{(l)}\otimes{B}_y^{(k)}|{\psi}}-\braket{\tilde{\psi}|\tilde{O}^l\otimes\tilde{B}_y^{(k)}|\tilde{\psi}}|\le\delta/3\nonumber\\
    \Rightarrow\quad&|\braket{{\psi}|(2M_a-I)\otimes\tilde{B}_y^{(k)}|\tilde{\psi}}-\braket{\tilde{\psi}|(2\tilde{M}_a-I)\otimes\tilde{B}_y^{(k)}|\tilde{\psi}}|\le \delta.
\end{align*}
So $\mathcal{S}_2$ $\delta$-approximately generates correlation $\{\braket{\tilde{\psi}|\tilde{A}_x^{(j)}\otimes\tilde{B}_y^{(k)}|\tilde{\psi}},\braket{\tilde{\psi}|(2\tilde{M}_a-I)\otimes\tilde{B}_y^{(k)}|\tilde{\psi}}\}$. By the hypothesis, $\tilde{\mathcal{S}}_2$ is a local $\varepsilon/L$-dilation of $\mathcal{S}_2$, and so
\begin{align*}
    &(\ket{\psi}_{AB},(2{M}_a-I)\otimes I_B)\xhookrightarrow[\varepsilon/L]{\Phi,\ket{\operatorname{aux}}}(\ket{\tilde{\psi}}_{\tilde{A}\tilde{B}},(2\tilde{M}_a-I)\otimes I_{\tilde{B}})\nonumber\\
    \Rightarrow\quad&(\ket{\psi}_{AB},O^{(l)}\otimes I_B)\xhookrightarrow[\varepsilon]{\Phi,\ket{\operatorname{aux}}}(\ket{\tilde{\psi}}_{\tilde{A}\tilde{B}},\tilde{O}^l\otimes I_{\tilde{B}}),
\end{align*}
and 
\begin{align*}
    &(\ket{\psi}_{AB},I_A\otimes B^{(k)}_{y})\xhookrightarrow[\varepsilon/L]{\Phi,\ket{\operatorname{aux}}}(\ket{\tilde{\psi}}_{\tilde{A}\tilde{B}},I_{\tilde{A}}\otimes\tilde{B}^{(k)}_{y}).
\end{align*}
Therefore $\tilde{\mathcal{S}}_1$ is a local $\varepsilon$-dilation of $\mathcal{S}_1$. Thus $\tilde{\mathcal{S}}_1$ is robustly self-tested.
\end{proof}

So we conclude that any real projective measurements can be self-tested:

\begin{theorem}
For any $d$-dimensional $L$-output observable $\tilde{O}$, the strategy $$\tilde{\mathcal{S}}^{(3)}=\left(\ket{\tilde{\psi}},
\{\tilde{A}_x\}_{x=0}^{d+1},
\{\tilde{B}_y\}_{y=0}^{\frac{d(d+1)}{2}-1}\right)$$ is robustly self-tested, where \begin{align*}
    &\ket{\tilde{\psi}}=\ket{\Phi_d},\nonumber\\
    &\{\tilde{A}_x\}_{x=0}^{d}=\{\tilde{T}_x\}_{x=0}^{d},~\tilde{A}_{d+1}=\tilde{O},\nonumber\\
    &\{\tilde{B}_y\}_{y=0}^{d}=\{\tilde{T}_y\}_{x=0}^{d},~\{\tilde{B}_y\}_{y=d+1}^{\frac{d(d+1)}{2}-1}=\{\tilde{T}_{jk}:1\le j<k\le d\}\setminus\{\tilde{T}_{12}\}.
\end{align*}
\label{thm:arbitraryL}
\end{theorem}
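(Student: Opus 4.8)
The goal is to self-test an arbitrary $L$-output observable $\tilde{O}=\sum_{a=0}^{L-1}e^{i2\pi a/L}\tilde{M}_a$ by placing it on Alice's side of the strategy $\tilde{\mathcal{S}}^{(3)}$. The plan is to reduce this $L$-output problem to the binary case already solved in Proposition \ref{prop:arbitrarybinary}, using the binary-to-$L$-output equivalence established in Proposition \ref{prop:2toL}. Concretely, I would first invoke Proposition \ref{prop:2toL} to observe that $\tilde{\mathcal{S}}^{(3)}$ (which contains $\tilde{O}$ on Alice's side) is robustly self-tested \emph{if and only if} the strategy obtained by replacing $\tilde{O}$ with all its binary components $\{2\tilde{M}_a-I:a\in[0,L-1]\}$ is robustly self-tested. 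This shifts the burden entirely onto self-testing the family of binary observables $2\tilde{M}_a-I$ simultaneously on Alice's side, against the same Bob-side observables $\{\tilde{B}_y\}$.

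The key step is then that each $2\tilde{M}_a-I$ is a $d$-dimensional real binary observable (since $\tilde{M}_a$ is a real projection, $2\tilde{M}_a-I$ is a symmetric orthogonal matrix with eigenvalues $\pm1$), so it lies in $H_d(\mathbb{R})$. By the proposition preceding this theorem, the Bob-side observables $T''=\{\tilde{B}_y\}_{y=0}^{\frac{d(d+1)}{2}-1}$ of $\tilde{\mathcal{S}}^{(1)}$ form a basis of $H_d(\mathbb{R})$, hence $\operatorname{span}\{I,\tilde{B}_y\}=H_d(\mathbb{R})$ contains every such $2\tilde{M}_a-I$. Since $\ket{\tilde\psi}=\ket{\Phi_d}$ is maximally entangled, the post-hoc criterion of Proposition \ref{prop:conditionbinaryrob} in this case reads $\tilde{O}'\in\operatorname{sgn}(\operatorname{span}\{I,\tilde{B}_y\})$; because $2\tilde{M}_a-I$ already lies in $\operatorname{span}\{I,\tilde{B}_y\}$ and equals its own sign (being orthogonal), it is trivially in the image of $\operatorname{sgn}$. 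Thus each $2\tilde{M}_a-I$ is robustly post-hoc self-tested based on the self-tested strategy $\tilde{\mathcal{S}}^{(1)}$ of Lemma \ref{lem:oneiteration}, and by Proposition \ref{prop:extend} these additional Alice-side observables extend $\tilde{\mathcal{S}}^{(1)}$ to a robustly self-tested strategy. This is exactly the content of Proposition \ref{prop:arbitrarybinary} applied to each component, so the corresponding binary strategy is robustly self-tested.

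Combining the two ingredients, the binary strategy is robustly self-tested by the argument above, and hence by the `if' direction of Proposition \ref{prop:2toL} the original $L$-output strategy $\tilde{\mathcal{S}}^{(3)}$ is robustly self-tested, which is the claim. I do not expect a genuine obstacle here, since all the analytic work (the robustness bounds of the form $\varepsilon'=O(\sqrt{C\varepsilon+\delta})$ and the spanning argument) has been discharged in the earlier results; the main care-point is purely bookkeeping, namely verifying that the observable lists in $\tilde{\mathcal{S}}^{(3)}$ match those of $\tilde{\mathcal{S}}^{(2)}$ with $\tilde{A}_{d+1}=\tilde{O}$ replaced by its binary components, so that Propositions \ref{prop:arbitrarybinary} and \ref{prop:2toL} apply verbatim. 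One should also note that the question set and answer set sizes remain as advertised (cubic, respectively constant), since passing from $\tilde{O}$ to its $L$ binary components only multiplies Alice's input count by a constant factor absorbed into the same $\tilde{A}$-block.
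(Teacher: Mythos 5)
Your proposal is correct and follows essentially the same route as the paper: decompose $\tilde{O}$ into its binary components $2\tilde{M}_a-I$, show the resulting binary strategy is robustly self-tested by the spanning/post-hoc argument of Proposition \ref{prop:arbitrarybinary} (each component lies in $H_d(\mathbb{R})=\operatorname{span}\{I,\tilde{B}_y\}$ and equals its own sign), and then transfer back to the $L$-output strategy via Proposition \ref{prop:2toL}. The paper's proof is exactly this, merely presented with the $L=2$ case split off first and the intermediate binary strategy written out explicitly.
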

\begin{proof}
Statement is true for $L=2$ by Proposition \ref{prop:arbitrarybinary}. For $L>2$, let $\tilde{O}=\sum_ae^{i2\pi a/L}\tilde{M}_a$ be the $L$-output observable, consider the strategy $\tilde{\mathcal{S}}_2=(\ket{\tilde{\psi}},\{\tilde{A}_x\}_{x=0}^{d+L},\{\tilde{B}_y\}_{x=0}^{\frac{d(d+1)}{2}-1})$ where
\begin{align*}
    &\ket{\tilde{\psi}}=\ket{\Phi_d},\nonumber\\
    &\{\tilde{A}_x\}_{x=0}^{d}=\{\tilde{T}_x\}_{x=0}^{d},~\{\tilde{A}_x\}_{x=d+1}^{d+L}=\{2\tilde{M}_a-I:0\le a\le L-1\},\nonumber\\
    &\{\tilde{B}_y\}_{y=0}^{d}=\{\tilde{T}_y\}_{y=0}^{d},~\{\tilde{B}_y\}_{y=d+1}^{\frac{d(d+1)}{2}-1}=\{\tilde{T}_{jk}:1\le j<k\le d\}\setminus\{\tilde{T}_{12}\},
\end{align*}
and $\tilde{M}_a =1/L\sum_{l=0}^{L-1}e^{-i2\pi al/L}\tilde{O}^l$. By a similar argument in the proof of Propisition \ref{prop:arbitrarybinary}, $\tilde{\mathcal{S}}_2$ is robustly self-tested. Then the strategy $\tilde{\mathcal{S}}^{(3)}$ is robustly self-tested by by Proposition \ref{prop:2toL}.
\end{proof}

\section{Iterative self-testing II: general theory}
\label{sec:iterative}

In this section we develop the theory of iterative self-testing in general, whenever the initial state $\ket{\tilde{\psi}}=\ket{\Phi_d}=\sum_j\ket{jj}/\sqrt{d}$ is maximally entangled and all reference measurements are binary and projective, \ie, are described by orthogonal matrices.

Given the initial strategy $\tilde{\mathcal{S}}=(\ket{\Phi_d},\{\tilde{A}_{x}),\{\tilde{B}_{y}\})$, denote $S_0=\{I,\tilde{A}_x\}$ to be the set of initial binary observables, and $V_0=\operatorname{span}_{\mathbb{R}}(S_0)$ to be the subspace generated by $S_0$. Denote by $S'_1=\operatorname{sgn}(V_0)\cap GL_d(\mathbb{R})$ the binary observables that by Proposition \ref{prop:conditionbinaryrob} are post-hoc self-tested on Bob's side, and take $S_1=S'_1\cup\{\tilde{B}_{y}\}$. Note that $S_0\subseteq S'_1\subseteq S_1$ because $\operatorname{sgn}(\tilde{A}_x)=\tilde{A}_x$. 
Let $V_1=\operatorname{span}_{\mathbb{R}}(S_1)$; then we also have $V_0\subseteq V_1$. Now consider the post-hoc self-testing of additional binary observables on Alice's side based on $S_1$; we get the next set of binary observables $S_2=\operatorname{sgn}(V_1)\cap GL_d(\mathbb{R})\supseteq S_1$ that is self-tested, and also the next subspace $V_2=\operatorname{span}_{\mathbb{R}}(S_1)\supseteq V_1$. By iteratively using this technique, we enlarge the set of self-tested binary observables $S_j$ in each step. We remark that, when trying to make a similar argument for a non-maximally entangled $\ket{\tilde{\psi}}$, it is not clear whether $V_j\subseteq V_{j+1}$ still holds.

Since $\{V_j\}_j$ is an increasing sequence of subspaces of the finite-dimensional real Hermitian matrix space $H_d(\mathbb{R})$, it eventually stabilizes at $V_\infty=\lim_{j\rightarrow\infty}V_j$. It is natural to ask, given initial binary observables $\{\tilde{A}_{x}),\{\tilde{B}_{y}\}$, what is $V_\infty$? Before we answer this question, we make the following observation:

\begin{lemma}
Given a set of orthogonal matrices $\{\tilde{A}_{x}\}$, recursively define $V_0=\operatorname{span}_{\mathbb{R}}\{I,\tilde{A}_{x}\}$ and $V_j=\operatorname{span}_{\mathbb{R}}(\operatorname{sgn}(V_{j-1}))$\footnote{
Here we do not exclude the non-singular matrices in $S_j$. In fact, $\operatorname{span}(\operatorname{sgn}(V_j))=\operatorname{span}(\operatorname{sgn}(V_j)\cap GL_d(\mathbb{R}))$: for any singular $s=\operatorname{sgn}(x)$ where $x\in V_j$, $\operatorname{sgn}(x\pm\delta I)\in\operatorname{sgn}(V_j)$. And for small enough $\delta$, we have $\operatorname{sgn}(x\pm\delta I)\in GL_d(\mathbb{R})$ and $2s=\operatorname{sgn}(x+\delta I)+\operatorname{sgn}(x-\delta I)$.
}, where $\operatorname{sgn}$ is defined as in Proposition \ref{prop:conditionbinaryrob}. If $x\in V_j$, then $p(x)\in V_{j+1}$ for any real coefficient polynomial $p\in\mathbb{R}[t]$. 
Consequently, $x,y\in V_j$ implies $xy+yx\in V_{j+1}$.
\label{lem:xinpxin}
\end{lemma}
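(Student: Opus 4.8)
The plan is to reduce the polynomial claim to a statement about the spectral projections of $x$, and to recover those projections from the sign operator applied to real shifts of $x$. Since $V_0=\operatorname{span}_{\mathbb{R}}\{I,\tilde{A}_x\}$ consists of real symmetric matrices and $\operatorname{sgn}$ maps $H_d(\mathbb{R})$ into itself, every $V_j$ lies in $H_d(\mathbb{R})$, so each $x\in V_j$ is symmetric with a spectral decomposition $x=\sum_{k=1}^m \mu_k E_k$, where $\mu_1<\cdots<\mu_m$ are the distinct eigenvalues and $E_k$ the corresponding orthogonal spectral projections. The key observation is that $I\in V_j$ for every $j$ (an easy induction using $\operatorname{sgn}(I)=I$), so $x-cI\in V_j$ for every real $c$, and therefore $\operatorname{sgn}(x-cI)\in\operatorname{sgn}(V_j)\subseteq V_{j+1}$ whenever $c$ is chosen not to be an eigenvalue, so that $x-cI$ is nonsingular.

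First I would fix thresholds $c_0<\mu_1<c_1<\mu_2<\cdots<\mu_m<c_m$. For such $c_k$ one computes $\operatorname{sgn}(x-c_kI)=\sum_{i\le k}(-1)E_i+\sum_{i>k}E_i=I-2\sum_{i\le k}E_i$, with the boundary cases giving $\operatorname{sgn}(x-c_0I)=I$ and $\operatorname{sgn}(x-c_mI)=-I$. Differencing consecutive such expressions then isolates each individual projection, $E_k=\tfrac12\bigl(\operatorname{sgn}(x-c_{k-1}I)-\operatorname{sgn}(x-c_kI)\bigr)\in V_{j+1}$ for $k=1,\dots,m$. Since $p(x)=\sum_{k}p(\mu_k)E_k$ is a real linear combination of the $E_k$, it follows that $p(x)\in V_{j+1}$ for every $p\in\mathbb{R}[t]$, which is the first assertion.

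For the consequence I would use the polarization identity $xy+yx=(x+y)^2-x^2-y^2$. If $x,y\in V_j$ then $x+y\in V_j$ as well, and applying the just-proved claim with $p(t)=t^2$ to each of $x$, $y$, and $x+y$ gives $x^2,y^2,(x+y)^2\in V_{j+1}$; hence $xy+yx\in V_{j+1}$ as a linear combination of elements of the subspace $V_{j+1}$.

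The one place that requires care — the main obstacle, such as it is — is the extraction of the spectral projections: I must verify that the thresholds can always be placed strictly between eigenvalues so that each shift $x-c_kI$ is nonsingular, which guarantees that $\operatorname{sgn}(x-c_kI)$ is a genuine orthogonal matrix in $\operatorname{sgn}(V_j)\cap GL_d(\mathbb{R})$ and hence legitimately lies in $V_{j+1}$. Everything else is linear bookkeeping. I would also remark in passing that taking $p(t)=t$ recovers the monotonicity $V_j\subseteq V_{j+1}$ used implicitly throughout, and that no property of the $\tilde{A}_x$ beyond symmetry is actually needed.
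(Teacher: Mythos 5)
Your proof is correct and follows essentially the same route as the paper's: both exploit that $I\in V_j$, shift $x$ by scalars chosen strictly between consecutive distinct eigenvalues, apply $\operatorname{sgn}$ to land in $V_{j+1}$, span all polynomials in $x$, and then get the Jordan product from $xy+yx=(x+y)^2-x^2-y^2$. Your explicit recovery of the spectral projections $E_k$ by differencing consecutive sign matrices is just a slightly more careful bookkeeping of the paper's claim that the matrices $\operatorname{sgn}(x-r_iI)$ (together with $I$) form a basis of $\{p(x)\colon p\in\mathbb{R}[t]\}$.
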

\begin{proof}
For any $x\in V_j$, let $x=U\Lambda U^\dagger$ where $\Lambda$ has diagonal entries $\lambda_1,\dots,\lambda_d\in\mathbb{R}$ sorted decreasingly. Then $p(x)$ has eigenvalues $p(\lambda_1),\dots,p(\lambda_d)$. Note that the identity matrix $I$ is in $V_1$. Now, for each $i\in[1,d-1]$ such that $\lambda_i\neq\lambda_{i+1}$, choose $r_i\in(\lambda_i,\lambda_{i+1})$, and consider
$$
    x_i:=\operatorname{sgn}(x-r_iI)=U\operatorname{diag}(\underbrace{1,\dots,1}_{i~1s},\underbrace{-1,\dots,-1}_{(d-i)~-1s})U^\dagger.
$$
So $x_i\in \operatorname{sgn}(V_j)\subseteq V_{j+1}$. Since $\{x_i\}$ forms a basis of $\{p(x)\colon p\in\mathbb{R}[t]\}$, we have that $p(x)\in V_{j+1}$ for every $p\in\mathbb{R}[t]$.

Take $p(t)=t^2$, and notice that
$$
xy+yx=\underbrace{(x+y)^2}_{\in V_{j+1}}-\underbrace{x^2}_{\in V_{j+1}}-\underbrace{y^2}_{\in V_{j+1}}{\in V_{j+1}}.
$$
\end{proof}

Lemma \ref{lem:xinpxin} allows one to characterize\footnote{
Here we omit $\{\tilde{B}_{y}\}$ for simplicity. This simplification only strengthens our result because we now do not ask $\{\tilde{B}_{y}\}$ to contribute.} $V_\infty$ in terms of Jordan algebras \cite{jacobson1968structure}. A vector subspace of an associative algebra is a (unital) Jordan algebra if it contains the identity and is closed under the Jordan product $a\star b =\frac12(ab+ba)$.

\begin{proposition}
Given a set of Hermitian orthogonal matrices $\{\tilde{A}_{x}\}$, define $V_0=\operatorname{span}_{\mathbb{R}}\{I,\tilde{A}_{x}\}$, $V_j=\operatorname{span}_{\mathbb{R}}(\operatorname{sgn}(V_{j-1}))$, where $\operatorname{sgn}$ is defined as in Proposition \ref{prop:conditionbinaryrob}. Then $V_\infty=\mathcal{JA}(\{\tilde{A}_x\})$, the real Jordan algebra generated by $\{\tilde{A}_x\}$.
\label{prop:jordan}
\end{proposition}

\begin{proof}
$\mathcal{JA}(\{\tilde{A}_x\})=\mathcal{JA}(\{I,\tilde{A}_x\})$ because $I=\tilde{A}_x^2$. From Lemma \ref{lem:xinpxin} we know that $x,y\in V_\infty$ implies $x\star y\in V_\infty$. So $V_\infty$ is a Jordan algebra, and hence $\mathcal{JA}(\{I,\tilde{A}_x\})\subseteq V_\infty$.

On the other hand, for any $x\in H_d(\mathbb{R})$
the matrix $\operatorname{sgn}(x)$ is a polynomial in $x$, and therefore lies in $\mathcal{JA}(\{x\})$. This implies $V_\infty\subseteq\mathcal{JA}(\{I,\tilde{A}_x\})$. So we conclude that $V_\infty=\mathcal{JA}(\{I,\tilde{A}_x\})=\mathcal{JA}(\{\tilde{A}_x\})$.
\end{proof}

Proposition \ref{prop:jordan} implies that, after sufficiently many steps, every binary observable $\tilde{O}\in\mathcal{JA}(\{I,\tilde{A}_x\})$ can be iteratively post-hoc self-tested based on the binary strategy $\tilde{\mathcal{S}}=(\ket{\Phi_d},\{\tilde{A}_x),\{\tilde{B}_y\})$. 
We also provide two properties of real Jordan algebras that help analysing $\mathcal{JA}(\{\tilde{A}_x\})$.
The first one is that $\mathcal{A}(\{\tilde{A}_x\})$, the real associative algebra generated by $\{\tilde{A}_x\}$, is $M_d(\mathbb{R})$ (the real algebra of $d\times d$ matrices), if and only if $\mathcal{JA}(\{\tilde{A}_x\})$, the real Jordan algebra generated by $\{\tilde{A}_x\}$, is $H_d(\mathbb{R})$ (the real Jordan algebra of symmetric $d\times d$ matrices).

\begin{lemma}\label{lem:fulljordan}
For symmetric $d\times d$ matrices $\{\tilde{A}_x\}$, $\mathcal{A}(\{\tilde{A}_x\})=M_d(\mathbb{R})$ if and only if $\mathcal{JA}(\{\tilde{A}_x\})=H_d(\mathbb{R})$.
\end{lemma}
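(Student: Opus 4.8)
The plan is to prove the two implications separately, exploiting that $\mathcal{A}:=\mathcal{A}(\{\tilde A_x\})$ carries the transpose as an involution: since each $\tilde A_x$ is symmetric, $(\tilde A_{x_1}\cdots\tilde A_{x_k})^\intercal=\tilde A_{x_k}\cdots\tilde A_{x_1}\in\mathcal{A}$, so $\mathcal{A}$ is a $*$-algebra whose symmetric part is $\mathcal{A}\cap H_d(\mathbb{R})$. The easy direction is $\mathcal{JA}(\{\tilde A_x\})=H_d(\mathbb{R})\Rightarrow\mathcal{A}=M_d(\mathbb{R})$. Here I would note that the unital associative algebra $\mathcal{A}$ contains the generators and is closed under the Jordan product $a\star b=\tfrac12(ab+ba)$, hence contains the Jordan algebra they generate, so $\mathcal{JA}\subseteq\mathcal{A}$. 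If $\mathcal{JA}=H_d(\mathbb{R})$ then $\mathcal{A}\supseteq H_d(\mathbb{R})$, and since the symmetric matrices already generate all of $M_d(\mathbb{R})$ associatively (for instance $E_{ii}\in H_d$, $E_{ii}+E_{jj}\in H_d$, $E_{ij}+E_{ji}\in H_d$, and $E_{ii}(E_{ij}+E_{ji})=E_{ij}$), we conclude $\mathcal{A}=M_d(\mathbb{R})$.

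For the converse $\mathcal{A}=M_d(\mathbb{R})\Rightarrow\mathcal{JA}=H_d(\mathbb{R})$ I would argue by trace duality. One always has $\mathcal{JA}\subseteq\mathcal{A}\cap H_d(\mathbb{R})$, which now reads $\mathcal{JA}\subseteq H_d(\mathbb{R})$, so the whole content is the reverse inclusion. Suppose $Z=Z^\intercal$ is orthogonal to $\mathcal{JA}$ for $\langle X,Y\rangle=\tr(XY)$; the goal is to force $Z=0$. Special Jordan algebras are closed under the symmetrized products $ab+ba$ and $abc+cba$ (the triple product is built from Jordan products, as in $aba=2\,a\star(a\star b)-a^2\star b$). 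Using that $\tr(Z\,cba)=\tr\!\big((Z cba)^\intercal\big)=\tr(abcZ)=\tr(Zabc)$ for symmetric $a,b,c,Z$, orthogonality to $ab+ba$ and to $abc+cba$ yields $\tr(Zab)=0$ and $\tr(Zabc)=0$ for all $a,b,c\in\mathcal{JA}$. Thus $Z$ is trace-orthogonal to every associative word of length $\le 3$ in elements of $\mathcal{JA}$.

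The crux — and what I expect to be the genuine obstacle — is promoting these length-$\le3$ relations to \emph{all} lengths, since $\tr(Zw)=0$ for every word $w$ would give $Z\perp\mathcal{A}=M_d(\mathbb{R})$ and hence $Z=0$. This is precisely the question of whether $\mathcal{JA}$ is \emph{reversible}, i.e.\ closed under $a_1\cdots a_k+a_k\cdots a_1$. Length $4$ already fails in general: the element $a_1a_2a_3a_4+a_4a_3a_2a_1$ need not lie in a special Jordan algebra, and this is not something one can compute away. The sharp warning is that reversibility can genuinely break down — a family of pairwise anticommuting symmetric involutions $\tilde A_i$ (a Clifford configuration, e.g.\ in dimension $d$ a spinor representation of a Clifford algebra isomorphic to $M_d(\mathbb{R})$) generates all of $M_d(\mathbb{R})$ associatively while $\tilde A_i\star\tilde A_j=\delta_{ij}I$ forces $\mathcal{JA}=\operatorname{span}\{I,\tilde A_i\}$, a spin factor, which is strictly smaller than $H_d(\mathbb{R})$. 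At exactly such a $Z$ (a product $\tilde A_1\tilde A_2\tilde A_3\tilde A_4$) my trace argument stalls at length $4$.

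Consequently the decisive step is to either exclude or resolve the spin-factor case. I would invoke the Jordan–von Neumann–Wigner classification: $\mathcal{JA}$ is a special formally-real (Euclidean) Jordan algebra, and simplicity of $\mathcal{A}=M_d(\mathbb{R})$ should force $\mathcal{JA}$ to be simple; among the simple Euclidean Jordan algebras, those whose associative envelope is the \emph{full real} matrix algebra $M_d(\mathbb{R})$ are $H_d(\mathbb{R})$ together with the non-reversible spin factors. Hence the equivalence as stated holds precisely once spin-factor configurations are ruled out, and establishing (or explicitly excluding) this case is where the real work of the $\Rightarrow$ direction lies.
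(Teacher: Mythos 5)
Your ``if'' direction is correct and is essentially the paper's own argument. For the ``only if'' direction, however, the right conclusion is stronger than the one you drew: the Clifford configuration in your ``sharp warning'' is not merely an obstruction to your trace-duality strategy --- it is a genuine counterexample, and the lemma as stated is false. You were right not to force the argument through; no argument can.

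Your example can be made fully rigorous. Take $d=16$. The real Clifford algebra $\operatorname{Cl}_{8,0}$ on $8$ anticommuting generators of square $+1$ is isomorphic to $M_{16}(\mathbb{R})$; fix such an isomorphism and let $e_1,\dots,e_8\in M_{16}(\mathbb{R})$ be the images of the generators, so $e_i^2=I$, $e_ie_j=-e_je_i$ for $i\neq j$, and the $e_i$ generate $M_{16}(\mathbb{R})$ as an associative algebra (the image of the simple $256$-dimensional algebra $\operatorname{Cl}_{8,0}$ inside the $256$-dimensional $M_{16}(\mathbb{R})$ must be everything). The multiplicative group $G=\{\pm e_{i_1}\cdots e_{i_k}\colon i_1<\dots<i_k\}$ they generate is finite, so averaging the standard inner product of $\mathbb{R}^{16}$ over $G$ and passing to an orthonormal basis of the averaged product makes every $e_i$ orthogonal; an orthogonal involution is symmetric, and this conjugation changes neither $\mathcal{A}$ nor the dimension of $\mathcal{JA}$. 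Thus the $e_i$ are legitimate binary observables with $\mathcal{A}(\{e_i\})=M_{16}(\mathbb{R})$, while $e_i\star e_j=\delta_{ij}I$ shows that $\operatorname{span}_{\mathbb{R}}\{I,e_1,\dots,e_8\}$ is already closed under the Jordan product, so $\mathcal{JA}(\{e_i\})$ is the $9$-dimensional spin factor, whereas $\dim H_{16}(\mathbb{R})=136$. The same happens whenever $\operatorname{Cl}_{n,0}$ has a full real matrix algebra as a simple factor, i.e.\ $n\equiv 0,1,2 \pmod 8$, giving counterexamples at $d=16,32,256,512,\dots$.

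The paper's proof follows exactly the route you anticipated (simplicity of $\mathcal{JA}(\{\tilde{A}_x\})$ plus the Jordan--von Neumann--Wigner classification) and it fails precisely in the spin-factor case you identified as the crux. To exclude that case, the paper asserts that the spin factor $\mathbb{R}^n\oplus\mathbb{R}$ embeds into real symmetric matrices of size $2^n\times 2^n$ ``but not smaller'', deducing $2^n\le d$, which contradicts $d^2\le 2^n$ coming from the Clifford envelope. That minimality claim is wrong: the minimal embedding is the real spinor representation, of dimension roughly $2^{n/2}$, not $2^n$ --- for $n=8$ the spin factor sits inside $H_{16}(\mathbb{R})$, as above. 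At $d=2^{n/2}$ the two constraints are simultaneously satisfiable, and that is exactly where your counterexample lives; equivalently, the length-four failure of reversibility you pinpointed is unavoidable. Note that the paper's main result (Theorem \ref{thm:arbitraryL}) is unaffected, since there the spanning of $H_d(\mathbb{R})$ is verified directly via Lemma \ref{lem:spanall}; what breaks is this lemma and the ensuing claim (after Theorem \ref{thm:jordan}) that a trivial centralizer of the initial observables suffices for iterative self-testing of every real observable --- that claim fails at $d=16$. For $d$ outside the list above the lemma's conclusion does hold, but a correct proof must replace the paper's embedding bound by the true spinor-dimension bound in the spin-factor case.
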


\begin{proof}
The `if' part: it is straightforward to see that every real matrix is a linear combination of products of symmetric matrices.

The `only if' part:
note that $M_d(\mathbb{R})$ is a simple algebra, and that a Jordan subalgebra of $H_d(\mathbb{R})$ is semisimple. Suppose $\mathcal{A}(\{\tilde{A}_x\})=M_d(\mathbb{R})$. Then we claim that $\mathcal{JA}(\{\tilde{A}_x\})$ is also simple. Indeed; if $\mathcal{JA}(\{\tilde{A}_x\})$ were isomorphic to a non-trivial product of simple ones, then $\mathcal{A}(\{\tilde{A}_x\})$ would likewise be isomorphic to a non-trivial product of simple algebras, which is a contradiction. By the Jordan–von Neumann–Wigner Theorem \cite{Jordan1934Algebraic}, finite-dimensional simple Jordan algebras are isomorphic to one of the following five types:
\begin{itemize}
    \item The Jordan algebra of $n\times n$ Hermitian real matrices $H_n(\mathbb{R})$,
    \item The Jordan algebra of $n\times n$ Hermitian complex matrices $H_n(\mathbb{C})$,
    \item The Jordan algebra of $n\times n$ Hermitian quaternionic matrices $H_n(\mathbb{H})$,
    \item The `spin factor' $\mathbb{R}^n\oplus\mathbb{R}$ with the product $(x,\alpha)\star (y,\beta)=(\beta x+\alpha y,\alpha\beta+\braket{x,y})$,
    \item The Jordan algebra of $3\times 3$ Hermitian octonionic matrices.
\end{itemize}
Since $\mathcal{JA}(\{\tilde{A}_x\})$ is special, we only need to exam the first four cases individually:
\begin{itemize}
    \item $\mathcal{JA}(\{\tilde{A}_x\})\cong H_n(\mathbb{R})$ for some $n$. By the 'if' part of the proof, $\mathcal{A}(\{\tilde{A}_x\})\cong M_n(\mathbb{R})$. But $\mathcal{A}(\{\tilde{A}_x\})=M_d(\mathbb{R})$, so we are only left with the possibility $n=d$.
    \item $\mathcal{JA}(\{\tilde{A}_x\})\cong H_n(\mathbb{C})$ for some $n\ge2$. On one hand, complex Hermitian $n\times n$ matrices do not embed into real matrices of size smaller than $2n$, so $2n\le d$. On the other hand, $\mathcal{JA}(\{\tilde{A}_x\})\cong H_n(\mathbb{C})$ implies that $M_d(\mathbb{R})=\mathcal{A}(\{\tilde{A}_x\})$ is a real subalgebra of $M_n(\mathbb{C})$, so $d^2\le 2n^2$, a contradiction.
     \item $\mathcal{JA}(\{\tilde{A}_x\})\cong H_n(\mathbb{H})$ for some $n\ge2$. On one hand, quaternion Hermitian $n\times n$ matrices do not embed into real matrices of size smaller than $4n$, so $4n\le d$. On the other hand, $\mathcal{JA}(\{\tilde{A}_x\})\cong H_n(\mathbb{H})$ implies that $M_d(\mathbb{R})=\mathcal{A}(\{\tilde{A}_x\})$ is a real subalgebra of $M_n(\mathbb{H})$, so $d^2\le 4n^2$, a contradiction.
    \item $\mathcal{JA}(\{\tilde{A}_x\})$ is a spin factor $\mathbb{R}^n\oplus\mathbb{R}$ for some $n\ge 3$. It is known that it can be embedded in the Hermitian real matrices of size $2^n\times2^n$, but not smaller \cite{mccrimmon2004taste}; therefore $2^n\le d$. 
    On the other hand, the spin factor generates a real Clifford algebra of dimension $2^n$ \cite{jacobson1968structure}, so $2^n\ge d^2$, a contradiction.
\end{itemize}
So, we conclude that $\mathcal{A}(\{\tilde{A}_x\})=M_d(\mathbb{R})$ if and only if $\mathcal{JA}(\{\tilde{A}_x\})=H_d(\mathbb{R})$.
\end{proof}

As a consequence of this lemma, if $\ket{\tilde{\psi}}=\ket{\Phi_d}$, and $\{\tilde{A}_x\}$ generate the real matrix algebra of the corresponding dimension, any binary observable will be in $V_\infty$, thus can be self-tested. In Section \ref{sec:d+1} we showed that a self-tested strategy given by $\cite{Laura2021Constant}$ can be used for this purpose. However, several of the self-tested strategies across the existing literature consist of a maximally entangled state and operators that generate the full matrix algebra, so they can be used to self-test arbitrary observables (of suitable size) by Proposition \ref{prop:jordan}. 
Notice that $\{\tilde{A}_x\}_x$ generates $M_n(\mathbb{R})$ as a real associative algebra if and only if the only real solutions of the linear system [$S\tilde{A}_x=\tilde{A}_xS$ for all $x$] are $S=cI$ the scalar multiples of $I$. Hence given $\{\tilde{A}_x\}$, one can check whether it generates the whole matrix algebra in the following way: suppose we have $X$ binary observables ($d$-dimensional); then  [$S\tilde{A}_x=\tilde{A}_xS$ for all $x$] is a linear system of $d^2$ variables (which are entries of $S$) with $Xd^2$ equations. Thus the condition of Lemma \ref{lem:fulljordan} is equivalent to checking that the coefficient matrix has rank $d^2-1$.

Another property we provide can help in upper-bounding the iteration we need for $V_{itr}=V_\infty$. Let $U_j$ denote the span of all the Jordan products of elements in $S_0$ of length at most $j$. Then we have the following relation between $U_j$ and $V_j$:

\begin{lemma}
For a set of Hermitian orthogonal matrices $S_0=\{I,\tilde{A}_x\}$, define
$$
U_j:=\operatorname{span}_{\mathbb{R}}\{a_1\star\cdots\star a_k,a_i\in S_0,k\le j),
$$
and define $V_j$ as in Proposition \ref{prop:jordan} for $j\ge0$. Then $U_{2^{(j)}}\subseteq V_j$.
\label{lem:jordaniter}
\end{lemma}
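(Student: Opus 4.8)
The plan is to induct on $j$, using two immediate consequences of Lemma~\ref{lem:xinpxin}: first, taking $p(t)=t$ gives $V_j\subseteq V_{j+1}$; second, its stated corollary gives $x\star y=\tfrac12(xy+yx)\in V_{j+1}$ whenever $x,y\in V_j$, i.e.\ $\operatorname{span}_{\mathbb R}(V_j\star V_j)\subseteq V_{j+1}$. Together with $I\in S_0\subseteq V_0$, these are the only structural inputs I expect to need. The base case $j=0$ is immediate, since $U_{2^0}=U_1=\operatorname{span}_{\mathbb R}(S_0)=V_0$. For the inductive step I will reduce the claim $U_{2^{j+1}}\subseteq V_{j+1}$ to the single \emph{rebalancing inclusion}
$$U_{2^{j+1}}\subseteq\operatorname{span}_{\mathbb R}\bigl(U_{2^j}\star U_{2^j}\bigr).$$
Granting this, the induction closes at once: by the inductive hypothesis $U_{2^j}\subseteq V_j$, so $U_{2^j}\star U_{2^j}\subseteq V_j\star V_j$, and hence $\operatorname{span}_{\mathbb R}(U_{2^j}\star U_{2^j})\subseteq\operatorname{span}_{\mathbb R}(V_j\star V_j)\subseteq V_{j+1}$.

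The content therefore lies entirely in the rebalancing inclusion, i.e.\ in showing that every Jordan monomial $w$ of length $k\le 2^{j+1}$ is a real combination of products $p\star q$ of two Jordan monomials each of length at most $2^j$. If $k\le 2^j$ this is trivial, since $w=w\star I$ with $w\in U_{2^j}$ and $I\in U_{2^j}$. If $2^j<k\le 2^{j+1}$, I split $w=w_1\star w_2$ at its outermost product; if both $|w_1|,|w_2|\le 2^j$ we are done, and otherwise (say $|w_1|>2^j$, forcing $|w_2|<2^j$) I rebracket $w_1=u\star v$ and apply the linearized Jordan identity
$$(u\star v)\star w_2=\tfrac12\bigl(u\,v\,w_2+w_2\,v\,u\bigr)-u\star(v\star w_2)+v\star(u\star w_2),$$
which is valid because our algebra is \emph{special}: every operator lives in the associative algebra $M_d(\mathbb R)$, so $\star$-products may be expanded and manipulated using the genuine matrix product. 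Each such application is a tree-rotation that moves the split point toward the balanced position; iterating it (inducting on a complexity measure such as the length of the heavier outer factor, or the total leaf-depth of the bracketing tree) should drive every monomial of length $\le 2^{j+1}$ into a combination of balanced ones of depth $\le j+1$, whose top split has both halves of length $\le 2^j$.

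The hard part will be the bookkeeping of this rebalancing induction, for two reasons. First, each rotation spawns the associative-symmetric triple $\tfrac12(u\,v\,w_2+w_2\,v\,u)$ in which $u,v,w_2$ are themselves composite monomials, so one must show that these triples also reduce to balanced products rather than reintroducing deep brackets; and one must verify that the chosen complexity measure strictly decreases at every step so that the recursion terminates. Here I expect the \emph{full} strength of Lemma~\ref{lem:xinpxin} to be essential, not merely the pairwise product: polarizing the map $w\mapsto w^n$ (which carries $V_j$ into $V_{j+1}$) shows that $V_{j+1}$ contains every fully symmetrized associative product $\sum_{\sigma\in S_n}w_{\sigma(1)}\cdots w_{\sigma(n)}$ with $w_i\in V_j$, and such symmetrized products are exactly the right tool to absorb the triple terms $\tfrac12(u\,v\,w_2+w_2\,v\,u)$ while keeping the depth under control. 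Once the rebalancing inclusion is established with this additional input, the doubling bound $U_{2^{(j)}}\subseteq V_j$ follows, matching the intuition that a single iteration doubles the attainable monomial length.
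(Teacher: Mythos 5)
Your proposal has the same skeleton as the paper's own proof: the paper's entire inductive step is the one-sentence assertion that $U_{2^{j+1}}$ is spanned by $U_{2^j}\star U_{2^j}$, which is exactly the ``rebalancing inclusion'' you correctly identify as carrying all the content. The problem is that this inclusion is false, so the step you honestly leave open cannot be filled the way you propose. Take $a,b,c,d$ to be four generic involutions: concretely, the images of the generators of $\mathbb{Z}_2*\mathbb{Z}_2*\mathbb{Z}_2*\mathbb{Z}_2$ in the regular representation of a finite quotient into which the ball of radius $4$ embeds; these are real symmetric orthogonal matrices, so $S_0=\{I,a,b,c,d\}$ satisfies the hypotheses, and distinct reduced group words are linearly independent. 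Since reduction (via $x^2=I$) preserves the parity of each letter count, a product $m\star m'$ with $|m|,|m'|\le 2$ can have a nonzero component on the set $W$ of the $24$ reduced words containing each letter exactly once only if its combined letter multiset is exactly $\{a,b,c,d\}$, i.e.\ only for $P_1=(a\star b)\star(c\star d)$, $P_2=(a\star c)\star(b\star d)$, $P_3=(a\star d)\star(b\star c)$. The monomial $R=((a\star b)\star c)\star d\in U_4$ is supported entirely on $W$, so $R\in\operatorname{span}(U_2\star U_2)$ would force $R\in\operatorname{span}\{P_1,P_2,P_3\}$. But the coefficient of $abcd$ is $1/8$ in $R$ and in $P_1$ and $0$ in $P_2,P_3$, while the coefficient of $abdc$ is $0$ in $R$ and $1/8$ in $P_1$: contradiction. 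Hence $U_4\not\subseteq\operatorname{span}(U_2\star U_2)$ --- the inclusion fails already at $j=1$, for your argument and for the paper's.

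Your proposed repair does not rescue this. The linearized Jordan identity only trades one unbalanced monomial for another: by the universal identity $(u\star c)\star d=\tfrac12(ucd+dcu)+\tfrac32\operatorname{Sym}(u,c,d)-u\star(c\star d)$ (with $\operatorname{Sym}$ the full symmetrization), membership of $R$ is equivalent, modulo elements you can certify, to membership of the triple product $\tfrac12(ucd+dcu)$ with $u=a\star b$, and your rotations cycle between these two questions. Moreover, the symmetrized products actually available under your invariant --- $\operatorname{Sym}$'s of elements of $U_{2^j}$, which at $j=1$ means symmetrized products built from monomials of length $\le 2$ --- do not supply the missing triple: in the example above there is an explicit linear functional supported on $W$ that vanishes on $P_1,P_2,P_3$, on every $\operatorname{Sym}(x\star y,z,w)$, on every $\operatorname{Sym}(\operatorname{Sym}(x,y,z),w)$, on $\operatorname{Sym}(a,b,c,d)$, and on the repeated-letter symmetrized products such as $\operatorname{Sym}(a\star b,a,a,c,d)$ that one checks, yet is nonzero on $R$. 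So with the inductive hypothesis ``$U_{2^j}\subseteq V_j$'' and the operations of pairwise $\star$ and symmetrization, $R$ is unreachable: your plan stalls for a structural reason, not for lack of bookkeeping. Note that this does not prove the lemma itself false: $V_1$ is strictly larger than $U_2$ (for instance $aba=3\operatorname{Sym}(a,a,b)-2b\in V_1$), and elements such as $aba\star\operatorname{Sym}(a,c,d)\in V_2$ do evade the functional above; but exploiting this requires carrying a strictly stronger inductive invariant than $U_{2^j}\subseteq V_j$, which neither your proposal nor the paper's one-line justification provides.
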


\begin{proof}
By definition, $U_1=V_0$. Now suppose $U_{2^{(j)}}\subseteq V_j$ for some $j$. By Proposition \ref{lem:xinpxin} $x\star y\in V_{j+1}$ for every $x,y\in V_j$, so in particular
$x\star y\in V_{j+1}$ for every $x,y\in U_{2^{(j)}}$. 
Since $U_{2^{j+1}}=U_{2\cdot 2^{(j)}}$ is spanned by 
$U_{2^{(j)}}\star U_{2^{(j)}}$,
we conclude that $U_{2^{j+1}}\subseteq V_{j+1}$.
\end{proof}

Note that while $V_j$ is not straightforward to determine (since $\operatorname{sgn}$ is a non-linear map), $U_j$ is easily computable.
If $U_j=U_{j+1}$ for some $j$, then $U_j$ is a Jordan algebra, and so $U_j=V_\infty$; therefore we get an upper bound on the number of iterations as $itr\le\lceil\log_2j\rceil$.
A trivial bound for $U_j$ to stop growing is $\frac{d(d+1)}{2}=\dim H_d(\mathbb{R})$, hence
$$itr\le \left\lceil \log_2\frac{d(d+1)}{2}\right\rceil
\le\lceil 2\log_2 d\rceil.$$

We remark that, for robust self-tested initial strategy with explicit $\varepsilon-\delta$ dependence, we can use Proposition \ref{prop:conditionbinaryrob} repeatedly to get the robustness of the final strategy. For example, some of the robust self-testing results summarized in \cite{Supic2020selftestingof} have robustness $\varepsilon=O(\sqrt{\delta})$, and so $O(C\varepsilon+\delta)=O(\sqrt{\delta})$ in Proposition \ref{prop:conditionbinaryrob}. If we take these initial strategies, we get
$$\varepsilon_\infty=O(\delta^{\frac{1}{2^{itr+1}}})=O(\delta^{\frac{1}{4d}}).$$

Summarizing Proposition \ref{prop:jordan} and Lemma \ref{lem:jordaniter}, and applying Proposition \ref{prop:2toL} to argue about \emph{many-ouput} (rather than just binary-output) measurements, we reach an easy-to-use criterion for a real measurement $\tilde{O}$ to be reachable after iterative self-testing:

\begin{theorem}
Let $\tilde{\mathcal{S}}=(\ket{\Phi_d},\{\tilde{A}_{x}\},\{\tilde{B}_{y}\})$ be a self-tested strategy using maximally entangled state and binary real projective measurements. A real projective measurement $\{\tilde{M}_\ell,\ell\in[0,L-1]\}$ can be iteratively self-tested if 
$$
\tilde{M}_\ell\in\mathcal{JA}(\{\tilde{A}_x\}) \quad  \forall\ell\in[0,L-1],
$$ 
where $\mathcal{JA}(\{\tilde{A}_x\})$ is the real Jordan algebra generated by $\{\tilde{A}_x\}$. Moreover, the number of the iterations is upper-bounded by $\lceil 2\log_2 d\rceil$.
\label{thm:jordan}
\end{theorem}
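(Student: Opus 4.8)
The plan is to assemble the machinery built earlier in this section—the binary post-hoc criterion (Proposition \ref{prop:conditionbinaryrob}), the Jordan-algebra characterization of the reachable span (Proposition \ref{prop:jordan}), the extension principle (Proposition \ref{prop:extend}), the iteration bound (Lemma \ref{lem:jordaniter}), and the binary-to-$L$-output reduction (Proposition \ref{prop:2toL})—into one iterative argument. First I would reduce the $L$-output claim to a statement about binary observables: by Proposition \ref{prop:2toL}, the measurement $\{\tilde{M}_\ell\}$ (equivalently its observable $\tilde{O}=\sum_\ell e^{i2\pi\ell/L}\tilde{M}_\ell$) is robustly self-tested as soon as every binary observable $2\tilde{M}_\ell-I$ can be self-tested by extending $\tilde{\mathcal{S}}$. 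Since each $\tilde{M}_\ell$ is a real projection, $2\tilde{M}_\ell-I$ is an orthogonal symmetric matrix, i.e.\ a nonsingular binary observable; and because $\mathcal{JA}(\{\tilde{A}_x\})$ is a unital linear space containing each $\tilde{M}_\ell$, we also have $2\tilde{M}_\ell-I\in\mathcal{JA}(\{\tilde{A}_x\})$.

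Next I would run the iteration on the binary observables. Recall that $V_0=\operatorname{span}_{\mathbb{R}}\{I,\tilde{A}_x\}$ and $V_{j}=\operatorname{span}_{\mathbb{R}}(\operatorname{sgn}(V_{j-1}))$, and that Proposition \ref{prop:conditionbinaryrob} makes every nonsingular binary observable in $\operatorname{sgn}(V_{j-1})$ post-hoc self-testable from a set of already self-tested observables spanning $V_{j-1}$ (the maximally entangled state collapses the criterion to $\tilde{O}\in\operatorname{sgn}(\operatorname{span}\{I,\cdot\})$ since $D\propto I$). Alternating the construction between Alice's and Bob's sides and invoking Proposition \ref{prop:extend} at each round, after $j$ rounds one has a robustly self-tested strategy whose observables span $V_j$. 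By Proposition \ref{prop:jordan} the increasing chain $\{V_j\}$ stabilizes at $V_\infty=\mathcal{JA}(\{\tilde{A}_x\})$, so there is a finite $J$ with $2\tilde{M}_\ell-I\in V_J=V_\infty$. As $2\tilde{M}_\ell-I$ is a nonsingular binary observable it equals $\operatorname{sgn}(2\tilde{M}_\ell-I)$ with $2\tilde{M}_\ell-I\in V_J$, hence lies in $\operatorname{sgn}(V_J)\cap GL_d(\mathbb{R})$ and is therefore post-hoc self-tested in round $J+1$. Extending the strategy by all of the $2\tilde{M}_\ell-I$ and applying Proposition \ref{prop:2toL} then yields robust self-testing of $\{\tilde{M}_\ell\}$.

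For the iteration count I would invoke Lemma \ref{lem:jordaniter}, which gives $U_{2^{j}}\subseteq V_j$, where $U_m$ is the span of Jordan products of the $\tilde{A}_x$ of length at most $m$. The chain $\{U_m\}$ is increasing inside $H_d(\mathbb{R})$, so it stabilizes at the Jordan algebra once $U_m=U_{m+1}$, which must occur for some $m\le\dim H_d(\mathbb{R})=\tfrac{d(d+1)}{2}$. Choosing $itr$ with $2^{itr}\ge\tfrac{d(d+1)}{2}$ then forces $V_{itr}=V_\infty$, giving $itr\le\big\lceil\log_2\tfrac{d(d+1)}{2}\big\rceil\le\lceil 2\log_2 d\rceil$. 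Robustness propagates through the finitely many rounds because each application of Proposition \ref{prop:conditionbinaryrob} contributes an $\varepsilon'=O(\sqrt{C\varepsilon+\delta})$ dependence, and only $\le\lceil 2\log_2 d\rceil$ such steps are composed.

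The main obstacle I anticipate is making the alternating iterative extension rigorous: at each round one must verify that the observables used as the ``initial'' set have genuinely been self-tested in the previous round and span exactly $V_{j-1}$, so that the hypothesis of Proposition \ref{prop:conditionbinaryrob} is met on the correct side, and that the last round lands $2\tilde{M}_\ell-I$ on Alice's side as required by Proposition \ref{prop:2toL}. The footnote to Proposition \ref{prop:jordan}—which drops the fixed $\{\tilde{B}_y\}$ and tracks only the growing spans $V_j$—together with the fact that a maximally entangled state guarantees $V_j\subseteq V_{j+1}$, is precisely what makes this bookkeeping go through; the remaining content of the theorem is essentially the composition of the cited results.
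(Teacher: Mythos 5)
Your proposal is correct and follows essentially the same route as the paper, which presents Theorem \ref{thm:jordan} as a summary of Proposition \ref{prop:jordan}, Lemma \ref{lem:jordaniter}, and Proposition \ref{prop:2toL}, with the iterative extension justified by Propositions \ref{prop:conditionbinaryrob} and \ref{prop:extend} exactly as you describe. Your assembly in fact makes explicit some bookkeeping (the nonsingularity of $2\tilde{M}_\ell-I$, its membership in $\operatorname{sgn}(V_J)\cap GL_d(\mathbb{R})$, and the derivation of $\lceil\log_2\tfrac{d(d+1)}{2}\rceil\le\lceil 2\log_2 d\rceil$) that the paper leaves implicit.
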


In particular, if $\mathcal{JA}(\{\tilde{A}_x\})=H_d(\mathbb{R})$, {\it i.e.} $\{\tilde{A}_x\}$ generates the whole real Jordan algebra of symmetric $d\times d$ matrices, then every $d$-dimensional measurement can be self-tested. As what we have shown in Lemma \ref{lem:fulljordan}, it is equivalent to $\{\tilde{A}_x\}$ having a trivial centralizer, which can be checked efficiently.

\section{Conclusion}
\label{sec:conclude}
We have addressed the problem of self-testing an arbitrary real projective measurement by constructing a self-testing protocol using binary measurements and maximally entangled states. Our protocol remains the same for any real projective measurement, provided that the dimension $d$ is fixed. The protocol has a $O(d^3)$-sized question set and a constant-sized answer set. 
We show that our protocol is robust to noise which is crucial from a practical standpoint.
We also introduce a novel theoretical technique of iterative self-testing. This offers a convenient method for establishing new self-tests based on pre-existing ones. Our results show that the set of self-testable observables includes the real Jordan algebra generated by the observables that we utilize for iterative self-testing.

We leave a few open questions and improvements for future work. Now that we know that all real projective measurements can be self-tested, one outstanding challenge is to enhance the efficiency --- specifically, the size and robustness of the protocols. It is known that some high-dimensional states and measurements admit constant-sized self-tests (for example \cite{Laura2021Constant,Fu_2022}, and \cite{sarkar2021self,Supic2021deviceindependent} with constant-sized question sets). Is it the case that all states and measurements can be self-tested by a constant-sized protocol?
Another open question is an explicit $(\delta, \epsilon)$ dependence in the robustness of our self-testing protocol for an arbitrary real projective measurement.
This could have applications for verifiable distributed quantum computation \cite{Reichardt2013}. 

Lastly, from a theoretical standpoint, iterative self-testing is also applicable to strategies with partially entangled states, but the underlying algebraic structure remains to be understood.

\section*{Acknowledgements.}

This work is supported by VILLUM FONDEN via the QMATH Centre of Excellence (Grant No. 10059), Villum Young Investigator grant (No. 37532), 
NSF grant DMS-1954709, and EU Horizon 2020 (Grant No 101017733, VERIqTAS and Grant No 101078107, Qinteract).

\bibliography{ref_for_arxiv}
\bibliographystyle{halpha}

\newpage

\begin{appendices}

\section{Examples and counter examples of post hoc self-testing}
\label{sec:appA}

\subsection{An analytic image of \texorpdfstring{$\operatorname{sgn}$}{sgn} in the two-dimensional case}

Although the image of $\operatorname{sgn}$ is hard to describe in general cases, we give an example where $\operatorname{sgn}(\operatorname{span}\{D^2,DA_x^{(j)}D\})$ has an analytic form. In this case the initial state is a partially entangled, $\ket{\tilde{\psi}}=\cos\gamma\ket{00}+\sin\gamma\ket{11}$ for $\gamma\in (0,\frac{\pi}{4})$, and the binary observable $\tilde{A}_0=X$ the Pauli $X$.

We show that there is a 1-parametric family of post-hoc self-tested binary observables $\operatorname{sgn}(\operatorname{span}\{D^2,DA_x^{(j)}D\})$.  Note that $D\tilde{A}_xD=\sin\gamma\cos\gamma X$ for $\tilde{A_0}=X$. Without loss of generality, suppose $\tilde{O}=\operatorname{sgn}(X+aD^2)$ for some real parameter $a$. If $|a|$ is large, then $X+aD^2$ is diagonally dominant, so $X+aD^2$ will be positive or negative definite, leading to the trivial case $\tilde{O}=\pm I$. So, to obtain a non-trivial $\tilde{O}$, $|a|$ must be bounded, and the upper bound is attained when $X+aD^2$ becomes singular:
\begin{align*}
    \det(X+aD^2)=(a\cos\gamma\sin\gamma)^2-1=0\Rightarrow a=\pm\frac{1}{\cos\gamma\sin\gamma}.
\end{align*}

When $a\in[-\frac{1}{\cos\gamma\sin\gamma},\frac{1}{\cos\gamma\sin\gamma}]$, we can calculate $\tilde{O}$ explicitly as a function of parameter $a$:
\begin{align*}
    \tilde{O}=\begin{bmatrix}
\frac{a\left(-1+2g^{2}\right)}{\sqrt{4+a^{2}\left(1-2 g^{2}\right)^{2}}} & \frac{2}{\sqrt{4+a^{2}\left(1-2 g^{2}\right)^{2}}} \\
\frac{2}{\sqrt{4+a^{2}\left(1-2 g^{2}\right)^{2}}} & \frac{a-2 a g^{2}}{\sqrt{4+a^{2}\left(1-2 g^{2}\right)^{2}}}
\end{bmatrix},
\end{align*}
where $g=\cos\gamma$. Let $\tilde{O}=r_xX+r_zZ$, then $r_x=\frac{2}{\sqrt{4+a^{2}\left(1-2 g^{2}\right)^{2}}}$, ranging from 1 to $\sin2\gamma$. Then in this case, the image of the $\operatorname{sgn}$ is $\{r_xX\pm\sqrt{1-r_x^2}Z\colon \sin(2\gamma)<r_x\le1\}$, which is an uncountable set.

We also give an explicit $\tilde{O}$ that cannot be post-hoc self-tested: let $\gamma=\arctan(1/\sqrt{2})$, $\tilde{O}=H=(X+Z)/\sqrt{2}\not\in \operatorname{sgn}(\operatorname{span}\{D^2,X\})$. Then a ``cheating'' POVM $\{\hat{M}_0,\hat{M}_1\}$ is given by
\begin{align*}
    \hat{M}_0=\begin{bmatrix}
\frac{6-\sqrt{2}}{8} & \frac{\sqrt{2}}{4}\\
\frac{\sqrt{2}}{4} & \frac{\sqrt{2}}{2}
\end{bmatrix},\hat{M}_1=\begin{bmatrix}
\frac{2+\sqrt{2}}{8} & -\frac{\sqrt{2}}{4}\\
-\frac{\sqrt{2}}{4} & 1-\frac{\sqrt{2}}{2}
\end{bmatrix}.
\end{align*}

One can check that this POVM generates the same correlation as $\tilde{O}$, but there is no local isometry connecting them. Indeed; suppose $\Phi[{I}_A\otimes\hat{M}_j\ket{\tilde{\psi}_\gamma}]={I}_A\otimes\tilde{M}_j\ket{\tilde{\psi}_{\gamma}}$ for $j=0,1$, where $\tilde{M}_j=\frac{\tilde{O}+(-1)^jI}{2}$, then we have
$$
0=\braket{\tilde{\psi}_{\gamma}|I_A\otimes\tilde{M}_0\tilde{M}_1|\tilde{\psi}_{\gamma}}=\braket{\tilde{\psi}_{\gamma}|I_A\otimes\hat{M}_0\Phi_A^\dagger\Phi_A\hat{M}_1|\tilde{\psi}_{\gamma}}=\braket{\tilde{\psi}_{\gamma}|I_A\otimes\hat{M}_0\hat{M}_1|\tilde{\psi}_{\gamma}}\neq0,
$$
a contradiction. Thus $\tilde{O}$ cannot be post-hoc self-tested based on $\ket{\tilde{\psi}}$ and $X$. 

\subsection{An obstruction to post-hoc self-testing}

Here we show that, as soon as the number of inputs is small compared to local dimensions, a post-hoc extension of a self-testing strategy is a highly non-trivial phenomenon. In particular, the main theorem is non-trivial, since self-testing does not extend to ``most'' binary observables when the local dimension is large compared to the number of inputs.

\def\R{\mathbb{R}}
\def\C{\mathbb{C}}

\begin{proposition}\label{p:noselftest}
Let $(\ket{\tilde{\psi}},\{\tilde{A}_x\}_{x=0}^{n-1},\{\tilde{B}_y\}_{y=0}^{n-1})$ be a 
$n$-input / 2-output strategy with local dimension $d$. 
Assume that $\ket{\tilde{\psi}}$ has full Schmidt rank and
the observables $\tilde{B}_y$ have a trivial centralizer in $M_d(\R)$.
\begin{enumerate}[(a)]
\item If $\tilde{B}_{-n},\tilde{B}_n\in M_d(\R)$ are distinct binary observables, 
then none of the strategies
$(\ket{\tilde{\psi}},\{ \tilde{A}_x),\{ \tilde{B}_y,\tilde{B}_{\pm n}\})$
is a local dilation of the other.
\item If either $\lfloor \frac{d^2}{4}\rfloor>n+1$
or $\ket{\psi}$ is maximally entangled and $\lfloor \frac{d^2}{4}\rfloor>n$,
then there exist distinct binary observables 
$\tilde{B}_{-n},\tilde{B}_n\in M_d(\R)$
such that the strategies
$(\ket{\tilde{\psi}},\{ \tilde{A}_x),\{ \tilde{B}_y,\tilde{B}_{\pm n}\})$ yield the same correlations,
but none of them is a local dilation of the other (by (a)).
\end{enumerate}
\end{proposition}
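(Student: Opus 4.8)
The plan is to prove part (a) by a rigidity argument, and then obtain part (b) by a dimension count that manufactures two distinct binary observables with identical correlations, at which point (a) forbids either strategy from being a local dilation of the other.

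\textbf{Part (a).} Write $\ket{\tilde\psi}=\sum_j\lambda_j\ket{jj}$ with all $\lambda_j>0$ and $D=\operatorname{diag}(\lambda_j)$, and suppose toward a contradiction that the strategy with extra observable $\tilde B_n$ is a local dilation of the one with $\tilde B_{-n}$, via a local isometry $\Phi=\Phi_A\otimes\Phi_B$ and $\ket{\operatorname{aux}}$. The $j=0$ instance of the dilation gives $\Phi\ket{\tilde\psi}=\ket{\operatorname{aux}}\otimes\ket{\tilde\psi}$, and the alignment of the shared observables gives $\Phi[I\otimes\tilde B_y\ket{\tilde\psi}]=\ket{\operatorname{aux}}\otimes(I\otimes\tilde B_y\ket{\tilde\psi})$. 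First I would convert the latter into an operator intertwining relation: expanding $(I\otimes\tilde B_y)\ket{\tilde\psi}=\sum_j\lambda_j\ket j\otimes\tilde B_y\ket j$ and comparing with $\Phi\ket{\tilde\psi}$, the linear independence of $\{\Phi_A\ket j\}_j$ (as $\Phi_A$ is an isometry) together with $\lambda_j>0$ forces $\Phi_B\tilde B_y=(I_{B'}\otimes\tilde B_y)\Phi_B$ for every $y$. This is exactly where full Schmidt rank is used. Next I would bootstrap to the whole algebra: since $\{\tilde B_y\}$ has trivial centralizer in $M_d(\mathbb R)$, it generates $M_d(\mathbb R)$ as a real associative algebra (the equivalence recorded after Lemma \ref{lem:fulljordan}), and because the intertwining relation is preserved under products and real-linear combinations, $\Phi_B O=(I_{B'}\otimes O)\Phi_B$ holds for every $O\in M_d(\mathbb R)$, in particular for the real symmetric $\tilde B_{-n}$.

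Feeding this into the dilation condition for the extra observable yields $\Phi[I\otimes\tilde B_{-n}\ket{\tilde\psi}]=\ket{\operatorname{aux}}\otimes(I\otimes\tilde B_{-n}\ket{\tilde\psi})$, whereas the definition of the dilation demands that this equal $\ket{\operatorname{aux}}\otimes(I\otimes\tilde B_n\ket{\tilde\psi})$. Cancelling $\ket{\operatorname{aux}}$ and using $(I\otimes\tilde B)\ket{\tilde\psi}=\operatorname{vec}(D\tilde B)$ with $D$ invertible gives $\tilde B_{-n}=\tilde B_n$, contradicting distinctness; swapping the roles of $\pm n$ handles the reverse direction, so neither strategy dilates the other.

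\textbf{Part (b).} I would first translate ``same correlations'' into a linear condition. The only new correlation entries are $\braket{\tilde\psi|\tilde A_x\otimes\tilde B_{\pm n}|\tilde\psi}=\tr(D\tilde A_xD\,\tilde B_{\pm n})$ together with the marginal $\tr(D^2\tilde B_{\pm n})$, so with $\Delta:=\tilde B_n-\tilde B_{-n}$ the two strategies share all correlations if and only if $\Delta$ is orthogonal, in the trace inner product, to $W:=\operatorname{span}_{\mathbb R}\{D^2,\,D\tilde A_xD:x\}\subseteq H_d(\mathbb R)$, a subspace of dimension at most $n+1$. The key geometric input is that the set $\mathcal B$ of binary observables with balanced signature (that is, $+1$-eigenspace of dimension $\lfloor d/2\rfloor$) is a compact manifold, namely the real Grassmannian, of dimension exactly $\lfloor d/2\rfloor\lceil d/2\rceil=\lfloor d^2/4\rfloor$. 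I would then consider the orthogonal projection $\pi\colon\mathcal B\to W$. When $\lfloor d^2/4\rfloor>n+1\ge\dim W$, the map $\pi$ sends a compact manifold into a strictly lower-dimensional Euclidean space, so it cannot be injective: an injective continuous map from a compact space would be a homeomorphism onto its image, forcing the topological dimension $\lfloor d^2/4\rfloor=\dim\mathcal B$ to be at most $\dim W$, a contradiction. Hence there exist distinct $\tilde B_{-n},\tilde B_n\in\mathcal B$ with $\Delta\perp W$, as required, and part (a) then yields the non-dilation conclusion. In the maximally entangled case $D=\tfrac1{\sqrt d}I$ the marginal constraint reads $\tr(D^2\Delta)\propto\tr(\Delta)=0$, which is automatic for two observables of equal signature; it then suffices to project onto $\operatorname{span}\{\tilde A_x\}$ of dimension at most $n$, which is precisely why the weaker bound $\lfloor d^2/4\rfloor>n$ is enough there.

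The main obstacle I anticipate is the geometric core of (b): identifying $\lfloor d^2/4\rfloor$ as the exact dimension of the balanced Grassmannian and packaging the non-injectivity of $\pi$ through a clean topological-dimension argument, while separately verifying that in the maximally entangled case the marginal constraint is free. In (a) the only delicate point is the passage from intertwining on the generators $\tilde B_y$ to all of $M_d(\mathbb R)$, which rests entirely on the trivial-centralizer hypothesis.
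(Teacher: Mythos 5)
Your proposal is correct in both parts, but each part runs along a genuinely different route from the paper's. In part (a), you derive the exact operator intertwining relation $\Phi_B\tilde B_y=(I_{B'}\otimes\tilde B_y)\Phi_B$ for Bob's isometry itself: rewriting the right-hand side of the dilation identity via $\Phi\ket{\tilde\psi}=\ket{\operatorname{aux}}\otimes\ket{\tilde\psi}$ and comparing coefficients against the orthonormal vectors $\Phi_A\ket{j}$ (full Schmidt rank lets you divide by $\lambda_j>0$) indeed yields this identity, and it propagates through real-linear combinations and products, hence through the unital associative algebra generated by $\{\tilde B_y\}$, which is all of $M_d(\mathbb{R})$ by the trivial-centralizer hypothesis (the equivalence recorded after Lemma \ref{lem:fulljordan}); applying it to $\tilde B_{-n}$ then forces $\tilde B_{-n}=\tilde B_n$. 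The paper instead compresses $\Phi_B$ to a square matrix $C=\pi\Phi_B$ using the leading Schmidt component of $\ket{\operatorname{aux}}$, obtains $C\tilde B_yC^\dagger=\tilde B_y$, and must then show the contraction $C$ is invertible (no common kernel, by generation), in fact unitary (a reciprocal-eigenvalue argument), and finally scalar (trivial centralizer). Your route avoids the contraction-to-unitary step and is arguably cleaner; the paper's works with one fixed matrix $C$ and never needs to extend a relation over the whole algebra. In part (b), both you and the paper exploit the same set, the balanced-signature binary observables of dimension $\lfloor d^2/4\rfloor$, and both reduce ``equal correlations'' to a linear condition of codimension at most $n+1$ (or $n$ when $D\propto I$, where the trace constraint is automatic on a fixed-signature component, as you observe). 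The difference is how non-injectivity of the correlation map is certified: you identify the set with the compact Grassmannian $\operatorname{Gr}(\lfloor d/2\rfloor,d)$ and invoke invariance of topological dimension (an injective continuous map on a compact space is a homeomorphism onto its image), which is elementary point-set topology; the paper treats it as an irreducible semialgebraic set and applies the generic-fiber-dimension theorem, which is heavier machinery but yields a bit more, namely positive-dimensional fibers and hence infinitely many colliding pairs rather than just one.
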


\begin{proof}
(a) Suppose $(\ket{\tilde{\psi}},\{\tilde{A}_x),\{\tilde{B}_y,\tilde{B}_{-n}\})$
is a local dilation of
$(\ket{\tilde{\psi}},\{\tilde{A}_x),\{\tilde{B}_y,\tilde{B}_n\})$.
Thus there are isometries $\Phi_A$,$\Phi_B$ and an auxiliary state $\ket{\operatorname{aux}}=\sum_i\sigma_i \ket{ii} \in \R^{d'}$ with $\sigma_1>0$ such that
$\Phi_A\otimes \Phi_B \ket{\tilde{\psi}} = \ket{\operatorname{aux}}\otimes\ket{\tilde{\psi}}$ 
and
$$(\Phi_A\otimes \Phi_B)(\tilde{A}_x\otimes \tilde{B}_y)\ket{\tilde{\psi}} = 
\ket{\operatorname{aux}}\otimes ((\tilde{A}_x\otimes \tilde{B}_{|y|})\ket{\tilde{\psi}}).$$
Then
$$(I\otimes \Phi_B \tilde{B}_y\Phi_B^\dagger)
(\ket{\operatorname{aux}}\otimes\ket{\tilde{\psi}})
=(I\otimes I\otimes \tilde{B}_{|y|})
(\ket{\operatorname{aux}}\otimes\ket{\tilde{\psi}})$$
for all $y\in\{0,\dots,n-1,-n\}$.
Let $\pi:\R^d\otimes \R^{d'}\to \R^d$ be the projection induced by the projection $\R^{d'}\to\R$ onto the first component.
Then
$$(I\otimes \pi\Phi_B \tilde{B}_y\Phi_B^\dagger\pi^\dagger)\ket{\tilde{\psi}}
=(I\otimes \tilde{B}_{|y|})\ket{\tilde{\psi}}$$
Since $\ket{\tilde{\psi}}$ has full Schmidt rank,
$$\pi\Phi_B \tilde{B}_y\Phi_B^{\dagger}\pi^\dagger=\tilde{B}_{|y|}$$
for all $y\in\{0,\dots,n-1,-n\}$.
Let $C=\pi\Phi_B\in\R^{d\times d}$. Note that $C$ is a contraction.
Since $C \tilde{B}_yC^\dagger = \tilde{B}_y$ for all $y\ge0$
and $\tilde{B}_y$ generate the whole $M_d(\R)$ as an $\R$-algebra, it follows that $C$ is invertible (otherwise $\tilde{B}_y$ would have a common kernel). 
Furthermore, if $\lambda$ is an eigenvalue of $C^\dagger$, then $C^\dagger v=\lambda v$ implies
$C (\tilde{B}_yv)=\frac{1}{\lambda} \tilde{B}_yv$ for all $y\ge0$.
Since at least one of $\tilde{B}_yv$ is nonzero if $v\neq0$, it follows that $\frac{1}{\lambda}$ is an eigenvalue of $C$. Since $C,C^\dagger$ are both contractions, we conclude that $C$ is unitary. Therefore
$$C\tilde{B}_y=\tilde{B}_{|y|}C$$
for all $y\in\{0,\dots,n-1,-n\}$.
Since $\tilde{B}_y$ have a trivial centralizer (also in $M_d(\C)$), it follows that $C$ is a scalar multiple of identity. Therefore $\tilde{B}_{-n}=\tilde{B}_n$, a contradiction.

(b) The real algebraic set of binary observables in $M_n(\R)$ has an irreducible component $Z$ of dimension $\lfloor\frac{n^2}{4}\rfloor$ (concretely, $Z$ is the set of binary observables with $\lfloor{\frac{n}{2}\rfloor}$ positive eigenvalues). Consider the map
\begin{equation}\label{e:rag}
Z\to \R^{d+1},\qquad U\mapsto 
(\bra{\psi} \tilde{A}_x\otimes U\ket{\psi},\bra{\psi} I\otimes U\ket{\psi});
\end{equation}
if $\ket{\tilde{\psi}}$ is maximally entangled, one can discard the last component 
$\bra{\psi} I\otimes U\ket{\psi}) = \frac{1}{\sqrt{n}}\tr(U)$
because it is constant on $Z$. Then \eqref{e:rag} is a linear map between semialgebraic sets, so its generic fiber has dimension at least $\lfloor\frac{d^2}{4}\rfloor-n-1>0$ (or $\lfloor\frac{d^2}{4}\rfloor-n>0$ in the maximally entangled case). Therefore there exist distinct $\tilde{B}_{-n},\tilde{B}_n\in Z$ such that 
$$
\bra{\tilde{\psi}} I\otimes \tilde{B}_{-n}\ket{\tilde{\psi}}
=\bra{\tilde{\psi}} I\otimes \tilde{B}_n\ket{\tilde{\psi}},\quad
\bra{\tilde{\psi}}  \tilde{A}_x\otimes \tilde{B}_{-n}\ket{\tilde{\psi}}
=\bra{\tilde{\psi}}  \tilde{A}_x\otimes \tilde{B}_n\ket{\tilde{\psi}}
$$
holds for all $x$.
\end{proof}

If $\ket{ \tilde{\psi}}$ is maximally entangled and $\tilde{A}_y=\tilde{B}_y$, then we know that after sufficiently many post-hoc steps, all binary observables are self-tested (under the given condition on $\tilde{B}_y$). Proposition \ref{p:noselftest}(b) guarantees that this cannot always happen immediately after the first step if number of inputs $n$ is sufficiently smaller than the local dimension $d$; in the case of our preferred strategy with $d+1$ inputs in Section \ref{sec:d+1}, Proposition guarantees ``bad'' binary observables for $d\ge 5$. However, they already exist for $d=3$:

\begin{example}\label{ex:bad3}
Let $\tilde{A}_0,\dots,\tilde{A}_3 \in M_3(\R)$ be the binary observables as in Section 5, and let $\ket{\Phi_3}\in\R^3\otimes\R^3$ be the maximally entangled state (in its Schmidt basis).
Then
$\tilde{S}=
(\ket{\Phi_3},
\{\tilde{A}_x\}_x,\{\tilde{A}_x\}_x\}$
is self-tested by its correlation, and
$\{\tilde{A}_x\}_x$ has trivial centralizer in $M_3(\R)$.
A direct calculation shows that
$$
\tilde{A}_{\pm 4}=\begin{pmatrix}
 0 & -\tfrac{1}{\sqrt{2}} & \pm\tfrac{1}{\sqrt{2}} \\
 -\tfrac{1}{\sqrt{2}} & -\tfrac{1}{2} & \mp\tfrac{1}{2} \\
 \pm\tfrac{1}{\sqrt{2}} & \mp\tfrac{1}{2} & -\tfrac{1}{2} 
\end{pmatrix}$$
are binary observables with one positive eigenvalue, and
$$\bra{\Phi_3}\tilde{A}_x\otimes A_{-4}\ket{\Phi_3}
=\bra{\Phi_3}\tilde{A}_x\otimes A_4\ket{\Phi_3}
\qquad \text{for }x=0,\dots,3.
$$
Therefore the strategies 
$(\ket{\Phi_3},\{ \tilde{A}_x),\{ \tilde{A}_x,A_{\pm 4}\})$
give the same correlation but are not local dilations of each other by Proposition \ref{p:noselftest}(a), so they are not self-tested.
\end{example}

\section{Recipe for the robust self-tested strategy}
\label{sec:appB}
We first show how to construct $d+1$ unit vectors $v_0,\dots,v_{d}\in\mathbb{R}^{d}$ that form the vertices of a regular $d+1$-simplex centered at the origin. This can be guaranteed by $\braket{v_j,v_k}=-1/d$ for all $j\neq k$. To find vectors satisfying this property, consider any unitary $U$ in $\mathbb{R}^{d+1}$ whose first row is the `all one' unit vector $a=(1,1,...,1)/\sqrt{d+1}$. Then, apply $U$ to $d+1$ vectors $\{f_j\}$ where $f_j$ is the normalization of $f'_j=e_j-\braket{a,e_j}a$, and $\{e_j\}_{j=0}^{d}$ are base vectors. We have that all $Uf_j$ are orthogonal to $e_0$. So $\{Uf_j\}_{j=0}^{d}$ spans a $d$-dimensional subspace. We can also show that $\braket{Uf_j,Uf_k}=\braket{f_j,f_k}=-1/d$. So we take $v_x=Uf_x$, $\tilde{P}_x=v_xv_x^\dagger$, and $\tilde{T}_x=2\tilde{P}_x-I$.

\begin{verbatim}
(*local dimension*)
d = 4;
(*find the unitary*)
allone = Normalize[ConstantArray[1, d + 1]];
unitary = ConstantArray[0, {d + 1, d + 1}];
unitary[[1, All]] = allone;
unitary[[2 ;; d + 1, All]] = 
  Table[UnitVector[d + 1, i], {i, 2, d + 1}];
unitary = Orthogonalize[unitary];
(*d+1 vectors*)
vect[x_] := (unitary . 
      Normalize[(UnitVector[d + 1, x] - 
         Projection[UnitVector[d + 1, x], allone])])[[2 ;; d + 1]] // 
   FullSimplify;
(*d+1 projections*)
proj[x_] := Transpose[{vect[x]}] . {vect[x]} // FullSimplify;
(*d+1 binary observables*)
obs[x_] := 2 proj[x] - IdentityMatrix[d];
jordanproduct[x_, y_] := (x . y + y . x)/2;
sgn[x_] := 
  JordanDecomposition[x][[1]] . 
    RealSign[JordanDecomposition[x][[2]]] . 
    Inverse[JordanDecomposition[x][[1]]] // FullSimplify;
(* alternative sgn map *)
(* sgn[x_] := Inverse[x].MatrixPower[x.x,1/2] *)
\end{verbatim}

Based on this one can calculate $\tilde{T}_{jk}=\operatorname{sgn}(\tilde{T}_j+\tilde{T}_k)$, or alternatively 
$\tilde{T}_{jk}=2w_{jk}w_{jk}^\dagger-I$, where $w_{jk}:=\sqrt{\frac{d}{2(d+1)}}(v_j-v_k)$.

\begin{verbatim}
obs2[x_, y_] := sgn[obs[x] + obs[y]];
(*alternative O_{jk} operator*)
(*obs2[x_,y_]:=d/(d+1) \
Transpose[{vect[x]-vect[y]}].{vect[x]-vect[y]}//FullSimplify;*)
\end{verbatim}

\end{appendices}

\end{document}